\documentclass[12pt]{article}

\usepackage[margin=1.25in]{geometry}
\usepackage{amsmath,amssymb,amsthm,mathtools,mathrsfs}
\usepackage{enumitem}
\usepackage{natbib}
\usepackage[colorlinks=true,linkcolor=blue,citecolor=blue,urlcolor=blue]{hyperref}
\usepackage{setspace}
\usepackage{graphicx}
\usepackage{booktabs}
\usepackage[T1]{fontenc}
\usepackage{comment}

\numberwithin{equation}{section}
\newtheorem{theorem}{Theorem}[section]
\newtheorem{proposition}[theorem]{Proposition}
\newtheorem{lemma}[theorem]{Lemma}
\newtheorem{corollary}[theorem]{Corollary}
\newtheorem{definition}[theorem]{Definition}
\newtheorem{assumption}[theorem]{Assumption}

\newcommand{\E}{\mathbb{E}}
\newcommand{\R}{\mathbb{R}}
\newcommand{\Var}{\operatorname{Var}}
\newcommand{\Cov}{\operatorname{Cov}}

\newcommand{\Int}{\operatorname{int}}

\title{\vspace{-1.5cm}\textbf{Private Languages}}
\author{Jeremy Bertomeu\thanks{Olin Business School, Washington University in St.\ Louis, One Brookings Drive, St.\ Louis, MO 63130. Email: \texttt{bjeremy@wustl.edu}.}}
\date{}

\begin{document}

\maketitle

\begin{abstract}
\noindent
Strategic communication often relies on anchors observed by the sender but not by the receiver. An analyst may report against a proprietary valuation model, an auditor against an internal score, a manager against an accounting estimate, or an institution against its own standard. I study a sender-receiver game in which reports are costly to move away from such privately observed anchors. Anchor heterogeneity changes the geometry of communication. Rather than relying on partitions, privately anchored reporting generates continuous variation in messages because different senders find different reports costly to make. This mechanism can improve information transmission, but it can also pull reports toward noisy private anchors.
I show that (i) small positive reporting costs can make communication approach full revelation, even though zero costs return the model to cheap talk, (ii) uninformative anchors can transmit information through strategic distortions. Anchored reports and cheap-talk messages can coexist as endogenous hard and soft information, but cheap-talk alone is preferred by all parties under sufficiently low misalignment, explaining why organizations may rely exclusively on informal channels. \end{abstract}

\noindent \textbf{Keywords:} private languages, strategic communication, lying costs, language inflation, compliance and informativeness.

\noindent \textbf{JEL Codes:} D82, D83, C72.

\newpage

\section{Introduction}

A financial analyst may be asked to justify a price target with a proprietary valuation model.  An auditor may assign a risk rating using an internal checklist.  A manager may report a number that is disciplined by an accounting estimate. A student receives a grade in a course. In each case, the receiver observes a public report, while the anchor used to judge whether the report is faithful is private to the sender.  Verification is therefore relative to the anchor: the report is not simply compared with the objective state, but with a model, score, estimate, or internal standard that the sender privately observes and may understand better than the receiver.

This distinction has implications for strategic communication. A sender may disagree with the receiver about the relevant representation of the state, or may face an ethical or institutional cost of departing from an internal standard rather than from the objective state itself. The same objective state can then generate different locally cheap reports across senders. In some settings, the sender may also have influenced the anchor before reporting, as when a forecaster chooses a model specification,  or an auditor selects inputs that determine a score. The paper takes the anchor as privately observed at the reporting stage and examines how the private language affects equilibrium communication.

Various practical design question result from these considerations. Should institutions tie reports to a standardized anchor, such as accounting standards, audit checklists, bank stress tests, or structured grading rubrics, or leave the channel to qualitative discretion without a fixed reference? Schools sometimes adopt grade non-disclosure to push communication toward interviews and recommendations rather than a public quantitative grade \citep{FloydTomarLee2024}, and standard setters periodically debate whether to tighten or relax mandated reporting categories. The model below yields a simple answer: both the cost to impose to enforce the anchor and which arrangement is preferred depends on how aligned the sender and receiver are.

The model is a sender-receiver game in which the sender observes both the payoff-relevant state and a private anchor. The receiver observes a report and chooses an action. The sender cares about the receiver's action, but also bears a direct cost when the report departs from the anchor. The anchor is the report the sender would make in the absence of strategic incentives. It may be informative about the state, uninformative, or only partially connected to the state. This formulation differs both from cheap talk, where all messages are costless, and from deterministic lying-cost models, where the anchor for truthful reporting is the state or another commonly known object.

The first analytical step is geometric. When the receiver's action varies continuously with the report, the sender's optimal report varies continuously with the private anchor, so a fixed state generates a range of reports rather than a single one or a finite set of messages. Cross-sectional variation in anchors fills the report space in a way that deterministic lying-cost models do not: pooling can still arise at the boundary of a compact message space, but the interior communication region has no unused gaps. Specializing to additive anchors and translation-invariant reporting costs turns this geometric statement into a tractable condition. The sender's first-order condition links each report to the anchor realization that makes it optimal, and Bayes' rule then determines the receiver's action from the distribution of states and anchors consistent with that report. With this characterization in hand, we prove three economic insights.

First, there is a discontinuity at zero reporting costs. At exactly zero cost, the model collapses to cheap talk and informative communication takes the form of partitions, while at any positive cost an equilibrium exists along which the receiver's posterior approaches full revelation. Such convergence is not free however. The sender uses increasingly inflated reports, which,
by increasing the span of the effective messages as a function of the state, filters the noise from the anchor.
The expected reporting cost converges to a cumulative incentive term equal to the sender's marginal gain from moving the receiver's action. When the state space is unbounded, the resulting cost to the sender of message inflation diverges, even though the report is asymptotically fully informative.

Second,  even uninformative anchors can transmit information through strategic distortions alone. Different states have different marginal gains from moving the receiver and therefore distort by different amounts from the same anchor. Further, an explicit design of the anchor cost and noise drives the receiver's posterior variance toward zero.
The sender's 
expected welfare may fall below babbling because the communication is achieved from message inflation.

Third, quantitative and qualitative reporting channels can coexist in the same equilibrium. Treating a qualitative label as cheap talk and an anchored number as the within-label refinement, the labeled report drives residual uncertainty to zero as alignment becomes perfect, while pure anchored communication retains a fixed loss from anchor noise. At moderate misalignment, this hybrid communication strictly dominates either pure cheap talk or pure anchors. At low misalignment, by contrast, pure cheap talk is strictly preferred to the hybrid, because the cheap-talk partition is already fine and the anchor-induced cutoff shifts coarsen it more than the within-label refinement helps.

The analysis contributes to the literature on strategic communication with cheap talk and lying costs.  In \citet{CrawfordSobel1982}, costless messages transmit information through partitions.  \citet{Kartik2009} shows that lying costs can support more informative equilibria, including equilibria with bounded messages featuring low-type separation and high-type pooling. \citet{KartikOttavianiSquintani2007} study costly talk when receivers may be credulous.  \citet{GuttmanKadanKandel06} develop an accounting-reporting model in which a translation-invariant cost of departing from a fixed reference generates kinks at salient reporting thresholds. \citet{cas26} analyzes bounded penalties and shows that this environment can imply pooling over low-types and, under full support, requires binary messages.
In the standard lying-cost setup, the report's cost reference is the state itself, so reporting costs are deterministic given $\theta$. The present setting adds an idiosyncratic anchor, distinct from $\theta$, that shifts the local cost of reports. This generates cross-sectional variation in reporting costs conditional on a fixed state and produces interior continuity rather than the boundary kinks typical of fixed-reference models.

The contribution also links to equilibrium selection in cheap talk. \citet{ChenKartikSobel2008} and \citet{Chen2011} refine cheap-talk communication through perturbations of preferences or information. \citet{LipnowskiRavid2020} characterize cheap-talk equilibrium payoffs under transparent state-independent sender motives. In this model, selection comes from a privately observed anchor that changes the sender's reporting costs.  Multidimensional communication models, including \citet{Battaglini2002}, \citet{LevyRazin2007}, and \citet{ChakrabortyHarbaugh2010}, show that additional dimensions can alter credible communication.  \citet{FrankelKartik2019} study muddled information generated by a sender's privately observed ability to game the signal.  

Finally, the accounting and disclosure literature provides a natural applied motivation.  \citet{FischerVerrecchia2000} and \citet{DyeSridhar2004,DyeSridhar2007} study reporting systems in which noisy accounting signals and managerial reports interact.  However, the channel studied here is different.  Internal anchors determine which public reports are easy to defend, so stronger compliance with the anchor can transmit anchor noise rather than objective information.  This mechanism also applies to institutional templates, scorecards, ratings, and forecasts, where formal verification and informal persuasion coexist \citep{Dessein2002,MorganStocken2003,OttavianiSorensen2006}.

\section{The Model}\label{sec:model}

\subsection{Environment and preferences}

There are two players, a sender and a receiver. The payoff-relevant state is $\theta\in\Theta=[\underline\theta,\overline\theta]\subset\R$, where $\Theta$ is compact and the prior density $f$ is continuous and strictly positive. The sender observes $\theta$ and a private anchor $b\in\R$. The anchor is an internal reporting reference: an accounting estimate, a verbal label, an internal model output, a checklist score, or any private signal that affects which reports are directly costly. The relation between states and anchors is the sender's private language. Because this language need not coincide with the objective state space, the anchor may differ from the state even when the sender is not attempting to affect the action. Conditional on $\theta$, the anchor has density $q(\cdot\mid\theta)$ supported on an interval $B(\theta)\subseteq\R$.

The sender chooses a public report $r\in M$, where the message space $M$ is either $\R$ or a compact interval. After observing $r$, the receiver chooses an action $a\in\R$. The receiver utility is $U^R(a,\theta)$, and the sender utility is $U^S(a,\theta)-D(r,b)$. The first term captures the sender's motive to influence the receiver's action. The second term is the direct cost of reporting away from the sender's private anchor. I refer to $D$ as a reporting cost. Depending on the application, this cost can represent lying costs, ethical costs, institutional discipline, legal exposure, or verification costs measured relative to a private anchor.

\begin{assumption}\label{ass:receiver-main}\label{ass:sender-main}
The receiver payoff $U^R$ is $C^4$, strictly concave in $a$, and satisfies $U^R_{11}(a,\theta)\le -\eta<0$ and $U^R_{12}(a,\theta)\ge \underline\gamma>0$ on the relevant domain.  For each $\theta$, the equation $U^R_1(a,\theta)=0$ has a unique solution $a^R(\theta)$.  The sender action payoff $U^S$ is $C^4$ on $A\times\Theta$, where $A=[a^R(\underline\theta),a^R(\overline\theta)]$, and satisfies $U^S_{11}(a,\theta)\le-\eta_S<0$ and $U^S_{12}(a,\theta)\ge\underline\gamma_S>0$ on $A\times\Theta$.
\end{assumption}

Strict concavity yields a unique receiver best reply for every posterior.  The cross-partial conditions imply that, for both players, higher states make higher actions more desirable.  If $U^S(\cdot,\theta)$ has a unique maximizer, denote $a^S(\theta)$ as the sender's ideal action and $\delta(\theta)=a^S(\theta)-a^R(\theta)$ as the bias.  
\begin{assumption}\label{ass:cost-main}
The cost $D$ is $C^4$ and satisfies (i) $D_{11}(r,b)>0$ for all $(r,b)$, (ii) $D_{12}(r,b)<0$ for all $(r,b)$, (iii) for each $b$, the function $r\mapsto D(r,b)$ is uniquely minimized at $r=b$, (iv) for every $L<\infty$ there exists $K(L)<\infty$ such that $|r-b|\ge K(L)$ implies $D(r,b)-D(b,b)\ge L$ uniformly in $b$, and (v) for every $K<\infty$, $\inf_{|r-b|\le K}D_{11}(r,b)>0$.
\end{assumption}

Condition (i) implies a unique cheapest report for each anchor.  Condition (ii) is the Spence-Mirrlees single-crossing property, in the sense that a higher anchor makes higher reports relatively cheaper.\footnote{The additive-anchor formulation and the translation-invariant cost $c\phi(r-b)$ used below are convenient but not essential. The no-holes result of Section~\ref{sec:noholes} extends to any private shifter $z$ whose marginal effect on the sender's report problem satisfies single-crossing in $(r,z)$, whether $z$ enters through the cost or through the strategic payoff. Appendix~\ref{app:supp-shifters} formulates the general statement and notes that it covers, among other cases, the \citet{FischerVerrecchia2000} model, in which the private variable is a manager's taste for price and the cost depends on the state rather than on a separate anchor.}  Condition (iii) states that absent strategic motives, the sender reports the anchor.  Condition (iv) prevents unbounded distortions from being optimal when the strategic gain is bounded.  Condition (v) imposes a uniform curvature near the anchor.  Among other examples, the translation-invariant specification $D(r,b)=c\phi(r-b)$ with $\phi''>0$ (\citealt{GuttmanKadanKandel06}) satisfies these conditions and the bounded-claim environment in \citet{Kartik2009} can be recovered when the anchor is deterministic and $M$ is compact.

\subsection{Strategy and regular equilibrium}

A pure-strategy perfect Bayesian equilibrium consists of a measurable report rule $R:\Theta\times\R\to M$, a belief system $\mu(\cdot\mid r)\in\Delta(\Theta)$ for each report, and an action rule $a:M\to\R$ such that the sender optimizes given $a(\cdot)$, the receiver optimizes given beliefs, and beliefs are updated by Bayes' rule wherever possible.

The analysis is restricted to equilibria in which the receiver's response varies continuously and monotonically with the report, so that nearby reports have nearby consequences and the sender's first-order condition is informative.

\begin{definition}\label{def:regular-main}
A pure-strategy equilibrium $(R,a)$ is regular if (i) $a:M\to A$ is continuous and strictly increasing, with finitely many kink points and $a\in C^3$ away from those points, (ii) for every $(\theta,b)$, the sender's report problem has a unique maximizer $R(\theta,b)$, and (iii) whenever $R(\theta,b)$ lies in the interior of a smooth region of $a$, the sender's second-order condition is strict.
\end{definition}

A smooth region of $a$ is a connected component of $M$ after removing the finite kink set. Continuity is assumed with respect to all messages, including potentially off-equilibrium ones, so the definition does not explicitly rule out gaps in messages on the equilibrium path. Regular equilibria are a natural selection because they imply languages in which small changes in messages do not produce extreme changes in inference, and so are less susceptible to small errors in the receiver's processing of the message. In what follows, I assume that regular equilibria exist and defer conditions sufficient for existence to a supplementary appendix.

\section{The Geometry of Communication}\label{sec:noholes}

\subsection{Continuity and interval images}

Private anchors affect which reports can arise from a fixed state.  Since the state space is compact, any receiver best reply to a posterior supported on $\Theta$ lies in $A=[a^R(\underline\theta),a^R(\overline\theta)]$.  This bounded action range caps the sender's strategic gain from moving the receiver.  When the anchor is feasible, the sender can always report the anchor at minimum direct cost; on an unbounded message space, this fallback implies a uniform bound on equilibrium distortions away from the anchor.  

\begin{proposition}\label{prop:bounded-main}
Under Assumptions~\ref{ass:receiver-main}--\ref{ass:cost-main}, in any regular equilibrium and for each fixed $\theta$, the map $b\mapsto R(\theta,b)$ is nondecreasing and continuous on $B(\theta)$, and is strictly increasing on the set of anchors whose optimal reports lie in the interior of a smooth region of $a$.
\end{proposition}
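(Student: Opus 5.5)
Fix $\theta$ and write the sender's report problem as maximizing $V(r,b)=U^S(a(r),\theta)-D(r,b)$ over $r\in M$. The plan is a monotone comparative statics argument, then a maximum-theorem argument for continuity, then a first-order-condition argument for strict monotonicity.

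\emph{Monotonicity.} The objective $V(r,b)$ has strictly increasing differences in $(r,b)$. Indeed, for $r'>r$ and $b'>b$,
\[
V(r',b')-V(r,b')-\bigl[V(r',b)-V(r,b)\bigr]=-\int_b^{b'}\!\int_r^{r'}D_{12}(s,t)\,ds\,dt>0
\]
by Assumption~\ref{ass:cost-main}(ii). The action term does not depend on $b$, so it drops out of this difference. Since the maximizer is unique by Definition~\ref{def:regular-main}(ii), the standard Topkis/revealed-preference argument applies. Add the optimality inequalities $V(R(\theta,b),b)\ge V(R(\theta,b'),b)$ and $V(R(\theta,b'),b')\ge V(R(\theta,b),b')$. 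If $R(\theta,b)>R(\theta,b')$, this would contradict strictly increasing differences. Hence $b\mapsto R(\theta,b)$ is nondecreasing.

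\emph{Continuity.} I would invoke Berge's maximum theorem. $V$ is jointly continuous, because $a$ is continuous and $U^S$ and $D$ are $C^4$. The maximizer is unique, so the argmax correspondence is a continuous function, provided the feasible set can be taken compact locally uniformly in $b$. When $M$ is compact this is immediate. When $M=\R$, I use the bound sketched before the statement. Since $a(\cdot)\in A$ is compact, $U^S(a(r),\theta)$ varies by at most $L:=\max_{A\times\Theta}U^S-\min_{A\times\Theta}U^S<\infty$. Reporting $r=b$ is feasible and costs $D(b,b)$. Any $r$ with $|r-b|\ge K(L+1)$ therefore does strictly worse than the anchor, by Assumption~\ref{ass:cost-main}(iv). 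So on a neighborhood of any $b_0$, the problem can be restricted to the compact set $[b_0-1-K(L+1),\,b_0+1+K(L+1)]\cap M$ without changing the maximizer, and Berge applies. I expect this step to be the main obstacle, since it is the only place where the unbounded message space must be handled. The key point is that the bound $K$ is uniform in $b$.

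\emph{Strict monotonicity.} Suppose $R(\theta,b)$ lies in the interior of a smooth region of $a$ for some $b$ in an interval. The first-order condition there is
\[
G(r,b):=U^S_1(a(r),\theta)a'(r)-D_1(r,b)=0,
\]
and $G_r<0$ by the strict second-order condition in Definition~\ref{def:regular-main}(iii). Also $G_b=-D_{12}>0$. The implicit function theorem then gives $\partial R/\partial b=-G_b/G_r>0$ locally, using that the maximizer is unique and, by the previous step, continuous in $b$, so it coincides with the implicit-function branch. For a more elementary version, suppose $b<b'$ both lie in this set and $R(\theta,b)=R(\theta,b')=r$. The FOC would hold at $r$ for both anchors. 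But $D_1(r,b)>D_1(r,b')$ by $D_{12}<0$, which is a contradiction. Combined with weak monotonicity, this yields strict increase on that set.
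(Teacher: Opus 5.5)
Your proof is correct and follows the paper's argument essentially step for step. You use strictly increasing differences from $D_{12}<0$ plus Topkis for monotonicity, Berge's theorem with the uniform distortion bound from Assumption~\ref{ass:cost-main}(iv) for continuity when $M=\R$ (the anchor serves as a fallback against the bounded strategic gain on $A$), and the implicit-function theorem with the strict second-order condition for strict monotonicity. Your extra elementary argument for strict monotonicity is a small improvement, since it does not need the second-order condition: two anchors sharing an interior optimal report would violate the first-order condition, because $D_1(r,\cdot)$ is strictly decreasing.
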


Higher anchors cannot lead the sender to choose lower reports because, by single-crossing in the cost, an increase in the anchor makes high reports relatively cheaper than low ones. The sender's report also cannot jump as the anchor varies. On a compact message space this is immediate from continuity of the objective and uniqueness of the sender's optimum. On an unbounded message space the same conclusion holds once one notes that equilibrium reports remain within a bounded distance of the anchor on bounded sets of anchors: reporting the anchor itself costs nothing directly, while the strategic gain is capped because the receiver's action stays in the bounded interval $A$.

This continuity is the step that separates private-anchor communication from deterministic lying-cost models.  With a deterministic anchor, a fixed state is associated with a single cost-minimizing report.  With private anchors, the same state is associated with a continuum of cost-minimizing reports, indexed by the sender's privately observed anchor.  Since nearby anchors generate nearby optimal reports, the reports used at a fixed state move continuously as the private anchor varies.  
\begin{theorem}\label{thm:noholes-main}
For every $\theta\in\Theta$, the used-report set $R(\theta,B(\theta))=\{R(\theta,b):b\in B(\theta)\}$
is a connected interval.
\end{theorem}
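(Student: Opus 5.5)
The plan is to deduce the statement directly from Proposition~\ref{prop:bounded-main} together with the elementary topological fact that the continuous image of a connected set is connected. Fix $\theta\in\Theta$. By the model setup, $B(\theta)$ is an interval in $\R$, hence connected. Proposition~\ref{prop:bounded-main} gives that $b\mapsto R(\theta,b)$ is continuous on $B(\theta)$, so $R(\theta,B(\theta))$ is a connected subset of $M\subseteq\R$. The connected subsets of $\R$ are exactly the intervals, possibly degenerate, open, half-open or closed, and possibly unbounded. This gives the claim.

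For completeness, I would also note the monotone version of the argument, which makes the interval structure explicit and does not use the general topological fact. Let $b_1<b_2$ be two anchors in $B(\theta)$ and let $r$ lie strictly between $R(\theta,b_1)$ and $R(\theta,b_2)$. Since $b\mapsto R(\theta,b)$ is nondecreasing and continuous on $[b_1,b_2]\subseteq B(\theta)$, the intermediate value theorem yields $b\in[b_1,b_2]$ with $R(\theta,b)=r$. So the image contains every point between any two of its points, which is exactly convexity of a subset of $\R$. Monotonicity further says the endpoints of the image are $\lim R(\theta,b)$ as $b$ tends to the endpoints of $B(\theta)$. This is useful later for describing where the used set begins and ends, for example through boundary pooling on a compact $M$.

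The main obstacle is not this step itself but the continuity it relies on, which is already established in Proposition~\ref{prop:bounded-main}. The delicate case there is the unbounded message space. Continuity of the argmax requires the uniform bound on $|R(\theta,b)-b|$ for $b$ in bounded sets, obtained from Assumption~\ref{ass:cost-main}(iv) and the boundedness of $A$, so that a Berge maximum-theorem argument applies on a compact neighborhood. Since I take the proposition as given, the only point to check is that the theorem needs nothing beyond continuity on the interval $B(\theta)$. In particular, it does not need strict monotonicity, and possible flat stretches at kinks or at the boundary of $M$ leave connectedness intact.
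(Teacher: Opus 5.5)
Your proposal is correct and follows the paper's proof: the paper likewise takes continuity of $b\mapsto R(\theta,b)$ on the interval $B(\theta)$ from Proposition~\ref{prop:bounded-main} and concludes that the continuous image of a connected set is an interval. Your monotone/intermediate-value version is a valid, more explicit restatement of the same argument.
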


The theorem implies a ``no-holes'' property.  Since $B(\theta)$ is an interval and $b\mapsto R(\theta,b)$ is continuous, the image of $B(\theta)$ must also be an interval.  The result thus identifies the source of continuous communication in the model.  Different senders at the same state find different reports locally cheap, and these local reporting incentives fill the report space between the extremes used at that state.  Boundary pooling may still arise on compact message spaces, but interior holes are not a feature of regular equilibria.

\begin{corollary}\label{cor:fulluse-unbounded}
Suppose $M=\R$ and $B(\theta)=\R$ for every $\theta$.  Then, in any regular equilibrium, $R(\theta,\R)=\R$ for every $\theta$.
\end{corollary}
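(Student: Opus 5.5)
We need to show that when $M=\R$ and $B(\theta)=\R$, the report map is onto. By Theorem~\ref{thm:noholes-main}, $R(\theta,\R)$ is already an interval. So it suffices to show that this interval is unbounded above and below, i.e.\ that $R(\theta,b)\to\pm\infty$ as $b\to\pm\infty$. The plan is to establish a uniform bound on the distortion $|R(\theta,b)-b|$ that does not depend on $b$. Such a bound immediately forces $R(\theta,b)\ge b-K\to\infty$ and $R(\theta,b)\le b+K\to-\infty$ as $b$ ranges over $\R$.

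The uniform bound comes from the fallback argument sketched before Proposition~\ref{prop:bounded-main}.

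1. In any regular equilibrium $a(\cdot)$ takes values in the compact interval $A$. Since $U^S$ is continuous on the compact set $A\times\Theta$, the strategic gain is bounded:
\[
L:=\max_{\theta\in\Theta}\Big[\max_{a\in A}U^S(a,\theta)-\min_{a\in A}U^S(a,\theta)\Big]<\infty .
\]
2. Because $M=\R$, the anchor $b$ is a feasible report. Reporting $r=b$ yields payoff at least $\min_{a\in A}U^S(a,\theta)-D(b,b)$.
3. Any report with $D(r,b)-D(b,b)>L$ is therefore strictly dominated by reporting $b$. Take $K:=K(L+1)$ from Assumption~\ref{ass:cost-main}(iv). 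Any $r$ with $|r-b|\ge K$ satisfies $D(r,b)-D(b,b)\ge L+1>L$, so it is strictly worse than reporting $b$.
4. Since $R(\theta,b)$ is the (unique) maximizer, it follows that $|R(\theta,b)-b|<K$ for all $\theta,b$. The bound is uniform in $b$ precisely because condition (iv) is uniform in $b$.

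With the bound in hand, pick any target report $r^*\in\R$.
\begin{itemize}
\item Choosing $b_1=r^*-K$ gives $R(\theta,b_1)<b_1+K=r^*$.
\item Choosing $b_2=r^*+K$ gives $R(\theta,b_2)>b_2-K=r^*$.
\end{itemize}
Continuity of $b\mapsto R(\theta,b)$ (Proposition~\ref{prop:bounded-main}) and the intermediate value theorem then give some $b\in[b_1,b_2]$ with $R(\theta,b)=r^*$. Equivalently, one can invoke Theorem~\ref{thm:noholes-main} directly: the image is an interval containing points on both sides of $r^*$.

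The only step that needs care is the uniform distortion bound, specifically making sure that the sender's strategic gain is measured over the whole range of $a(\cdot)$, including off-path reports. This is handled by Definition~\ref{def:regular-main}(i), which requires $a:M\to A$ everywhere, and that is all the argument needs.
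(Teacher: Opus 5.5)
Your proof is correct and follows the paper's argument: the paper also gets the uniform distortion bound $|R(\theta,b)-b|\le K$ from the fallback of reporting the anchor, the bounded action range $A$, and Assumption~\ref{ass:cost-main}(iv), and then concludes that the interval $R(\theta,\R)$ from Theorem~\ref{thm:noholes-main} is unbounded in both directions. Your use of the oscillation of $U^S$ over $A$ instead of $2\sup|U^S|$, and your explicit intermediate-value step, are only cosmetic differences.
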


When anchors have full real support, the no-holes result becomes a ``full-use'' result.  The distortion bound implies that reports move to $-\infty$ as anchors move to $-\infty$ and to $\infty$ as anchors move to $\infty$.  Since the used-report set is an interval, it cannot omit any report.  Every state therefore uses the entire message line, with different reports generated by different realizations of the private anchor.

\subsection{Compact message spaces}

The compact-message case offers a relevant comparison with lying-cost models restricted to bounded message spaces, such as  \citet{KartikOttavianiSquintani2007} and \citet{Kartik2009}. In these models, the upper bound on messages can generate equilibria with separation at low reports and pooling at the top. Private anchors do not remove the pooling at the boundary, because extreme anchors may still make the sender choose an endpoint report. However, in the interior, conditional on a fixed state, a continuum of privately observed anchors generates a continuum of locally cheap reports. Thus boundary pooling can remain, but interior holes disappear.

\begin{proposition}\label{prop:boundedmessages}
Suppose $M=[\underline m,\overline m]$ and $B(\theta)=\R$ for every $\theta$. In any regular equilibrium, for every $\theta$, the state-specific used-report set $I_\theta=R(\theta,\R)$ is a connected subinterval of $M$. The global on-path report set $\mathcal U=R(\Theta\times\R)$
is also a connected interval in $M$. 
\end{proposition}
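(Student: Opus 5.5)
The first claim follows from Proposition~\ref{prop:bounded-main} and Theorem~\ref{thm:noholes-main}; the second needs an additional continuity argument in the state. Fix $\theta$. By Proposition~\ref{prop:bounded-main}, the map $b\mapsto R(\theta,b)$ is nondecreasing and continuous on $B(\theta)=\R$. The continuous image of the connected set $\R$ is connected, so $I_\theta$ is an interval; this is exactly Theorem~\ref{thm:noholes-main}. It is a subinterval of $M$ because $R$ takes values in $M$. Monotonicity also shows that the endpoints of $I_\theta$ are the limits $\lim_{b\to\pm\infty}R(\theta,b)$, possibly attained at $\underline m$ or $\overline m$. This identifies the boundary pooling, but it is not needed for connectedness.

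For the global set, I would write $\mathcal U=\bigcup_{\theta\in\Theta}I_\theta$. A union of intervals indexed by a connected set is an interval whenever the intervals vary so that consecutive ones overlap. The cleanest route is to show that $R$ is jointly continuous on $\Theta\times\R$. Then $\mathcal U$ is the continuous image of the connected set $\Theta\times\R$, hence connected, hence an interval in $M$.

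To get joint continuity, I would apply Berge's maximum theorem to the sender's problem
\[
\max_{r\in M}\; U^S(a(r),\theta)-D(r,b).
\]
The objective is jointly continuous in $(r,\theta,b)$, because $a$ is continuous on $M$ (regularity (i)), $U^S$ is $C^4$, and $D$ is $C^4$. The constraint set $M$ is compact and does not depend on the parameters. Berge then gives an upper hemicontinuous argmax correspondence in $(\theta,b)$. Regularity (ii) makes the maximizer unique, and an upper hemicontinuous single-valued correspondence is a continuous function. Hence $R$ is continuous on $\Theta\times\R$.

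The main point to check is precisely this use of uniqueness together with compactness of $M$. On $M=\R$ one would need the distortion bound to restrict attention to a compact set locally, but here compactness is given, so the step should be routine.

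As a backup that does not need joint continuity, I would fix a single $b$. The path $\theta\mapsto R(\theta,b)$ is continuous by the same Berge argument with $b$ held fixed. It therefore sweeps out an interval $J_b$ that meets every $I_\theta$. Then $\mathcal U=J_b\cup\bigcup_\theta I_\theta$ is a union of intervals, each of which intersects the interval $J_b$, and such a union is connected.
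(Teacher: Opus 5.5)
Your proposal is correct and matches the paper's proof. Part (i) is Theorem~\ref{thm:noholes-main} with $B(\theta)=\R$, and part (ii) applies Berge's theorem on the fixed compact constraint set $M$, together with the unique maximizer from regularity, to get joint continuity of $R$ on the connected set $\Theta\times\R$. Your fixed-$b$ backup argument is also valid but not needed.
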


The first part of the proposition is the state-by-state no-holes result. Since the anchor support is connected and the sender's report varies continuously with the anchor, a fixed state cannot use two separated regions of reports without also using the reports between them. The second part uses the same continuity in both arguments. With a compact message space and a unique sender optimum, the report rule is continuous in the state and the anchor. Since $\Theta\times\R$ is connected, its image under the equilibrium report rule is connected. The on-path report set may therefore fail to cover all of $M$, but any missing reports must lie at the bottom or the top of the message space, not in the interior of the on-path region. 
\begin{assumption}\label{ass:invert-main}
For every report $r$, the map $b\mapsto D_1(r,b)$ is continuous, strictly decreasing, and onto $\R$.
\end{assumption}
The message boundary may still force pooling at $\underline m$ or $\overline m$; the conclusion is stronger under the invertibility condition in Assumption~\ref{ass:invert-main}.

\begin{corollary}\label{cor:boundedmessages-surj}
Under the hypotheses of Proposition~\ref{prop:boundedmessages} and Assumption~\ref{ass:invert-main}, for any $\theta$, $(\underline m,\overline m)\subseteq I_\theta$. If, in addition, the equilibrium payoff $r\mapsto U^S(a(r),\theta)$ is Lipschitz on $M$ for each $\theta$, then $I_\theta=M$ for every $\theta$.
\end{corollary}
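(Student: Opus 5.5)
Under the hypotheses of Proposition~\ref{prop:boundedmessages}, $I_\theta$ is already known to be a connected subinterval of $M$. So for the first claim it suffices to show that $I_\theta$ contains points arbitrarily close to $\underline m$ and arbitrarily close to $\overline m$. Equivalently, we want $\inf I_\theta=\underline m$ and $\sup I_\theta=\overline m$; connectedness then gives $(\underline m,\overline m)\subseteq I_\theta$.

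The plan is to push the anchor to $\pm\infty$ and use Assumption~\ref{ass:invert-main} to show that the cost's marginal pull dominates any bounded strategic incentive. Fix $\theta$ and an interior report $r_0\in(\underline m,\overline m)$. The sender's objective is $V(r;b)=U^S(a(r),\theta)-D(r,b)$.

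We show that for $b$ large, every $r<r_0$ is strictly worse than some report $\ge r_0$, so the unique maximizer satisfies $R(\theta,b)\ge r_0$. The key quantity is $D(r_0,b)-D(r,b)=\int_r^{r_0}D_1(s,b)\,ds$. By (ii) of Assumption~\ref{ass:cost-main} and Assumption~\ref{ass:invert-main}, $D_1(s,b)$ is strictly decreasing in $b$ and tends to $-\infty$ as $b\to\infty$ for each $s$. To make this uniform on the compact set $s\in[\underline m,r_0]$, we use (i): since $D_{11}>0$, $D_1(\cdot,b)$ is increasing in $s$, so $\sup_{s\le r_0}D_1(s,b)=D_1(r_0,b)\to-\infty$. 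Hence, for every $r\in[\underline m,r_0-\epsilon]$,
\[
D(r,b)-D(r_0,b)\ge \epsilon\,|D_1(r_0,b)|\to\infty .
\]
Meanwhile $|U^S(a(r),\theta)-U^S(a(r_0),\theta)|$ is bounded by $\max_{A}U^S-\min_A U^S$, because $a(\cdot)$ takes values in the compact set $A$ and $U^S$ is continuous. So for $b$ large, $r_0$ strictly beats every $r\le r_0-\epsilon$, and $R(\theta,b)>r_0-\epsilon$. Since $r_0$ and $\epsilon$ are arbitrary, $\sup I_\theta=\overline m$. The symmetric argument with $b\to-\infty$, using $D_1(r_0,b)\to+\infty$ and $\inf_{s\ge r_0}D_1(s,b)=D_1(r_0,b)$, gives $\inf I_\theta=\underline m$.

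For the endpoints under the Lipschitz hypothesis, we show that for $b$ large enough the maximizer is exactly $\overline m$. Let $\Lambda$ be the Lipschitz constant of $g(r)=U^S(a(r),\theta)$ on $M$. For any $r<\overline m$,
\[
V(\overline m;b)-V(r;b)\ge -\Lambda(\overline m-r)-\int_r^{\overline m}D_1(s,b)\,ds\ge(\overline m-r)\bigl(-\Lambda-D_1(\overline m,b)\bigr).
\]
Choose $b$ with $D_1(\overline m,b)<-\Lambda$, which is possible by surjectivity in Assumption~\ref{ass:invert-main}. Then the right-hand side is strictly positive, so $R(\theta,b)=\overline m$. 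Symmetrically, choosing $b$ with $D_1(\underline m,b)>\Lambda$ gives $R(\theta,b)=\underline m$. Combined with the first claim, this yields $I_\theta=M$.

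The main obstacle is the endpoint step. Without Lipschitz control, $g$ could rise with unbounded slope near $\overline m$, for instance because $a(\cdot)$ becomes steep there. In that case an interior report could beat the endpoint for every finite $b$, and the endpoint might never be attained. This is exactly why the Lipschitz hypothesis is needed for the second claim but not for the first. The one nontrivial point in the interior argument is obtaining uniformity in $s$, which the monotonicity of $D_1$ in $r$ from (i) supplies.
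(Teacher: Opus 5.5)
Your proof is correct and follows essentially the same route as the paper. In both, $D_{11}>0$ makes $D_1(\cdot,b)$ increasing in $r$, and $D_1(\cdot,b)\to\mp\infty$ as $b\to\pm\infty$; this cost pull eventually dominates the bounded action-payoff difference, and the connectedness of $I_\theta$ from Proposition~\ref{prop:boundedmessages} then fills the interior. Endpoint attainment uses the identical Lipschitz comparison $D_1(\overline m,b)<-\Lambda$. The only cosmetic difference is that you compare against an interior $r_0$ and then let $r_0\uparrow\overline m$, whereas the paper compares directly against $\overline m$; nothing changes.
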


Assumption~\ref{ass:invert-main} ensures that extreme anchors have arbitrarily strong marginal effects on reporting costs. As $b$ becomes very low, reports near the lower boundary become cheaper than interior reports; as $b$ becomes very high, reports near the upper boundary become cheaper than interior reports. The report rule for each state therefore approaches both endpoints of the message space. Since the state-specific used-report set is an interval, every interior report must be used by some anchor realization.
Under a Lipschitz condition, once the anchor is sufficiently extreme, the direct reporting-cost advantage of an endpoint dominates any bounded marginal gain from moving the receiver's action, so the endpoint report is actually chosen rather than only approached as a limit.

When $M=\R$ and anchors have unbounded support, the no-holes result implies that every report can be generated at every state. Since the state space is compact, receiver actions remain in the bounded interval $A$. Hence the strategic payoff gain from changing the receiver's action is bounded. By contrast, moving far from an extreme anchor becomes increasingly costly. A sender with an extreme anchor therefore has little reason to choose a report far from that anchor.

\begin{corollary}\label{cor:reversion}
Under Assumptions~\ref{ass:receiver-main}-\ref{ass:cost-main}, suppose $M=\R$ and $a'$ is uniformly continuous on each tail. For any fixed $\theta\in\Theta$, if $(b_n)$ satisfies $|R(\theta,b_n)|\to\infty$, then $|R(\theta,b_n)-b_n|\to0$ .
\end{corollary}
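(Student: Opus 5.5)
The plan is to combine the sender's first-order condition at tail reports with two facts: the receiver's action flattens out in the tails, and distortions are uniformly bounded. Fix $\theta$ and a sequence $(b_n)$ with $|R(\theta,b_n)|\to\infty$, and write $r_n=R(\theta,b_n)$. Passing to subsequences, it suffices to treat the case $r_n\to+\infty$; the case $r_n\to-\infty$ is symmetric.

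\textbf{Step 1: uniform distortion bound.} Reporting the anchor $b_n$ is feasible because $M=\R$, and it yields at least $U^S(a(b_n),\theta)-D(b_n,b_n)$. Since $a(\cdot)\in A$ and $U^S$ is continuous on the compact set $A\times\Theta$, the strategic payoff varies by at most $L:=\max_{A\times\Theta}U^S-\min_{A\times\Theta}U^S<\infty$. Optimality of $r_n$ then gives $D(r_n,b_n)-D(b_n,b_n)\le L$. By Assumption~\ref{ass:cost-main}(iv), this implies $|r_n-b_n|<K(L+1)=:K$ for every $n$.

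\textbf{Step 2: the first-order condition holds in the tail.} Since $a$ has finitely many kinks, for large $n$ the report $r_n$ lies in the interior of the unbounded smooth region $(\bar r,\infty)$ of $a$. Because $M=\R$ has no boundary, the sender's first-order condition applies:
\[
D_1(r_n,b_n)=U^S_1(a(r_n),\theta)\,a'(r_n).
\]
The factor $U^S_1$ is bounded on $A\times\Theta$. It therefore suffices to show that $a'(r)\to0$ as $r\to\infty$.

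\textbf{Step 3: $a'(r)\to0$ in the tail (the main obstacle).} This is a Barbalat-type argument, and I expect it to be the step needing the most care. The function $a$ is strictly increasing, so $a'\ge0$ on the tail, and it is bounded within $A$. Suppose instead that $a'(s_k)\ge\varepsilon$ along some sequence $s_k\to\infty$. Uniform continuity of $a'$ on the tail gives a $\delta>0$, independent of $k$, such that $a'\ge\varepsilon/2$ on $[s_k,s_k+\delta]$. Choose the $s_k$ so that these intervals are disjoint. Since $a$ is nondecreasing, it then rises by at least $\varepsilon\delta/2$ on each of infinitely many disjoint intervals. This makes $a$ unbounded, contradicting $a\in A$. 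Hence $D_1(r_n,b_n)\to0$.

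\textbf{Step 4: converting $D_1\to0$ into $|r_n-b_n|\to0$.} By Assumption~\ref{ass:cost-main}(iii) and smoothness, $D_1(b,b)=0$ for every $b$. Assumption~\ref{ass:cost-main}(v) gives $\kappa:=\inf_{|r-b|\le K}D_{11}(r,b)>0$. Applying the mean value theorem in $r$ on the segment between $b_n$ and $r_n$, which stays within distance $K$ of $b_n$ by Step 1, yields
\[
|D_1(r_n,b_n)|\ge\kappa|r_n-b_n|.
\]
Combining this with Step 3 gives $|r_n-b_n|\to0$. Finally, since every subsequence has a further subsequence along which the conclusion holds, it holds for the full sequence.
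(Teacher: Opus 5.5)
Your proposal is correct and matches the paper's proof step for step. Both use the anchor-fallback distortion bound $|r_n-b_n|\le K$, the sender's first-order condition in the smooth tail region, $a'(r)\to0$ by a Barbalat-type argument, and the mean value theorem with the curvature floor from Assumption~\ref{ass:cost-main}(v). The only differences are that you re-derive the distortion bound instead of citing the proof of Proposition~\ref{prop:bounded-main}, and that you write out the Barbalat step using monotonicity and boundedness of $a$, where the paper simply invokes integrability of $a'$ together with uniform continuity.
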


The corollary states that extreme reports are generated by extreme anchors, not by unbounded strategic exaggeration away from moderate anchors. The compact state space bounds rationalizable receiver actions, so the sender's marginal benefit from additional distortion vanishes along the tails. The first-order condition then forces the marginal reporting cost to vanish. Since the cost is strictly convex and uniquely minimized at the anchor, the report must converge back to the anchor.

\subsection{Receiver posterior belief}

The no-holes result has a useful implication for equilibrium characterization. On an unbounded message space with full anchor support, every report in a smooth region is used at every state. Hence the receiver's posterior after a report can be computed from the density of anchor realizations that make that report optimal. The sender's first-order condition identifies this anchor realization state by state, and Bayes' rule then determines an exact equation for the receiver's action.

\begin{theorem}\label{thm:continuous-main}
Suppose Assumptions~\ref{ass:receiver-main}-\ref{ass:invert-main} hold, $M=\R$, and $B(\theta)=\R$ for every $\theta$. Let $(R,a)$ be a regular equilibrium, and let $I\subseteq\R$ be an open smooth region of $a$. For each $(r,\theta)\in I\times\Theta$, there is a unique anchor $b^\ast(r,\theta)$ satisfying $D_1(r,b^\ast(r,\theta))=U^S_1(a(r),\theta)a'(r)$.
Moreover, $b^\ast(\cdot,\theta)\in C^2(I)$ and $\partial_r b^\ast(r,\theta)>0$. At every report $r\in I$ with positive report density, the posterior density is
\begin{equation}\label{eq:posterior-main}
\pi(\theta\mid r)=
\frac{
f(\theta)q(b^\ast(r,\theta)\mid\theta)\partial_r b^\ast(r,\theta)
}{
\int_\Theta f(\vartheta)q(b^\ast(r,\vartheta)\mid\vartheta)
\partial_r b^\ast(r,\vartheta)d\vartheta
}.
\end{equation}
The receiver's action therefore satisfies
\begin{equation}\label{eq:continuous-main}
\int_\Theta U^R_1(a(r),\theta)
f(\theta)q(b^\ast(r,\theta)\mid\theta)
\partial_r b^\ast(r,\theta)d\theta=0.
\end{equation}
\end{theorem}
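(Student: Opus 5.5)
The proof proceeds in three steps: define $b^\ast$ by inverting the sender's first-order condition, establish its smoothness and monotonicity in $r$, and obtain the posterior by a change of variables from anchors to reports.

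\textbf{Step 1: existence and uniqueness of $b^\ast$.} Fix $\theta$ and $r\in I$. Since $I$ is an open smooth region, $a$ is $C^3$ there, so the right-hand side $g(r,\theta)=U^S_1(a(r),\theta)a'(r)$ is a well-defined real number. By Assumption~\ref{ass:invert-main}, $b\mapsto D_1(r,b)$ is continuous, strictly decreasing and onto $\R$. Hence there is exactly one $b^\ast(r,\theta)$ with $D_1(r,b^\ast)=g(r,\theta)$.

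\textbf{Step 2: regularity of $b^\ast$ in $r$.} Since $D\in C^4$ and $D_{12}<0$, the implicit function theorem applied to $F(r,b)=D_1(r,b)-g(r,\theta)$ gives local $C^2$ dependence on $r$. Uniqueness then patches the local solutions into a global $C^2$ function on $I$. Here $g(\cdot,\theta)$ is $C^2$ because $a\in C^3$ and $U^S\in C^4$. Differentiating yields
\[
\partial_r b^\ast=\frac{D_{11}(r,b^\ast)-g_r(r,\theta)}{-D_{12}(r,b^\ast)}.
\]
The numerator is exactly the second derivative of the sender's objective $D(r,b)-U^S(a(r),\theta)$, viewed as a cost to be minimized, evaluated at $b=b^\ast$.

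To sign it, I connect $b^\ast$ to actual equilibrium behavior. By Corollary~\ref{cor:fulluse-unbounded}, $R(\theta,\R)=\R$, so some anchor $b$ has $R(\theta,b)=r\in I$. At that interior optimum the first-order condition holds, which forces $b=b^\ast(r,\theta)$ by uniqueness. Regularity (iii) then makes the second-order condition strict, so the numerator is positive. Together with $-D_{12}>0$, this gives $\partial_r b^\ast>0$.

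This step is the main obstacle. One must ensure that every $r\in I$ is reached as a genuine optimum, not merely a critical point, and that $b^\ast(r,\theta)$ is the \emph{only} anchor mapping to $r$. The first point follows from full use plus the first-order condition. For the second, any anchor with $R(\theta,b)=r$ satisfies the first-order condition, so it equals $b^\ast$. It follows that $R(\theta,\cdot)^{-1}(I)$ is an open set on which $R(\theta,\cdot)$ is a strictly increasing $C^2$ bijection onto $I$, with inverse $b^\ast(\cdot,\theta)$.

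\textbf{Step 3: posterior and receiver action.} For a test function $h$ supported in $I$, the joint law of $(\theta,r)$ satisfies
\[
\E[h(r)\mid\theta]=\int h(R(\theta,b))q(b\mid\theta)\,db=\int_I h(r)\,q(b^\ast(r,\theta)\mid\theta)\,\partial_r b^\ast(r,\theta)\,dr,
\]
by the change of variables $b=b^\ast(r,\theta)$ from Step 2. Multiplying by $f(\theta)$ and integrating over $\Theta$ gives the joint density $f(\theta)q(b^\ast\mid\theta)\partial_r b^\ast$ of $(\theta,r)$ on $\Theta\times I$. At reports with positive marginal density, Bayes' rule yields \eqref{eq:posterior-main}.

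Finally, the receiver's payoff is strictly concave with interior optimum in $A$, so the action satisfies $\int U^R_1(a(r),\theta)\pi(\theta\mid r)\,d\theta=0$. Multiplying through by the positive normalizing constant gives \eqref{eq:continuous-main}.
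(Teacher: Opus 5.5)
Your proof is correct and follows essentially the same route as the paper. You invert the sender's first-order condition using Assumption~\ref{ass:invert-main}, obtain $C^2$ regularity from the implicit function theorem, sign $\partial_r b^\ast$ by using full use (Corollary~\ref{cor:fulluse-unbounded}) to realize each $r\in I$ as a genuine interior optimum where regularity (iii) makes the second-order condition strict, and then change variables from anchors to reports. The differences are only cosmetic: you get existence of $b^\ast$ directly from surjectivity in Assumption~\ref{ass:invert-main} rather than from full use, and you make explicit the bijection $R(\theta,\cdot)\colon R(\theta,\cdot)^{-1}(I)\to I$ with inverse $b^\ast(\cdot,\theta)$ that justifies the change of variables, which the paper leaves implicit.
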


The theorem turns equilibrium updating into a change-of-variables formula. For a given report and state, $b^\ast(r,\theta)$ is the anchor realization for which report $r$ satisfies the sender's first-order condition. The term $\partial_r b^\ast(r,\theta)$ is the Jacobian that converts the density of anchors into the density of reports. Thus the posterior weight placed on state $\theta$ is proportional to the prior density of the state, the likelihood of the anchor realization that rationalizes the report, and the local rate at which anchor types are mapped into nearby reports.
Equation~\eqref{eq:continuous-main} is the receiver's first-order condition after substituting this posterior density. In the next section, additive anchors and translation-invariant reporting costs turn the inverse-anchor map into an explicit function of the receiver's action and its derivative. This converts \eqref{eq:continuous-main} into a differential equation for the equilibrium action rule.

\section{Linear Anchors}\label{sec:packageB}

\subsection{A linear class}

Assume now that the anchor is additive, $b=b_0(\theta)+\sigma x$, and the reporting cost is translation invariant, $D(r,b)=c\phi(r-b)$. Under this structure, the exact Bayes equation from the previous section becomes a differential equation for the receiver's action rule. The parameter \(c\) measures the strength of the reporting cost, \(\sigma\) measures cross-sectional dispersion in private anchors, and \(b_0(\theta)\) is the systematic component of the sender's private language.

\begin{assumption}\label{ass:PB-main}
In addition to Assumptions~\ref{ass:receiver-main}-\ref{ass:invert-main}, the anchor satisfies $b=b_0(\theta)+\sigma x$, where $\sigma>0$, $b_0\in C^3(\Theta)$ is strictly increasing with $b_0'(\theta)\ge \underline b_0'>0$, and $x$ has density $h\in C^4(\R)$ that is strictly positive, strictly log concave, and has bounded and integrable derivatives through order four. The reporting cost is $D(r,b)=c\phi(r-b)$, where $c>0$, $\phi\in C^5(\R)$, $\phi'$ is onto with $\phi(0)=\phi'(0)=0$, and there are constants $0<\underline\kappa\le\overline\kappa<\infty$ and $M_\phi<\infty$ such that $\underline\kappa\le\phi''(u)\le\overline\kappa$ and $|\phi^{(j)}(u)|\le M_\phi$ for $j=3,4,5$.
\end{assumption}

The conditional density of the anchor is
\[
q_\sigma(b\mid\theta)=\sigma^{-1}h\!\left(\frac{b-b_0(\theta)}{\sigma}\right).
\]
The noise term \(\sigma x\) captures differences in methodology, judgment, data vintage, institutional position, or any other private variation that shifts which reports are locally cheap. A nonconstant function \(b_0(\theta)\) can allow the private anchor to be systematically related to the payoff-relevant state without requiring the anchor to equal the state.

Let $\psi\equiv (\phi')^{-1}$. Given a smooth, strictly increasing action rule $a$, the sender's first-order condition at an interior report is $c\phi'(r-b)=U^S_1(a(r),\theta)a'(r)$.
Hence the anchor realization that rationalizes report \(r\) at state \(\theta\) is
\[
b_a(r,\theta)
=
r-\psi\!\left(\frac{U^S_1(a(r),\theta)a'(r)}{c}\right).
\]
The conditional report density induced by state \(\theta\) is obtained by evaluating the anchor density at this rationalizing anchor and multiplying by the Jacobian, $q_\sigma(b_a(r,\theta)\mid\theta)\,\partial_r b_a(r,\theta)$. Multiplying by the prior $f(\theta)$ defines the unnormalized posterior weight
\begin{equation}\label{eq:weight-main}
w_a^\sigma(r,\theta)
=
f(\theta)q_\sigma(b_a(r,\theta)\mid\theta)\partial_r b_a(r,\theta).
\end{equation}

\begin{theorem}\label{thm:ode-main}
Under Assumption~\ref{ass:PB-main}, let \(a\) be the \(C^2\), strictly increasing action rule on an open interval \(I\), and suppose the induced inverse-anchor map satisfies \(\partial_r b_a(r,\theta)>0\) on \(I\times\Theta\). Then the Bayes equation (\ref{eq:continuous-main}) is equivalent on \(I\) to
\begin{equation}\label{eq:ode-main2}
\mathcal A(r,a(r),a'(r))a''(r)=\mathcal B(r,a(r),a'(r)),
\end{equation}
where the rationalizing anchor and the auxiliary score are evaluated with the dummy slope $p$ in place of $a'(r)$,
\[
b_a(r,\theta;p)=r-\psi\!\left(\tfrac{U^S_1(a,\theta)\,p}{c}\right),
\qquad
s(a,\theta;p)=\frac{U^S_1(a,\theta)\,p}{c},
\]
and
\begin{align*}
\mathcal A(r,a,p)
&=
\int_\Theta
U^R_1(a,\theta)f(\theta)q_\sigma(b_a(r,\theta;p)\mid\theta)
\frac{\psi'(s)}{c}U^S_1(a,\theta)\,d\theta,\\[4pt]
\mathcal B(r,a,p)
&=
\int_\Theta
U^R_1(a,\theta)f(\theta)q_\sigma(b_a(r,\theta;p)\mid\theta)
\left[
1-\frac{\psi'(s)}{c}U^S_{11}(a,\theta)\,p^{\,2}
\right]d\theta .
\end{align*}
\end{theorem}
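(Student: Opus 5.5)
The plan is to substitute the explicit rationalizing anchor into the Bayes equation \eqref{eq:continuous-main} and differentiate it out directly. Under Assumption~\ref{ass:PB-main}, Assumption~\ref{ass:invert-main} holds because $\phi'$ is onto and strictly increasing. Hence the unique anchor $b^\ast(r,\theta)$ of Theorem~\ref{thm:continuous-main} is exactly
\[
b_a(r,\theta)=r-\psi\bigl(s(a(r),\theta;a'(r))\bigr),
\]
where $\psi=(\phi')^{-1}$. This follows from solving $c\phi'(r-b)=U^S_1(a(r),\theta)a'(r)$. Since $\phi''\ge\underline\kappa>0$, the function $\psi$ is $C^4$ with $\psi'=1/\phi''(\psi)\in[1/\overline\kappa,1/\underline\kappa]$. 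Because $a\in C^2$, the map $r\mapsto b_a(r,\theta)$ is $C^1$ and can be differentiated by the chain rule.

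The key computation is the Jacobian. Differentiating in $r$ gives
\[
\partial_r b_a(r,\theta)=1-\frac{\psi'(s)}{c}\Bigl[U^S_{11}(a(r),\theta)a'(r)^2+U^S_1(a(r),\theta)a''(r)\Bigr].
\]
This expression is affine in $a''(r)$. Its coefficient on $a''(r)$ is $-\psi'(s)U^S_1/c$, and its intercept is $1-\psi'(s)U^S_{11}p^2/c$ with $p=a'(r)$.

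Next I plug this into \eqref{eq:continuous-main}, whose integrand is $U^R_1(a(r),\theta)f(\theta)q_\sigma(b_a(r,\theta)\mid\theta)\,\partial_r b_a(r,\theta)$. Since $a''(r)$ does not depend on $\theta$, it can be pulled out of the integral. The Bayes equation then becomes $\mathcal B(r,a,a')-\mathcal A(r,a,a')\,a''(r)=0$, which is exactly \eqref{eq:ode-main2}. Every step here is an identity, so the equivalence runs in both directions. A rule $a$ satisfies \eqref{eq:continuous-main} on $I$ if and only if it satisfies \eqref{eq:ode-main2} on $I$.

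The remaining work is to justify that the integrals are well defined and that \eqref{eq:continuous-main} really is the receiver's condition at every $r\in I$.
\begin{itemize}
\item The integrands are continuous in $\theta$ on the compact set $\Theta$. This uses the $C^4$ smoothness of $U^R$ and $U^S$, the positivity and continuity of $h$, and the continuity of $b_0$. So $\mathcal A$ and $\mathcal B$ are finite.
\item The hypothesis $\partial_r b_a>0$ on $I\times\Theta$ makes the change of variables valid: for each $\theta$, $r\mapsto b_a(r,\theta)$ is a strictly increasing bijection onto its image. This gives a strictly positive report density, so the posterior \eqref{eq:posterior-main} applies at every $r\in I$ without the positive-density caveat.
\item Strict concavity of $U^R$ then makes \eqref{eq:continuous-main} equivalent to the receiver choosing $a(r)$ optimally.
\end{itemize}

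The main obstacle is modest: keeping the direction of the equivalence honest. The earlier theorem gives \eqref{eq:continuous-main} for equilibria. Here $a$ is only a candidate $C^2$ increasing rule, so I would phrase the result purely as an equivalence of the two equations given the induced $b_a$, and not invoke sender optimality beyond the first-order condition that defines $b_a$.
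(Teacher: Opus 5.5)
Your proposal is correct and takes the same approach as the paper's proof. Like the paper, you differentiate $b_a(r,\theta)$ in $r$, note that the Jacobian is affine in $a''(r)$, substitute it into the Bayes integral, and read off $\mathcal B-\mathcal A\,a''=0$ as an identity that runs in both directions. Your extra remarks on finiteness of the integrals, and on treating the result as a pure equivalence of equations rather than invoking sender optimality, are sound additions to the paper's three-line argument.
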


Curvature of the action rule enters the Bayes equation because it controls the rate at which nearby anchors are mapped into nearby reports and so shifts the posterior density induced by a report. The coefficient on curvature is $\mathcal A$, and the residual effect of moving the report while holding curvature fixed is $\mathcal B$. When $\mathcal A\neq0$, receiver optimality pins down the curvature required for Bayes' rule to hold, and the equation can be written as\footnote{On a compact message space, an endpoint pooling replaces the interior density in \eqref{eq:weight-main} with tail probabilities; the corresponding boundary equations are recorded in the supplementary appendix.}
\[
a''(r)=
\frac{\mathcal B(r,a(r),a'(r))}
{\mathcal A(r,a(r),a'(r))}.
\]

For the comparative-static and limit results that follow, I restrict attention
to the smooth global branch of the characterization. A regular equilibrium
\((R_c,a_c)\) is called a smooth regular equilibrium if \(R(\theta,b)\)  is the unique solution to the sender first-order condition and
the local
characterization in Theorem~\ref{thm:ode-main} is global, that is,
\(M=I=\mathbb R\). Equivalently, \(a_c\in C^3(\mathbb R)\), \(a_c'(r)>0\) for all
\(r\), the inverse-anchor map is well defined on \(\mathbb R\times\Theta\), and
\(a_c\) satisfies the ODE on all of \(\mathbb R\).

\subsection{Low cost limit}\label{sec:cost-floor}

The baseline model imposes more structure than standard cheap-talk because it assumes both an informative anchor and a disutility from deviating from that anchor. The next two subsections show that this structure has bite even near pure cheap talk. Surprisingly, equilibrium informativeness can remain bounded away from babbling when reporting costs vanish, and anchored reporting can discipline communication even when the anchor itself is uninformative noise. Small anchored perturbations of a model without anchors therefore generate discontinuous changes in equilibrium communication.

When the reporting cost is exactly zero, the private anchor no longer disciplines reports. One might expect regular equilibria to converge to babbling as the cost becomes small. The opposite occurs. Even a small positive cost ranks deviations from the private anchor, and that ranking keeps reports informative as the cost level vanishes.
Let  $W^R_{\mathrm{bab}}$ ($W^R_c$ ) be the receiver payoff from ignoring the report (in a regular equilibrium) given cost $c$.

\begin{proposition}\label{prop:no-babbling-smallcost}
Suppose that Assumptions~\ref{ass:receiver-main}-\ref{ass:invert-main} hold, $b=b_0(\theta)+x$, where $b_0$ is continuous and strictly increasing and $x$ is independent of $\theta$ with nondegenerate continuous distribution, and $U^S_{12}(a,\theta)\ge0$ on $A\times\Theta$. There is $\varepsilon>0$ such that every smooth regular equilibrium with cost $cD$ satisfies $W^R_c\ge W^R_{\mathrm{bab}}+\varepsilon$.
\end{proposition}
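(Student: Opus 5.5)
The plan is to bound $W^R_c$ from below by the value of a \emph{coarse} garbling of the equilibrium report. The key point is that this garbling's informativeness can be controlled uniformly in $c$ through the systematic anchor component $b_0$ alone.

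\emph{Step 1 (Blackwell reduction).} In any equilibrium the receiver best-responds to $r$. Hence $W^R_c$ is at least the receiver's optimal payoff from any signal that is a measurable function of $r$. In particular, for any cutoff $\bar r$ it is at least the payoff from the binary signal $S=\mathbf 1\{r>\bar r\}$. It therefore suffices to find, for each $c$, a cutoff $\bar r_c$ such that the experiment $S$ lies in a fixed compact family $\mathcal E$ of binary experiments. Every element of $\mathcal E$ should have value strictly above $W^R_{\mathrm{bab}}$.

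\emph{Step 2 (uniform monotone likelihood).} Fix a smooth regular equilibrium and write the anchor as $b=b_0(\theta)+x$. By Theorem~\ref{thm:continuous-main}, the report $r$ is chosen at $(\theta,b)$ exactly when $b=b^\ast(r,\theta)$. By Proposition~\ref{prop:bounded-main}, $R(\theta,\cdot)$ is increasing. Hence $\{r>\bar r\}=\{x>x^\ast(\bar r,\theta)\}$, where $x^\ast(\bar r,\theta)=b^\ast(\bar r,\theta)-b_0(\theta)$. The first-order condition $D_1(\bar r,b^\ast)=U^S_1(a(\bar r),\theta)a'(\bar r)$ has the following structure:
\begin{itemize}[label={}, leftmargin=1.5em]
\item the right side is nondecreasing in $\theta$, because $U^S_{12}\ge0$ and $a'>0$;
\item $D_1(\bar r,\cdot)$ is strictly decreasing, by Assumption~\ref{ass:invert-main}.
\end{itemize}
Therefore $b^\ast(\bar r,\theta)$ is nonincreasing in $\theta$, and so for $\theta'<\theta''$
\[
x^\ast(\bar r,\theta'')\le x^\ast(\bar r,\theta')-\bigl(b_0(\theta'')-b_0(\theta')\bigr).
\]
This gap does not depend on $c$, and this is the crux of the argument. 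Fix interior states $\theta_1<\theta_2$ and set $\Delta=b_0(\theta_2)-b_0(\theta_1)>0$. Let $G$ be the cdf of $x$. Choose a point $m$ with $G(m)-G(m-\Delta)=:\kappa>0$, which is possible because $G$ is nondegenerate, and such that $G(m)\in(0,1)$.

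Next choose $\bar r_c$ with $x^\ast(\bar r_c,\theta_1)=m$. Such a cutoff exists because $x^\ast(\cdot,\theta_1)$ is continuous, since $b^\ast\in C^2$. It is also increasing, and it is onto $\mathbb R$ by full use (Corollary~\ref{cor:fulluse-unbounded}). The conclusion is:
\begin{itemize}[label={}, leftmargin=1.5em]
\item for $\theta\le\theta_1$, $P(S=1\mid\theta)\le 1-G(m)$;
\item for $\theta\ge\theta_2$, $P(S=1\mid\theta)\ge 1-G(m)+\kappa$.
\end{itemize}

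\emph{Step 3 (uniform positive value).} Let $\mathcal E$ be the set of likelihood functions $\ell:\Theta\to[0,1]$ that are nonincreasing-complement (monotone), satisfy $\ell\le 1-G(m)$ on $[\underline\theta,\theta_1]$, and satisfy $\ell\ge 1-G(m)+\kappa$ on $[\theta_2,\overline\theta]$. This family is compact in the weak topology, by Helly selection for monotone functions. The receiver's value of an experiment $\ell$ is continuous in $\ell$, because $U^R$ is smooth and the state space is compact.

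For each $\ell\in\mathcal E$, the two signal realizations occur with probabilities bounded away from zero. The posteriors they induce differ by first-order stochastic dominance with a nontrivial shift. Since $U^R_{12}\ge\underline\gamma>0$, the two optimal actions are distinct. Strict concavity ($U^R_{11}\le-\eta$) then gives a strict gain over the single babbling action. The minimum of this gain over the compact set $\mathcal E$ is some $\varepsilon>0$, and $\varepsilon$ does not depend on $c$.

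The main obstacle is Step 3's compactness and strictness. One must ensure that both signal probabilities stay bounded away from $0$ and $1$ uniformly; the choice of $m$ and the prior mass $f>0$ on the two tails handle this. One must also check that the posterior means are separated by an amount bounded below, so that the action gap and the quadratic gain are quantifiable. If the compactness route is delicate, I would instead compute directly. By concavity, the gain is at least $\tfrac{\eta}{2}\sum_s P(s)\,(a_s-a_{\mathrm{bab}})^2$. Each action gap $a_s-a_{\mathrm{bab}}$ is bounded below using $\underline\gamma$ and the likelihood-ratio gap $\kappa$.
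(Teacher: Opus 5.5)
Your proposal is correct. It shares the paper's skeleton: coarsen the report to one cutoff event, use single crossing to make that event's conditional probability monotone in $\theta$, and obtain a gap from $b_0$ alone that does not depend on $c$. The routes differ in two places.

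\emph{Cutoff structure.} The paper never inverts the first-order condition. It gets monotonicity of $R_c$ in $\theta$ from Topkis ($U^S_{12}\ge0$, $a_c$ increasing) and strict monotonicity in $b$ from Proposition~\ref{prop:bounded-main}. It then sets $\bar r_c=R_c(\theta^-,b_0(\theta^-)+t)$ directly. You instead argue through the rationalizing anchor $b^\ast(\bar r,\theta)$ and Assumption~\ref{ass:invert-main}. That is legitimate in a smooth regular equilibrium, because the inverse-anchor map is defined on $\R\times\Theta$ by definition. Note, however, that Theorem~\ref{thm:continuous-main} and Corollary~\ref{cor:fulluse-unbounded} formally assume $B(\theta)=\R$, and the proposition does not. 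You avoid the surjectivity question by setting $\bar r_c=R_c(\theta_1,b_0(\theta_1)+m)$, as the paper does. Also, $G(m)<1$ is unnecessary, since $G(m)\ge\kappa$ and $1-G(m)+\kappa\ge\kappa$ already keep both signal probabilities away from zero.

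\emph{Uniform gain.} The paper is quantitative. With $s=U^R_1(a^0,\cdot)$, the FKG--Chebyshev identity gives $\Cov(s(\theta),1_{E_c})\le-\delta_s\delta_x m_-m_+$. Here $\delta_x$ is your $\kappa$, $\delta_s=s(\theta^+)-s(\theta^-)$, and $m_\pm$ are the prior tail masses. A second-order Taylor bound with $L=\sup|U^R_{11}|$ then turns this into the explicit constant $\varepsilon=(\delta_s\delta_x m_-m_+)^2/(2L)$. Your argument instead uses Helly compactness of the $c$-independent family $\mathcal E$ of monotone likelihoods, plus the fact that each $\ell\in\mathcal E$ has $\Cov(s,\ell)>0$ and so strictly beats babbling. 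This is valid, and it makes explicit the garbling step that the paper leaves implicit. It also shows clearly that only the two tail constraints matter.

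The trade-off is that your $\varepsilon$ exists only as a nonconstructive minimum, whereas the paper's route ties it explicitly to primitives. Your fallback computation, routed through the same covariance bound, is essentially the paper's argument.
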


A regular equilibrium therefore cannot converge to babbling. At low costs, the sender can inflate messages, but inflation is still ranked by the distance from the private anchor. The ordered private anchors generate an ordered cross-section of locally cheap reports, and weak increasing differences preserve enough of that ordering to leave the receiver with a uniformly nontrivial coarsening of the state.

How much communication can be sustained when the cost is small? The answer depends on two forces. The systematic part of the anchor must move with the state, and the sender's marginal value of raising the receiver's action must also move with the state. If the second force is weak, low costs make reports track the noisy private anchor rather than the objective state. The remainder of this subsection imposes the following sharper payoff structure.

\begin{assumption}\label{ass:smallcost-main}
In addition to Assumption~\ref{ass:PB-main}, $M=\R$, $U^S_1(a^R(\theta),\theta)>0$, $U^S_{11}(a,\theta)\le-\eta_S<0$, and $U^S_{12}(a,\theta)\ge\eta_{S\theta}>0$ on the relevant action-state domain.
\end{assumption}

Together, the conditions of Assumption~\ref{ass:smallcost-main} say that the sender wants to push the receiver upward at the full-information action, that the marginal value of further persuasion falls as the action rises, and that higher states have a stronger incentive to raise the action. They dampen the idiosyncratic part of the anchor and make the likelihood of a fixed report sharply state dependent.

The small-cost analysis compares the equilibrium action rule with the deterministic least-cost separating language. Let $\rho_c$ denote the strictly increasing solution of $c\phi'(\rho_c(\theta)-b_0(\theta))\rho_c'(\theta)=U^S_1(a^R(\theta),\theta)(a^R)'(\theta)$ with $\rho_c(\underline\theta)=b_0(\underline\theta)$, the separating language that would arise without anchor noise.\footnote{Existence and uniqueness of $\rho_c$ are established in Corollary~\ref{cor:det-least} of the supplementary appendix.} The next assumption is a stability restriction: the receiver's response to the deterministic-language report tracks the full-information action as the cost vanishes. It strengthens the maintained existence convention, which by itself does not pin down the limiting behavior of the selected equilibria.

\begin{assumption}\label{ass:branch-tracking}
As $c\downarrow0$, the regular equilibria $(R_c,a_c)$ satisfy $a_c(\rho_c(t))=a^R(t)+o(1)$ and $\partial_t\,a_c(\rho_c(t))=(a^R)'(t)+o(1)$, uniformly on compact subsets of the interior of $\Theta$.
\end{assumption}

\begin{proposition}\label{prop:small-cost-revelation}
Under Assumptions~\ref{ass:smallcost-main} and~\ref{ass:branch-tracking}, for all sufficiently small $c$, there is a regular equilibrium $(R_c,a_c)$ such that $a_c(R_c(\theta,b))\to a^R(\theta)$ in probability as $c\downarrow 0$.
\end{proposition}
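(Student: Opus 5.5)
The plan is to compare the equilibrium report with the deterministic least-cost language $\rho_c$ and show that the anchor noise becomes negligible in the "state coordinate" $t$ defined by $r=\rho_c(t)$. Fix the regular equilibria $(R_c,a_c)$ selected by Assumption~\ref{ass:branch-tracking}. For each $(\theta,b)$, define the inferred state $t_c(\theta,b)=\rho_c^{-1}(R_c(\theta,b))$. Continuity of $a^R$ and the uniform tracking $a_c(\rho_c(t))=a^R(t)+o(1)$ on compacts of $\Int\Theta$ then give
\[
a_c(R_c(\theta,b))=a^R(t_c(\theta,b))+o(1).
\]
So it suffices to show that $t_c(\theta,b)\to\theta$ in probability, with the event that $t_c$ leaves a compact subset of $\Int\Theta$ having vanishing probability. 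Monotonicity of $a_c$ and boundedness of $A$ handle those tail events.

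\textbf{Step 1 (the inflation scale).} I would study $\rho_c$ first. Writing $\Delta_c=\rho_c-b_0$, the ODE reads
\[
c\,\phi'(\Delta_c)\,(b_0'+\Delta_c')=U^S_1(a^R,\theta)(a^R)'(\theta)>0,
\]
and $\phi''\in[\underline\kappa,\overline\kappa]$. From this I expect $\Delta_c(\theta)$ to be of order $c^{-1/2}$, growing like $\sqrt{2\int_{\underline\theta}^{\theta}U^S_1(a^R)(a^R)'\,d\vartheta/(c\,\underline\kappa)}$ up to constants, and $\rho_c'(\theta)\to\infty$ uniformly on compacts of $\Int\Theta$. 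This divergence is the filtering mechanism: a fixed anchor perturbation $\sigma x$ moves the report by $O(1)$, while moving the state by $\epsilon$ moves $\rho_c$ by roughly $\rho_c'\epsilon\to\infty$.

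\textbf{Step 2 (the equilibrium report is close to $\rho_c$ in report units).} At an interior optimum, the sender FOC gives
\[
c\,\phi'(R_c-b)=U^S_1(a_c(R_c),\theta)\,a_c'(R_c).
\]
The derivative part of Assumption~\ref{ass:branch-tracking} gives $a_c'(\rho_c(t))\rho_c'(t)=(a^R)'(t)+o(1)$, hence $a_c'(\rho_c(t))=(a^R)'(t)/\rho_c'(t)+o(1/\rho_c')$. I would substitute $R_c=\rho_c(t)$ and compare with the $\rho_c$ ODE at $\theta$. Using $\phi'$ bi-Lipschitz, strict monotonicity of $U^S_1$ in $a$ ($U^S_{11}\le-\eta_S$) and in $\theta$ ($U^S_{12}\ge\eta_{S\theta}$), this yields an equation of the form
\[
\rho_c(t)-b_0(\theta)-\sigma x=\psi\!\left(\tfrac{1}{c\rho_c'(t)}\bigl[U^S_1(a^R(t)+o(1),\theta)(a^R)'(t)+o(1)\bigr]\right).
\]
The right side is $O(c^{-1/2})$ and varies in $t$ only at rate $O(c^{-1/2})$ relative to the left side. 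The left side's derivative in $t$ is $\rho_c'(t)$, which by Step 1 dominates. A monotone comparison argument, with the left side increasing in $t$ faster than the right, then shows the unique root satisfies
\[
|t-\theta|\le C\,\frac{1+|x|}{\rho_c'(\cdot)}+o(1).
\]
Since $x$ has a fixed distribution, this tends to zero in probability.

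\textbf{Main obstacle.} The hard part is Step 2's uniformity. The $o(1)$ terms from Assumption~\ref{ass:branch-tracking} hold only on compacts of $\Int\Theta$, and they must be small relative to the scale separation between $\rho_c'$ and the $t$-derivative of the right side. I would first handle $\theta$ in a compact $K_\epsilon\subset\Int\Theta$ of prior mass $\ge1-\epsilon$ and $|x|\le X_\epsilon$, then show that the root $t$ stays in a slightly larger compact. The key input is that near $\underline\theta$ the language satisfies $\Delta_c\to$ small, so low $t$ cannot rationalize a report from an interior $\theta$. Finally, I would let $\epsilon\downarrow0$.
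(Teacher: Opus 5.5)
\textbf{There is a genuine gap in Step~2.} Your reduction to the inferred state $t_c=\rho_c^{-1}(R_c(\theta,b))$ is the same as the paper's $\tau_c$. The problem is the claim that decides the argument: that the left side $\rho_c(t)-b_0(\theta)-\sigma x$ wins because its slope $\rho_c'(t)$ dominates the right side, whose $t$-variation you put at $O(c^{-1/2})$. This is false.

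On interior compacts $\rho_c'=T/(c\phi'(\Delta_c))$ with $\Delta_c\asymp c^{-1/2}$, so $\rho_c'$ is itself of order $c^{-1/2}$. The right side $\psi\bigl(T_\theta(t)/(c\rho_c'(t))\bigr)$, with $T_\theta(t)=U^S_1(a^R(t),\theta)(a^R)'(t)$, depends on $1/\rho_c'(t)$. Its derivative therefore involves $\rho_c''$, which is also of order $c^{-1/2}$. Eliminating $\rho_c''$ with the differentiated $\rho_c$-ODE gives
\[
\partial_t\,\psi\!\left(\frac{T_\theta(t)}{c\rho_c'(t)}\right)\Big|_{t=\theta}
=\rho_c'(\theta)-b_0'(\theta)-\frac{U^S_{12}(a^R(\theta),\theta)(a^R)'(\theta)}{c\,\phi''(\Delta_c(\theta))\,\rho_c'(\theta)}.
\]
So the two sides move at the same rate, and their leading terms cancel exactly at $t=\theta$. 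They must, because $\rho_c$ is built so that $t=\theta$ solves the noiseless first-order condition for every $\theta$.

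Your scale argument therefore proves too much. Your comparison never uses $U^S_{12}>0$. Yet if the sender's marginal gain barely varied with the state, the net slope would collapse to $b_0'=O(1)$. The noise $\sigma x$ would then move $t$ by $O(1)$, and reports would track the noisy anchor. This is exactly the failure the paper describes when the state-dependence of the sender's marginal incentive is weak.

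\textbf{What is missing is the computation of what survives the cancellation.} The net slope at $t=\theta$ is
\[
b_0'(\theta)+\frac{U^S_{12}(a^R(\theta),\theta)(a^R)'(\theta)}{c\,\phi''(\Delta_c(\theta))\,\rho_c'(\theta)}.
\]
This is the paper's $\partial_\tau\overline G_c(\theta,\theta)=c\phi''b_0'+U^S_{12}(a^R)'/\rho_c'$ divided by $c\phi''$. You can also get it in one line. Write $F(t,\theta)$ for left minus right side at $x=0$ and differentiate $F(\theta,\theta)\equiv0$ along the diagonal, so that $\partial_tF=-\partial_\theta F$ there.

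This slope is large because $c\rho_c'=T/\phi'(\Delta_c)\to0$. So the fact from Step~1 that drives the result is $\Delta_c\to\infty$, not $\rho_c'\to\infty$. It is bounded below only because $U^S_{12}\ge\eta_{S\theta}$.

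With this slope, the margin at $t=\theta\pm h$ is of order $h/(c\rho_c')$. That margin dominates the branch-tracking error, which is $o(1/(c\rho_c'))$ after the Lipschitz map $\psi$. It also dominates the shift $\sigma x$, which is bounded in probability.

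Finally, this gives only a local sign bracket, not a global monotone comparison in $t$. Localize the root as the paper does: use the bracket at $\theta\pm h$ together with uniqueness of the first-order root in the smooth regular equilibrium. Do not rely on the behavior of $\Delta_c$ near $\underline\theta$.
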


One might expect lower anchor costs to make reports less informative. After
all, as \(c\) moves toward zero, the limiting ODE at \(c=0\) corresponds to
a completely uninformative report. But lowering \(c\) also reduces the extent
to which the sender passes anchor noise into the message. The result shows
that this second force dominates at low cost: as \(c\downarrow0\), the
receiver's action converges to the full-information action. The reason is
that a small positive anchor cost does not eliminate the anchor's role.
Instead, it induces the sender to expand the report scale without bound. On
that expanded scale, the anchor noise becomes small relative to the movement
in reports generated by changes in the state. Reports therefore become
asymptotically fully revealing, even though the zero-cost limit itself is
uninformative.

To examine the cost borne by the sender, define the marginal incentive to separate state $\theta$ at the full-information action, $T(\theta)=U^S_1(a^R(\theta),\theta)(a^R)'(\theta)$, and its cumulative integral $\Gamma(\theta)=\int_{\underline\theta}^\theta T(u)\,du$.

\begin{theorem}\label{thm:cost-floor}
Along the equilibrium of Proposition~\ref{prop:small-cost-revelation},
\[
\E[c\phi(R_c(\theta,b)-b)]\to\E[\Gamma(\theta)]=\int_\Theta(1-F(u))T(u)\,du
\]
as $c\downarrow 0$. On a compact state space the limit is finite. If the model is extended to $\underline\theta=-\infty$ with $T$ bounded away from zero in the lower tail, the limit is infinite.
\end{theorem}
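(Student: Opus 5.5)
The plan is to pass to the limit in the sender's first-order condition along the equilibrium of Proposition~\ref{prop:small-cost-revelation}, and to compare the random report with the deterministic least-cost language $\rho_c$. The equality $\E[\Gamma(\theta)]=\int_\Theta(1-F(u))T(u)\,du$ is just Fubini: $\E[\int_{\underline\theta}^\theta T(u)du]=\int_\Theta T(u)\Pr(\theta\ge u)du$. So the work is to show $\E[c\phi(R_c-b)]\to\E[\Gamma(\theta)]$.

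\emph{Step 1: the deterministic benchmark.} For the noiseless language, write $u_c(\theta)=\rho_c(\theta)-b_0(\theta)$, so the defining ODE reads $c\phi'(u_c)(u_c'+b_0')=T(\theta)$. Since $\tfrac{d}{d\theta}c\phi(u_c)=c\phi'(u_c)u_c'$, we get
\[
\tfrac{d}{d\theta}c\phi(u_c(\theta))=T(\theta)-c\phi'(u_c(\theta))b_0'(\theta).
\]
I will show that $c\phi'(u_c)\to0$ uniformly. The reason is that $\phi''\ge\underline\kappa$ and $c\phi(u)$ is bounded by $\Gamma$-type quantities, so $u_c=O(c^{-1/2})$. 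Then $c\phi'(u_c)\le c\overline\kappa|u_c|=O(c^{1/2})$. Integrating from $\underline\theta$, where $u_c=0$, gives $c\phi(u_c(\theta))\to\Gamma(\theta)$ uniformly on $\Theta$. This step captures the intuition that the inflation cost exactly absorbs the cumulative incentive $T$.

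\emph{Step 2: transfer to the noisy equilibrium.} By the first-order condition, $c\phi'(R_c-b)=U^S_1(a_c(R_c),\theta)a_c'(R_c)$. By Assumption~\ref{ass:branch-tracking}, on the report scale $r\approx\rho_c(t)$ we have $a_c'(r)\approx (a^R)'(t)/\rho_c'(t)$. Since $\rho_c'\sim T/(c\phi'(u_c))$ blows up, the slope $a_c'$ is small, of order $c\phi'(u_c)/T$. Hence $u:=R_c(\theta,b)-b$ solves $c\phi'(u)\approx U^S_1(a_c(R_c),\theta)\,c\phi'(u_c(t))/T(t)$, where $t$ is the state with $\rho_c(t)\approx R_c(\theta,b)$. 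By Proposition~\ref{prop:small-cost-revelation}, $a_c(R_c)\to a^R(\theta)$ in probability, so $t\to\theta$, and then $c\phi'(u)/c\phi'(u_c(\theta))\to1$. Using $\underline\kappa\le\phi''\le\overline\kappa$, $\phi(u)$ is comparable to $\phi'(u)^2$ up to a factor in $[\underline\kappa/\overline\kappa,\overline\kappa/\underline\kappa]$. To get exact ratio $1$, I will instead argue $u-u_c(\theta)=o(|u_c|)$: the noise $\sigma x$ is $O_p(1)$, while $u_c$ diverges away from $\underline\theta$. This yields $c\phi(R_c-b)-c\phi(u_c(\theta))\to0$ in probability, and by Step~1 $c\phi(R_c-b)\to\Gamma(\theta)$ in probability.

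\emph{Step 3: uniform integrability and tails.} To upgrade convergence in probability to convergence of expectations, I will bound $c\phi(R_c-b)$. Condition (iv) of Assumption~\ref{ass:cost-main} is not uniform in $c$, so instead I will use the first-order condition directly. The bound $U^S_1\le K$ on $A\times\Theta$, together with a bound on $a_c'$ obtained from the branch-tracking slope plus Corollary~\ref{cor:reversion} in the tails, gives $c\phi'(u)\le C\,c\phi'(u_c(\overline\theta))=O(c^{1/2})$. With strong convexity this yields $c\phi(u)\le (c\phi'(u))^2/(2c\underline\kappa)=O(1)$ uniformly, and dominated convergence then finishes the compact case.

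I expect this to be the main obstacle. Branch tracking holds only on compact subsets of the interior, so near $\underline\theta$, near $\overline\theta$ and for extreme $x$, the slope of $a_c$ must be controlled separately. I would handle these regions by a truncation argument: the tails carry small probability, their costs are uniformly bounded by the estimate above, and letting the truncation shrink after $c\downarrow0$ removes them.

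\emph{Step 4: the unbounded extension.} If $\underline\theta=-\infty$ and $T\ge t_0>0$ on the lower tail, then $\Gamma(\theta)\ge t_0(\theta-\underline\theta)=\infty$ for every $\theta$. More carefully, fix a truncation point $\underline\theta_n\to-\infty$ and apply Fatou together with Steps~1 and~2 on $[\underline\theta_n,\cdot]$. This gives $\liminf\E[c\phi]\ge\int(1-F(u))T(u)\mathbf 1\{u\ge\underline\theta_n\}du$. The right side diverges as $n\to\infty$ because $1-F\to1$ in the lower tail, so the liminf is infinite.
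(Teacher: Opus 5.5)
Your architecture is the paper's: $Y_c=c\phi(\rho_c-b_0)\to\Gamma$ uniformly, obtained from $(\Gamma-Y_c)'=c\phi'(d_c)b_0'\ge0$ and $c\phi'(d_c)=O(\sqrt c)$. That is followed by convergence in probability of the realized cost, dominated convergence, and Fubini. The step that does not close is Step~3.

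Your global estimate $c\phi'(R_c-b)\le C\,c\phi'(u_c(\overline\theta))$ requires a bound on $a_c'$ at every report, and none is available. Assumption~\ref{ass:branch-tracking} controls $a_c'$ only on $\rho_c(K)$ for compact $K\subset\Int\Theta$. It says nothing below $b_0(\underline\theta)$, above $\rho_c(\overline\theta)$, or near the images of the endpoints. Corollary~\ref{cor:reversion} is a fixed-$c$ tail statement with no rate uniform in $c$. Your truncation then bounds the discarded regions by that same unproven estimate, so the argument is circular.

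The missing observation is that the cost level, unlike the distortion $|R_c-b|$, is bounded uniformly in $c$. Reporting the anchor is always feasible and costs $c\phi(0)=0$, so for all $(\theta,b)$ and all $c$,
\[
c\phi(R_c(\theta,b)-b)\le U^S(a_c(R_c(\theta,b)),\theta)-U^S(a_c(b),\theta)\le 2\sup_{A\times\Theta}|U^S| .
\]
This is the uniform $L^1$ bound the paper invokes; it is written out in the proof of Theorem~\ref{thm:uninf-fullinfo}. With a constant dominating function you need no slope control outside interior compacts. The boundary strips of $\Theta$ and extreme $x$ have small probability and bounded cost, so dominated convergence finishes.

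Two smaller repairs.

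In Step~2, $u-u_c(\theta)=\rho_c(\tau_c)-\rho_c(\theta)-\sigma x$, and the first term is not $O_p(1)$. On interior compacts $\rho_c'$ is of the same order $c^{-1/2}$ as $u_c$, so the term is $o_p(u_c)$ only because $\tau_c\to\theta$. The detour is unnecessary anyway. The ratio $\phi'(u)/\phi'(u_c(\theta))\to1$ you already obtained, together with $\underline\kappa\le\phi''\le\overline\kappa$, gives $|u-u_c|\le o(\phi'(u_c))/\underline\kappa=o(u_c)$. Hence $c|\phi(u)-\phi(u_c)|=o(c u_c^2)=o(1)$, because $c u_c^2\le 2\Gamma/\underline\kappa$.

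In Step~4, the paper only shows that the limit $\E[\Gamma(\theta)]$ is infinite, using $T\ge0$ and $\int_{-\infty}^{\min\{\theta,\theta_0\}}T=\infty$. Your bound $\Gamma(\theta)\ge t_0(\theta-\underline\theta)$ also needs that restriction to the lower tail. Your Fatou argument would run Steps~1--2 in an extended model where neither $\rho_c$ (with its boundary condition at $\underline\theta$) nor Proposition~\ref{prop:small-cost-revelation} has been established. Either drop it or supply those ingredients.
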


 On compact state spaces, the cost to the sender is finite under bounded marginal conflict. In lower-unbounded environments with non-vanishing conflict of interest, the cumulative incentive $\E[\Gamma(\theta)]$ is infinite, so near-full revelation requires unbounded expected reporting costs. Full revelation with small anchor costs can therefore be valuable to the receiver while imposing large expected reporting costs on the sender.

\subsection{Uninformative anchors}\label{sec:uninformative}

Consider the extreme case in which the anchor is independent of the payoff-relevant state. Formally, $b=\mu+\sigma x$, where $x$ is independent of $\theta$ and $\sigma>0$, so that a report of $b$ carries no information. Even here, the anchor still pins down the sender's cost-minimizing report: if different states have different gains from moving the receiver's action, they distort by different amounts from the same uninformative anchor, and the receiver can learn from the distortion rather than from the anchor itself. Let $a^0=\arg\max_a\E[U^R(a,\theta)]$ denote the receiver's prior-optimal action.

\begin{assumption}\label{ass:uninf-align}
The function $\theta\mapsto U^R_1(a^0,\theta)$ is strictly increasing, $\theta\mapsto U^S_1(a^0,\theta)$ is weakly increasing and not almost surely constant, $\E[U^R_{11}(a^0,\theta)]<0$, and $\phi''(0)>0$.
\end{assumption}

The monotonicity requirements are the standard single-crossing conditions at the babbling action. Since $a^0$ is prior-optimal, Assumption~\ref{ass:uninf-align} implies that the receiver's marginal value of a higher action and the sender's marginal incentive to induce one have positive covariance across states. If the sender's marginal incentive were independent of the state, every state would distort from the same anchor in the same way and an uninformative anchor could not create information.

The next assumption restricts only the noise distribution. The compact-message case is stated for completeness; the existence result below is established in the real-line case, where the linearized operator has a canonical self-adjoint realization.

\begin{assumption}\label{ass:uninf-noise}
Either $M$ is a compact interval and the density of $b$ is $C^2$ and bounded above and away from zero on $M$, or $M=\R$ and $x$ has density proportional to $\exp\{-V(x)\}$, where $V\in C^3$, $V''$ is bounded below by a positive constant, and $V(x)\to\infty$ as $|x|\to\infty$.
\end{assumption}

Substituting the uninformative anchor into the regular-equilibrium ODE of Theorem~\ref{thm:ode-main} reduces equilibrium to a calibration exercise. Babbling $a\equiv a^0$ is always a solution; the remaining question is whether a nonconstant solution can sustain more information than babbling. Linearizing the Bayes equation at $a^0$ produces a Sturm-Liouville problem whose first nontrivial eigenvalue pins down a positive cost scale.

\begin{theorem}\label{thm:uninf-existence}
Under Assumptions~\ref{ass:uninf-align} and~\ref{ass:uninf-noise} in the real-line case, there is a positive cost level $c^\ast$ at which the uninformative-anchor model admits regular equilibria with nonconstant action rules. 
\end{theorem}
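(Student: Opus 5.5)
The plan is a bifurcation-from-a-simple-eigenvalue argument (Crandall--Rabinowitz) applied to the ODE of Theorem~\ref{thm:ode-main}, specialized to the uninformative anchor $b=\mu+\sigma x$. With $q_\sigma(b\mid\theta)=\sigma^{-1}h((b-\mu)/\sigma)$ independent of $\theta$, the constant rule $a\equiv a^0$ solves the Bayes equation \eqref{eq:continuous-main} for every $c>0$. At $a\equiv a^0$ we have $p=0$, so $b_a=r$, $\psi(0)=0$ and $\partial_r b_a=1$. The weight then factors as $f(\theta)q_\sigma(r)$, and the equation reduces to $\E[U^R_1(a^0,\theta)]=0$.

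I will write $a=a^0+\varepsilon v$ and expand $\mathcal A a''-\mathcal B$ to first order in $\varepsilon$. Since $\psi'(0)=1/\phi''(0)$, we get $b_a=r-\varepsilon U^S_1(a^0,\theta)v'(r)/(c\phi''(0))+O(\varepsilon^2)$. The linear terms come from three sources.
\begin{itemize}
\item The $a$-dependence of $U^R_1$, which gives $\E[U^R_{11}(a^0,\theta)]\,q(r)\,v$.
\item The shift of the anchor density, which gives $-\Cov$-type terms $\E[U^R_1U^S_1]\,q'(r)v'/(c\phi''(0))$.
\item The curvature term $\mathcal A a''$, which gives $\E[U^R_1U^S_1]\,q(r)v''/(c\phi''(0))$.
\end{itemize}
Let $\kappa\equiv\E[U^R_1(a^0,\theta)U^S_1(a^0,\theta)]$. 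By Assumption~\ref{ass:uninf-align} and Chebyshev's association inequality, $\kappa=\Cov(U^R_1,U^S_1)>0$, because $\E[U^R_1(a^0,\theta)]=0$ and both functions are monotone with $U^S_1$ nonconstant. The linearized equation is then
\[
\frac{\kappa}{c\,\phi''(0)}\,\bigl(q\,v'\bigr)'+\E[U^R_{11}(a^0,\theta)]\,q\,v=0 ,
\]
which is a Sturm--Liouville problem $-(qv')'=\lambda q v$ with $\lambda=c\,\phi''(0)\,|\E[U^R_{11}]|/\kappa$. I expect the signs to come out this way, and will check them carefully.

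Under Assumption~\ref{ass:uninf-noise} in the real-line case, set $q\propto e^{-V((r-\mu)/\sigma)}$. The operator $L v=-q^{-1}(qv')'$ on $L^2(q)$ is then a Witten-type Laplacian. It is unitarily equivalent to a Schr\"odinger operator with potential $\tfrac14 V'^2-\tfrac12 V''$, which tends to $+\infty$ because $V''\ge$ const $>0$. So $L$ is essentially self-adjoint with discrete spectrum. Its eigenvalue $0$ has constant eigenfunctions, which correspond to babbling. The first nonzero eigenvalue $\lambda_1>0$ is simple, by Sturm--Liouville oscillation theory on the line, and its eigenfunction $v_1$ is strictly monotone. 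For the Gaussian case $v_1$ is linear; in general, monotonicity follows from Brascamp--Lieb-type arguments: differentiating the eigen-equation shows $v_1'$ solves an equation with an extra positive potential $V''$, so $v_1'$ does not change sign. I then define $c^\ast=\lambda_1\kappa/(\phi''(0)|\E[U^R_{11}(a^0,\theta)]|)>0$.

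Next comes the nonlinear step. I will pose the map $G(c,v)=\mathcal A(r,a,a')a''-\mathcal B(r,a,a')$ with $a=a^0+v$, acting between weighted Sobolev spaces $H^2(q)\to L^2(q)$ on the orthogonal complement of constants. The smoothness assumptions on $U^R,U^S,\phi,h$ make $G$ of class $C^2$. Its kernel at $c^\ast$ is one-dimensional, spanned by $v_1$, and the operator is Fredholm of index zero by self-adjointness. The transversality condition $\partial_c D_vG(c^\ast,0)v_1\notin\operatorname{Range}$ holds because $\partial_c$ multiplies the $(qv')'$ term by $-c^{-2}$, and $\langle (qv_1')',v_1\rangle=-\lambda_1\|v_1\|^2_q\neq0$. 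Crandall--Rabinowitz then yields a curve $(c(s),a^0+s v_1+o(s))$ of nonconstant solutions with $c(0)=c^\ast$.

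It remains to check regularity (Definition~\ref{def:regular-main}) for small $s$.
\begin{itemize}
\item $a'=s v_1'+o(s)>0$ requires $v_1'>0$ together with control in $C^1$, not just in $H^2(q)$. This is the step I expect to be the main obstacle: the weight degenerates in the tails, so uniform positivity of $a'$ on all of $\R$ needs a weighted Schauder or ODE-asymptotic argument showing the remainder is $o(s)\,v_1'$ pointwise.
\item $\partial_r b_a>0$ holds because $\partial_r b_a=1-O(s)$.
\item Uniqueness and the strict second-order condition follow from $c\phi''\ge c\underline\kappa$ dominating the $O(s)$ curvature of $U^S(a(r),\theta)$.
\end{itemize}
I will read the statement as asserting existence at $c^\ast$, or at $c(s)$ arbitrarily close to it. If the bifurcation is vertical, with $c(s)\equiv c^\ast$, that gives exactly the statement. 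Otherwise, the statement follows by taking $c^\ast$ to be the parameter value at a point on the branch.
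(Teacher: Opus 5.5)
Your plan follows the paper's proof almost step for step. Both perturb babbling, reduce the linearized Bayes equation to $-(qv')'=\lambda qv$, and place the bifurcation at $\lambda_1\kappa/(\phi''(0)H)$ with $H=-\E[U^R_{11}(a^0,\theta)]$. Both then check Crandall--Rabinowitz transversality, verify regularity on the branch, and take $c^\ast$ to be the cost at a nontrivial branch point.

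The one different ingredient is the monotonicity of $v_1$. You differentiate the eigen-equation. That also requires knowing that $\lambda_1$ is the bottom of the spectrum of the differentiated operator $L+\sigma^{-2}V''$, so that $v_1'$ is its positive ground state. The paper instead observes that $w=qv_1'$ satisfies $w'=-\lambda_1 qv_1$ and vanishes at $\pm\infty$, and that $v_1$ changes sign once; this forces $w>0$.

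Two small repairs are needed.
\begin{enumerate}
\item Your displayed linearization has a sign slip. Since $\mathcal B$ enters $\mathcal A a''-\mathcal B$ with a minus sign, it should read $\frac{\kappa}{c\phi''(0)}(qv')'-\E[U^R_{11}(a^0,\theta)]\,qv=0$. This is what your positive $\lambda$ presupposes.
\item Do not pass to the orthogonal complement of constants. Constants are not in the kernel of the linearization, because the zeroth-order term acts on them with coefficient $H\neq0$, and $a^0$ plus a constant is not a solution. Restricting only the domain would leave a cokernel spanned by $1$ and $v_1$, so the operator would have index $-1$.
\end{enumerate}

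The genuine gap is the step you flag, and your proposed repair cannot work in general. A regular equilibrium has $a(\R)\subseteq A$, which is compact here. In fact, any solution of the Bayes equation with positive weights takes values in $\Int A$.

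Take Gaussian noise, $V(x)=x^2/2$, which satisfies Assumption~\ref{ass:uninf-noise}. The first eigenfunction is then affine, $v_1(r)\propto r-\mu$. A remainder that is $o(s)\,v_1'$ pointwise would bound $a_s'$ below by a positive multiple of $s$ on all of $\R$, making $a_s$ unbounded. So the branch solutions must leave the linear profile no later than $|r-\mu|$ of order $1/s$. There $q$ is exponentially small in $1/s^2$, and the $H^2(q)$ norm carries no information.

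Closeness in that norm therefore controls none of the following in the tails:
\begin{itemize}
\item the sign of $a_s'$;
\item the sign of $\partial_r b_{a_s}$;
\item the global concavity of $r\mapsto U^S(a_s(r),\theta)-c\phi(r-b)$.
\end{itemize}
Nor does it by itself make $G$ well defined and $C^2$ on $H^2(q)$, since $q(b_a)/q(r)$ and $U^R_1(a,\theta)$ are uncontrolled when $v$ and $v'$ are unbounded.

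The paper's proof has the same hole. It asserts that the branch remainder is $o(\varepsilon)$ ``in the $C^1$ norm of the construction.'' If that norm is uniform on $\R$, this contradicts $a_\varepsilon(\R)\subseteq A$ in the Gaussian case; if it is weighted, it does not yield $a_\varepsilon'>0$ in the tails.

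What is missing in both is a separate tail argument for the nonlinear equation itself, not an appeal to proximity to $a^0+sv_1$. One route would use the a priori confinement $a_s(\R)\subseteq\Int A$ together with the structure of $\mathcal A a''=\mathcal B$ where $q$ is small. The goal is to show that $a_s'$ stays positive, and that $a_s'$ and $a_s''$ stay small enough for $\partial_r b_{a_s}>0$ and strict sender concavity.
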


Shrinking the noise along the separating solution drives the receiver's posterior toward full information.
\begin{assumption}\label{ass:uninf-design}
The state space is a compact interval, $M=\R$, and $a^R$ is $C^1$ and strictly increasing. For every $\theta$, $U^S_1(a^R(\theta),\theta)>0$, and for all $t,\theta$, $U^S_{12}(a^R(t),\theta)>0$. The cost function $\phi$ is strictly convex, strictly increasing on $[0,\infty)$, and unbounded.
\end{assumption}

Under Assumption \ref{ass:uninf-design}, every type wants to raise the receiver's action at the full-information action, and higher states have a stronger marginal gain from doing so. 
\begin{theorem}\label{thm:uninf-fullinfo}
Under Assumptions~\ref{ass:branch-tracking}, ~\ref{ass:uninf-noise}, and~\ref{ass:uninf-design} in the real-line case, there exist regular equilibria with anchors $b=\mu+\sigma x$ such that, as $\sigma\downarrow0$, the receiver action satisfies $a_\sigma(R_\sigma(\theta,b))\to a^R(\theta)$ in probability and the expected reporting cost converges to $\E[\Gamma(\theta)]$.
\end{theorem}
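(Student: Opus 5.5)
The plan is to reduce Theorem~\ref{thm:uninf-fullinfo} to the small-cost analysis of Proposition~\ref{prop:small-cost-revelation} and Theorem~\ref{thm:cost-floor}, using a change of scale. Take $b=\mu+\sigma x$ and the cost $c\phi(r-b)$. Rescale reports and anchors as $\tilde r=(r-\mu)/\sigma$ and $\tilde b=(b-\mu)/\sigma=x$. The noise then has unit scale, and the problem takes the form of a small-cost problem with effective cost $c\phi(\sigma(\tilde r-x))$. In the quadratic benchmark this equals $c\sigma^2\tilde\phi(\tilde r-x)$. More generally, I would pick the design $c=c(\sigma)$ so that shrinking $\sigma$ has the same effect as letting the effective cost vanish. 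The state enters only through the strategic term $U^S_1(a,\theta)$, which is strictly increasing in $\theta$ by Assumption~\ref{ass:uninf-design}. The separating language therefore cannot come from $b_0$. Instead I would build the deterministic least-cost language $\rho_\sigma$ from the modified equation $c\phi'(\rho_\sigma(\theta)-\mu)\rho_\sigma'(\theta)=T(\theta)$, with the boundary condition $\rho_\sigma(\underline\theta)=\mu$. Here $T>0$, because $U^S_1(a^R(\theta),\theta)>0$ and $a^R$ is strictly increasing. Since $\phi$ is strictly convex, strictly increasing on $[0,\infty)$ and unbounded, this ODE has a unique strictly increasing solution. That solution satisfies $c\phi(\rho_\sigma(\theta)-\mu)=\Gamma(\theta)$ exactly, by integrating $\frac{d}{d\theta}c\phi(\rho_\sigma-\mu)=T$.

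The steps, in order, are as follows.

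First, existence. Assumption~\ref{ass:branch-tracking}, applied with $\rho_\sigma$ in place of $\rho_c$, supplies the equilibrium branch $(R_\sigma,a_\sigma)$ along which $a_\sigma(\rho_\sigma(t))\to a^R(t)$ and the slopes match. Assumption~\ref{ass:uninf-noise} supplies the log-concave noise needed for Theorem~\ref{thm:ode-main} to apply.

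Second, convergence of actions. Show that $R_\sigma(\theta,b)-\rho_\sigma(\theta)$ is small relative to the spread of $\rho_\sigma$ across states. The sender's first-order condition $c\phi'(R-b)=U^S_1(a_\sigma(R),\theta)a_\sigma'(R)$, together with the tracking slope $a_\sigma'(\rho_\sigma(\theta))\approx (a^R)'(\theta)/\rho_\sigma'(\theta)$, pins down $R-b$ up to an error governed by $\sigma x$. Meanwhile $\rho_\sigma'$ diverges, or $\sigma$ vanishes relative to it. The inverse of $\rho_\sigma$, evaluated at the report, therefore converges in probability to $\theta$. Applying continuity of $a^R$ then gives $a_\sigma(R_\sigma(\theta,b))\to a^R(\theta)$ in probability. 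The positivity of $U^S_{12}(a^R(t),\theta)$ guarantees that the first-order condition identifies the state, as in the proof of Proposition~\ref{prop:small-cost-revelation}.

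Third, costs. Write $c\phi(R_\sigma-b)=c\phi(\rho_\sigma(\theta)-\mu)+\text{error}$. The leading term is exactly $\Gamma(\theta)$. The error is controlled by convexity of $\phi$, the exact identity for the leading term, and bounded derivatives along the branch. Integrating over the compact $\Theta$, dominated convergence then yields $\E[c\phi(R_\sigma(\theta,b)-b)]\to\E[\Gamma(\theta)]$, mirroring Theorem~\ref{thm:cost-floor}.

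The main obstacle is the second step, specifically uniform control of the noise-induced error. We need $c\phi'(R-b)$ to be insensitive to $\sigma x$, where $x$ may come from the tails of a log-concave density. Otherwise the report would inherit anchor noise at the scale of the separating language. The first approach I would try is to truncate $x$ to $|x|\le K_\sigma$, with $K_\sigma\to\infty$ slowly enough that $\sigma K_\sigma$ is negligible relative to $\rho_\sigma'$ on compact interior subsets. The tail mass is handled separately: the Gaussian-type tails from $V''>0$ make it vanish. Near the endpoints of $\Theta$, where the tracking assumption is not uniform, I would use a separate argument based on boundedness of $A$.
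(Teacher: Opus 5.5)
Your argument is, at its core, the paper's. Both build a deterministic separating language from the constant anchor $\mu$ and invoke the $\sigma\downarrow0$ analogue of Assumption~\ref{ass:branch-tracking} around it. Both bracket the sender's first-order condition, with identification coming from $\partial_\tau\bar G(\theta,\theta)=U^S_{12}(a^R(\theta),\theta)(a^R)'(\theta)/\rho'(\theta)>0$ alone, since the $c\phi''b_0'$ term of the small-cost proof vanishes. Both treat the noise as an $O(\sigma|x|)$ perturbation and use dominated convergence for the cost.

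The genuine difference is the normalization of the language. The paper fixes $\kappa>0$ and sets $c\phi(\rho_\kappa(\theta)-\mu)=\kappa+\Gamma(\theta)$, so that $\rho_\kappa'$ is finite at $\underline\theta$. It obtains the limit $\kappa+\E[\Gamma(\theta)]$ and then runs a diagonal sequence $\kappa_n\downarrow0$. You take the least-cost language with $\rho(\underline\theta)=\mu$, so $c\phi(\rho(\theta)-\mu)=\Gamma(\theta)$ holds exactly and the diagonal step disappears. The singularity $\rho'(\underline\theta)=\infty$ is harmless, because tracking and bracketing are only needed on compact subsets of $\Int\Theta$, and the prior mass outside such compacts can be made arbitrarily small.

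Your normalization is also the safer one. For fixed $\kappa>0$, a type $\theta$ near $\underline\theta$ can report its own anchor $b\approx\mu<\rho_\kappa(\underline\theta)$ at zero cost. Because $a$ is increasing with values in $A$, this deviation yields at least about $U^S(a^R(\underline\theta),\theta)$. Against the tracked payoff $U^S(a^R(\theta),\theta)-\kappa-\Gamma(\theta)$, it gains at least $\kappa-\Phi(\theta)$, where $\Phi(\theta)=\int_{\underline\theta}^{\theta}[U^S_1(a^R(u),\theta)-U^S_1(a^R(u),u)](a^R)'(u)\,du=O((\theta-\underline\theta)^2)$. So tracking around $\rho_\kappa$ cannot hold uniformly on interior compacts; the paper's global incentive check only compares mimicking reports $\rho_\kappa(t)$ with $t\in\Theta$. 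At $\kappa=0$ the same comparison leaves the nonnegative margin $\Phi(\theta)$.

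Three parts of your write-up need adjusting.

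\emph{The rescaling.} The change of variables $\tilde r=(r-\mu)/\sigma$ is an exact reduction to a small-cost problem only for quadratic $\phi$, where the effective cost is $c\sigma^2$ at fixed $c$. Even then the rescaled systematic anchor is constant, which violates the hypothesis $b_0'\ge\underline b_0'>0$ of Proposition~\ref{prop:small-cost-revelation}. For general $\phi$, no choice of $c(\sigma)$ makes $c(\sigma)\phi(\sigma u)$ proportional to $\phi(u)$. Drop the rescaling and keep $c$ fixed, as the paper does; your $\rho_\sigma$ then does not depend on $\sigma$ at all.

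\emph{The noise comparison.} In the step you flag as the main obstacle, the relevant comparison is not between $\sigma K_\sigma$ and $\rho'$. In first-order-condition units the bracket gap is $m_Kh/\rho'(\theta)$, while the noise term is of order $c\,\sigma|x|\sup\phi''$ over the relevant compact range. What you need is therefore $c\,\sigma|x|\,\rho'(\theta)\to0$ in probability. A steeper language narrows the bracket; it does not wash out noise by itself. In a $c(\sigma)\downarrow0$ variant, what would make this hold is $c\rho'=T/\phi'(\rho-\mu)\to0$, not the divergence of $\rho'$. At fixed $c$, with $\rho'$ bounded on interior compacts and $x$ tight, $\sigma|x|=o_p(1)$ suffices and no truncation rate is needed.

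\emph{The cost step.} Name the dominating function explicitly. Optimality against reporting $r=b$ gives $c\phi(R_\sigma(\theta,b)-b)\le 2\sup_{A\times\Theta}|U^S|$. This bound is what turns convergence in probability into convergence of expectations; bounded derivatives along the branch do not control the tails of $x$.
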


Theorem~\ref{thm:uninf-fullinfo} strengthens the earlier result by showing that an uninformative anchor can not only
be used to support informative communication in a regular equilibrium but, also, in the limit such that the anchor becomes deterministic, communication becomes fully informative. However, as in the case of full-communication with a vanishing cost, the sender incurs a possibly large bias cost.

?

\subsection{Gaussian-Quadratic model}\label{sec:benchmark}

The Gaussian-Quadratic special case of additive anchors admits closed-form expressions for the main forces in the model, with additional explicit welfare implications. Throughout this section, I assume that $\theta\sim N(0,\sigma_\theta^2)$ and $b=\beta\theta+\sigma x$, where \(x\sim N(0,1)\) is independent of \(\theta\), \(\sigma>0\), and \(\beta\ge0\). Payoffs are
\[
U^R(a,\theta)=-\frac12(a-\theta)^2,
\qquad
U^S(a,\theta)=-\frac12(a-\theta-d)^2,
\qquad
D(r,b)=\frac c2(r-b)^2,
\]
with \(c>0\) and \(d\ge0\). The sender's ideal action is therefore \(\theta+d\), the constant \(d\) is the bias, and the parameter \(\beta\) measures the informativeness of the anchor.

For an informative anchor (\(\beta>0\)), the first-order condition and Bayes' rule together force the regular equilibrium to be linear.
\begin{theorem}\label{thm:rello-main}
Suppose \(\beta>0\) and \(M=\R\). In a regular equilibrium, the receiver action is
\begin{equation}\label{eq:PA-action-main}
a(r)=\alpha r-\frac{\alpha^2 d}{c},
\end{equation}
where \(\alpha\) is the unique positive root of
\begin{equation}\label{eq:alpha-beta}
\beta\sigma_\theta^2\alpha^2+
\bigl(c\sigma^2+c\beta^2\sigma_\theta^2-\sigma_\theta^2\bigr)\alpha
-c\beta\sigma_\theta^2=0.
\end{equation}
The sender report rule is
\begin{equation}
R(\theta,b)=
\frac{c\beta+\alpha}{c+\alpha^2}\theta
+
\frac{c}{c+\alpha^2}(b-\beta\theta)
+
\frac{\alpha d}{c}.
\end{equation}
\end{theorem}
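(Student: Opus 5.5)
The plan is to specialize the characterization of Theorem~\ref{thm:continuous-main} and Theorem~\ref{thm:ode-main} to the Gaussian-quadratic primitives. First I show that the sender's second-order condition, holding at every state $\theta\in\R$, forces $a''\equiv 0$. Then I pin down the resulting linear rule by a standard linear-Gaussian projection.

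\emph{Inverse-anchor map.} With $U^S_1(a,\theta)=\theta+d-a$ and $\phi'(u)=u$, the interior first-order condition $c(r-b)=(\theta+d-a(r))a'(r)$ gives
\[
b^\ast(r,\theta)=r-\frac{(\theta+d-a(r))a'(r)}{c}.
\]
On any open smooth region $I$ of $a$ this yields
\[
\partial_r b^\ast(r,\theta)=1+\frac{a'(r)^2}{c}-\frac{(\theta+d-a(r))\,a''(r)}{c}.
\]
The sender's objective is $-\tfrac12(a(r)-\theta-d)^2-\tfrac c2(r-b)^2$, so its second derivative in $r$ equals $-c\,\partial_r b^\ast(r,\theta)$. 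The strict second-order condition in Definition~\ref{def:regular-main}(iii) therefore reads $\partial_r b^\ast(r,\theta)>0$. By Corollary~\ref{cor:fulluse-unbounded}, every report in $I$ is used at every state, so this inequality must hold for all $\theta\in\R$, since the Gaussian prior has full support. But $\partial_r b^\ast(r,\cdot)$ is affine in $\theta$ with slope $-a''(r)/c$, and an affine function positive on all of $\R$ must be constant. Hence $a''(r)=0$ throughout $I$, so $a$ is affine on each smooth region.

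\emph{Removing kinks.} On a region where $a(r)=\alpha r+\kappa$, the report rule is linear in $(\theta,x)$:
\[
R(\theta,b)=\frac{cb+\alpha(\theta+d-\kappa)}{c+\alpha^2}.
\]
The rationalizing anchor and the Jacobian are then Gaussian and constant respectively, so the posterior in \eqref{eq:posterior-main} is the normal posterior of $\theta$ given the linear signal $(c\beta+\alpha)\theta+c\sigma x$. Receiver optimality requires $a(r)=\E[\theta\mid r]$. Because this conditional expectation is a single affine function of $r$ on all of $\R$, matching it on any open piece determines $(\alpha,\kappa)$ uniquely. Adjacent pieces therefore carry the same coefficients, which rules out kinks; continuity of $a$ is consistent with this. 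This step is the one I expect to be delicate: it needs care that the Bayes equation on a bounded piece really forces agreement with the global Gaussian conditional mean. I would argue this from Theorem~\ref{thm:continuous-main}, since the posterior at each $r\in I$ depends only on local quantities, which are identical to those of the global linear rule.

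\emph{Solving for $\alpha$ and $\kappa$.} Projection gives
\[
\alpha=\frac{(c\beta+\alpha)(c+\alpha^2)\sigma_\theta^2}{(c\beta+\alpha)^2\sigma_\theta^2+c^2\sigma^2}.
\]
Expanding and dividing by $c$ yields exactly \eqref{eq:alpha-beta}. Its leading coefficient $\beta\sigma_\theta^2$ is positive and its constant term $-c\beta\sigma_\theta^2$ is negative, so for $\beta>0$ the roots have opposite signs. Thus there is exactly one positive root, and $\alpha=0$ is excluded. For the intercept, use the zero means: $\E[\theta\mid r]=\alpha\bigl(r-\alpha(d-\kappa)/(c+\alpha^2)\bigr)$. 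Setting this equal to $\alpha r+\kappa$ gives $\kappa(c+\alpha^2)=-\alpha^2(d-\kappa)$, hence $\kappa=-\alpha^2 d/c$, which is \eqref{eq:PA-action-main}. Substituting $a$ into the report formula and writing $b=\beta\theta+(b-\beta\theta)$ yields the stated expression for $R(\theta,b)$, where the constant simplifies as $\alpha(d+\alpha^2d/c)/(c+\alpha^2)=\alpha d/c$.
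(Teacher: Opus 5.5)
Your proof is correct, but it reaches the affine form by a different route than the paper.

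\textbf{The paper's route.} The paper uses the same rationalizing-anchor locus, $b_r(\theta)=r+\frac{a'(r)}{c}\bigl(a(r)-d\bigr)-\frac{a'(r)}{c}\theta$, but argues globally across reports. If two reports $r_1,r_2$ had $a'(r_1)\neq a'(r_2)$, the two lines in $\theta$ would cross at some $\theta\in\R$. By full use and uniqueness of the rationalizing anchor, the type at the crossing would then send both reports, which contradicts uniqueness of the best reply. So every smooth region carries the same slope, and continuity of $a$ makes $a$ globally affine. Bayes' rule is then used once, to solve for the coefficients.

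\textbf{Your route.} Your argument is the infinitesimal version of the same observation. The (even weak) second-order condition is $\partial_r b^\ast(r,\theta)\ge 0$. Full use makes it hold at every $\theta\in\R$, and since it is affine in $\theta$ with slope $-a''(r)/c$, it forces $a''=0$. Because this is local, it only gives piecewise affinity, so you need the gluing step.

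\textbf{The gluing step.} Your justification is sound. By \eqref{eq:posterior-main}, the posterior at $r\in I$ depends only on $a(r)$, $a'(r)$, $a''(r)$. So each piece must satisfy the same two fixed-point equations for $(\alpha,\kappa)$, and these have a unique solution with $\alpha>0$. No kink can survive.

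\textbf{Comparison.} The paper's finite-difference argument is shorter. It removes kinks without touching the prior or the noise, using only full use and unbounded state support. Your derivative version ties affinity explicitly to the Jacobian positivity behind Theorems~\ref{thm:continuous-main}--\ref{thm:ode-main}, at the cost of invoking the Gaussian posterior twice. The remaining steps coincide with the paper's: the projection leading to \eqref{eq:alpha-beta}, the sign argument for a unique positive root, and the intercept $\kappa=-\alpha^2 d/c$.
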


Because both the sender's payoff and the reporting cost are quadratic, the sender's first-order condition is linear in the state; for each report $r$, it pins down the break-even anchor $b_r(\theta)$ that makes type $\theta$ just willing to send $r$, and this break-even anchor is a linear function of $\theta$ with slope $-a'(r)/c$. If two reports had different sensitivities $a'(r_1)\neq a'(r_2)$, their break-even-anchor lines would have different slopes and cross at some state, contradicting the unique best response in  a regular equilibrium. The slope $a'(r)$ must therefore be the same for every report, and the regular equilibrium is affine. A direct calculation then produces the posterior variance
\begin{equation}\label{eq:posterior-var-beta}
\Var(\theta\mid r)=\frac{c(1-\beta\alpha)\sigma_\theta^2}{c+\alpha^2}.
\end{equation}

\begin{corollary}\label{cor1r}
The expected report-anchor distortion
\begin{equation}\label{eq:gauss-cost}
\E[D(r,b)]=\frac{\alpha^2}{2c}\bigl(\Var(\theta\mid r)+d^2\bigr),
\end{equation}
and the receiver and sender welfare are
\begin{equation}\label{eq:gauss-WS}
W^R=-\frac{1}{2}\Var(\theta\mid r),\qquad
W^S=-\frac{\Var(\theta\mid r)+d^2}{2}\Bigl(1+\frac{\alpha^2}{c}\Bigr).
\end{equation}
\end{corollary}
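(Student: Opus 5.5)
The plan is to compute everything directly from the closed-form equilibrium in Theorem~\ref{thm:rello-main}. The one substantive input is that the receiver's forecast error is orthogonal to the report.

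\emph{Step 1 (distortion as a function of the forecast error).} Under the quadratic specification, $U^S_1(a,\theta)=-(a-\theta-d)$ and $a'(r)=\alpha$. Since $M=\R$, the sender's interior first-order condition $c(r-b)=U^S_1(a(r),\theta)a'(r)$ gives, at the equilibrium report $r=R(\theta,b)$,
\[
r-b=\frac{\alpha}{c}\bigl(\theta+d-a(r)\bigr).
\]
As a check, substituting the report rule of Theorem~\ref{thm:rello-main} and $a(r)=\alpha r-\alpha^2 d/c$ verifies this identity. Hence
\[
D(r,b)=\frac{\alpha^2}{2c}\bigl(\theta-a(r)+d\bigr)^2 .
\]
So everything reduces to the first two moments of the forecast error $e\equiv\theta-a(r)$.

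\emph{Step 2 (moments of the forecast error).} The report rule $R(\theta,b)$ is affine in $(\theta,x)$, and $(\theta,x)$ are independent Gaussians, so $(\theta,r)$ is jointly Gaussian. Two facts follow.
\begin{itemize}
\item Because $U^R=-\tfrac12(a-\theta)^2$, receiver optimality in the regular equilibrium forces $a(r)=\E[\theta\mid r]$.
\item By joint normality, $\Var(\theta\mid r)$ does not depend on $r$ and equals the expression in \eqref{eq:posterior-var-beta}.
\end{itemize}
Consequently $\E[e\mid r]=0$, so $e$ has mean zero and is uncorrelated with any function of $r$, and $\E[e^2]=\E[\Var(\theta\mid r)]=\Var(\theta\mid r)$. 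Expanding the square then gives
\[
\E\bigl[(e+d)^2\bigr]=\Var(\theta\mid r)+2d\,\E[e]+d^2=\Var(\theta\mid r)+d^2 .
\]
Substituting into Step 1 yields \eqref{eq:gauss-cost}.

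\emph{Step 3 (welfare).} For the receiver, $W^R=-\tfrac12\E[(a(r)-\theta)^2]=-\tfrac12\Var(\theta\mid r)$. For the sender, the action payoff is $-\tfrac12\E[(e+d)^2]=-\tfrac12(\Var(\theta\mid r)+d^2)$ by Step 2. Subtracting the expected reporting cost from \eqref{eq:gauss-cost} gives
\[
W^S=-\frac{\Var(\theta\mid r)+d^2}{2}\Bigl(1+\frac{\alpha^2}{c}\Bigr),
\]
which is \eqref{eq:gauss-WS}.

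The only step needing care is Step 2: the cross term $2d\,\E[e]$ vanishes and $\E[e^2]$ equals the constant $\Var(\theta\mid r)$. I would justify this through the law of iterated expectations together with homoskedasticity of the Gaussian posterior, rather than by direct integration. The rest is algebra.
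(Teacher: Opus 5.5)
Your proposal is correct and follows the paper's own argument. You use the sender's first-order condition $r-b=\frac{\alpha}{c}(\theta+d-a(r))$ together with $a(r)=\E[\theta\mid r]$ to get $\E[(a-\theta-d)^2]=\Var(\theta\mid r)+d^2$, and from there the cost and both welfare formulas follow; your appeal to Gaussian homoskedasticity, which identifies $\E[e^2]$ with the constant $\Var(\theta\mid r)$, makes explicit a step the paper leaves implicit.
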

The receiver's payoff loss is half of the residual posterior variance. The sender's payoff has two pieces: an action-mismatch loss $\Var(\theta\mid r)(1+\alpha^2/c)/2$ that combines the residual posterior variance with the inflation needed to convey information, and a bias-induced loss $d^2(1+\alpha^2/c)/2$ paid whenever the sender's bliss point differs from the state. The equilibrium sensitivity $\alpha$, the posterior variance, and the receiver welfare $W^R$ depend on $(c,\sigma,\beta,\sigma_\theta)$ but not on the bias $d$ since it is undone in equilibrium.

\begin{corollary}\label{cor:PA-comp}
For $\beta>0$:
\begin{enumerate}
\item[(i)] The sensitivity $\alpha$ lies in $(0,1/\beta)$ and is strictly decreasing in the reporting cost $c$ and the anchor noise $\sigma$, and independent of the bias $d$. The product $\beta\alpha\in(0,1)$ is strictly increasing in $\beta$. The share of prior variance transmitted by the report is $1-\Var(\theta\mid r)/\sigma_\theta^2=(\alpha^2+c\beta\alpha)/(c+\alpha^2)$; this share is strictly decreasing in $c$ and $\sigma$, but it need not be monotone in $\beta$, because $\beta$ enters both through $\beta\alpha$ (rising in $\beta$) and through $\alpha$ itself (falling in $\beta$).
\item[(ii)] The residual posterior variance $\Var(\theta\mid r)$ is strictly increasing in $c$ and $\sigma$ and independent of $d$. \item[(iii)] The average message inflation $\E[r-b]=\alpha d/c$ is strictly decreasing in $c$ and  $\sigma$ and increasing in $d$.
 \end{enumerate}
\end{corollary}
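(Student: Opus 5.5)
The whole corollary comes down to reading comparative statics off the quadratic \eqref{eq:alpha-beta} by implicit differentiation, and then feeding the sign of $d\alpha$ into the closed forms \eqref{eq:posterior-var-beta} and the report rule of Theorem~\ref{thm:rello-main}.

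\textbf{Step 1: $\alpha$ is well defined and lies in $(0,1/\beta)$.} Write $g(\alpha)$ for the left side of \eqref{eq:alpha-beta}. Its leading coefficient $\beta\sigma_\theta^2$ is positive and $g(0)=-c\beta\sigma_\theta^2<0$, so there is exactly one positive root, and $g_\alpha>0$ at that root. A direct evaluation gives $g(1/\beta)=c\sigma^2/\beta>0$, which places the root in $(0,1/\beta)$.

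\textbf{Step 2: a factorization of the root equation.} Regrouping \eqref{eq:alpha-beta} gives
\[
\sigma_\theta^2(1-\beta\alpha)(\alpha+c\beta)=c\sigma^2\alpha ,
\]
because $(\beta\alpha-1)(\alpha+c\beta)=\beta\alpha^2+c\beta^2\alpha-\alpha-c\beta$. This identity is the main sign-determining tool in what follows.

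\textbf{Step 3: monotonicity of $\alpha$ in $c$, $\sigma$ and $d$.}
\begin{itemize}
\item For $c$: $g_c=\alpha(\sigma^2+\beta^2\sigma_\theta^2)-\beta\sigma_\theta^2$. Substituting the root equation turns this into $g_c=\sigma_\theta^2\alpha(1-\beta\alpha)/c$, which is positive by Step 1. Hence $d\alpha/dc=-g_c/g_\alpha<0$.
\item For $\sigma$: $g_\sigma=2c\sigma\alpha>0$, so $\alpha$ is strictly decreasing in $\sigma$.
\item For $d$: the bias does not appear in $g$ at all, so $\alpha$ does not depend on $d$.
\end{itemize}

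\textbf{Step 4: $\beta\alpha$ is increasing in $\beta$.} Set $\gamma=\beta\alpha$ and multiply $g$ by $\beta$. This gives the equation
\[
\tilde g(\gamma,\beta)=\sigma_\theta^2\gamma^2+(c\sigma^2-\sigma_\theta^2)\gamma+c\beta^2\sigma_\theta^2(\gamma-1)=0 .
\]
As in Step 1, $\tilde g(0,\beta)<0$ and the leading coefficient is positive, so $\tilde g_\gamma>0$ at the root. Also $\tilde g_\beta=2c\beta\sigma_\theta^2(\gamma-1)<0$ because $\gamma<1$. Therefore $d\gamma/d\beta>0$.

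\textbf{Step 5: the transmitted share, and part (ii).} The share formula follows directly from \eqref{eq:posterior-var-beta}. For monotonicity, write
\[
\Var(\theta\mid r)=\sigma_\theta^2\,(1-\beta\alpha)\cdot\frac{c}{c+\alpha^2}.
\]
Raising $c$ increases $c/(c+\alpha^2)$ directly. It also lowers $\alpha$ (Step 3), which raises both $1-\beta\alpha$ and $c/(c+\alpha^2)$. So every factor rises and $\Var(\theta\mid r)$ is strictly increasing in $c$. For $\sigma$ the same argument applies, now through $\alpha$ alone. The variance is independent of $d$ because $\alpha$ is (Step 3).

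\textbf{Main obstacle: non-monotonicity in $\beta$.} I expect the claim that the share need not be monotone in $\beta$ to be the hardest part. Both limits send the share toward favorable values: as $\beta\downarrow0$ it tends to $\alpha^2/(c+\alpha^2)$ with $\alpha\to\sigma_\theta^2/(c\sigma^2)$, and as $\beta\to\infty$ it tends to $1$. So the claim cannot be established by comparing endpoints. I would first try to locate an interior dip by computing the share numerically on a grid of $\beta$ values, for instance with $c$ large relative to $\sigma_\theta^2/\sigma^2$. Alternatively, I would differentiate the share in $\beta$ using Steps 3--4 and exhibit a parameter region where the negative term, coming from the fall in $\alpha$, dominates. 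Since the statement only says the share need not be monotone, a single explicit parameter instance suffices.

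\textbf{Step 6: part (iii).} Take expectations in the report rule of Theorem~\ref{thm:rello-main}. Because $\theta$ and $b-\beta\theta$ both have mean zero, $\E[r-b]=\alpha d/c$. This quantity is strictly decreasing in $c$, since $\alpha$ falls and $c$ also sits in the denominator. It is decreasing in $\sigma$ through $\alpha$. It is increasing in $d$ because it is linear in $d$ with slope $\alpha/c>0$.
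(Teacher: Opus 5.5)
Steps 1--6 are correct and follow the paper's proof essentially line by line. Both bracket the root by $g(0)<0<g(1/\beta)$. Both use the identity that turns $g_c$ into $\sigma_\theta^2\alpha(1-\beta\alpha)/c$, and both use $g_\sigma=2c\sigma\alpha$. Your substitution $\gamma=\beta\alpha$ is the paper's computation $\alpha F_\alpha-\beta F_\beta=2c\beta\sigma_\theta^2(1-\beta\alpha)$ in other coordinates, since $\tilde g(\gamma,\beta)=\beta g(\gamma/\beta)$. Your factor-by-factor argument for (ii) is equivalent to the paper's explicit $V_c$ and $V_\alpha$. (Strictness in (iii) with respect to $c$ and $\sigma$ needs $d>0$; the paper omits this caveat too.)

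The gap is the non-monotonicity-in-$\beta$ clause. The paper's proof does not address it at all, and your plan rests on a wrong limit. As $\beta\downarrow0$, the positive root of $\tilde g(\cdot,\beta)$ converges to the positive root of $\sigma_\theta^2\gamma^2+(c\sigma^2-\sigma_\theta^2)\gamma=0$, not to anything like $\alpha\to\sigma_\theta^2/(c\sigma^2)$.
\begin{itemize}
\item If $c\sigma^2<\sigma_\theta^2$, then $\gamma\to1-c\sigma^2/\sigma_\theta^2$, so $\alpha=\gamma/\beta\to\infty$ and the share tends to $1$.
\item If $c\sigma^2>\sigma_\theta^2$, then $\alpha\to0$ and the share tends to $0$.
\end{itemize}
As $\beta\to\infty$, $\gamma\to1$ and $\alpha\to0$, so the share tends to $1$. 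For every finite $\beta$, $1-\mbox{share}=c(1-\beta\alpha)/(c+\alpha^2)>0$.

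So in the regime $c\sigma^2<\sigma_\theta^2$ the share approaches $1$ at both ends while staying strictly below $1$, and therefore cannot be monotone. This is exactly the endpoint comparison you dismissed. Conversely, the regime you propose to search ($c$ large relative to $\sigma_\theta^2/\sigma^2$) is the one where the share climbs from $0$ toward $1$. There it looks monotone: with $c=\sigma_\theta^2=1$ and $\sigma^2=2$, the shares are about $0.27,\,0.50,\,0.73$ at $\beta=0.5,\,1,\,2$. Your search would likely come up empty. A concrete witness is $c=\sigma_\theta^2=1$, $\sigma^2=1/2$, which gives shares of about $0.98,\,0.86,\,0.995$ at $\beta=0.1,\,1,\,10$.
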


The reporting cost and the anchor noise act through the same channel for the receiver: stronger compliance ($c$ large) and noisier anchors ($\sigma$ large) both shrink the equilibrium $\alpha$, which lowers $\beta\alpha$ and the transmitted share $(\alpha^2+c\beta\alpha)/(c+\alpha^2)$ and leaves the receiver with more residual posterior variance and lower welfare. The effect of the anchor informativeness $\beta$ on the transmitted share is non-monotone in general: a higher $\beta$ raises the composite $\beta\alpha$ but also lowers $\alpha$ itself, and in low-cost or low-$\sigma$ regions the second effect can dominate. The bias $d$ does not enter the informativeness channel: it shifts the constant $\alpha d/c$ in the report rule, the matching intercept $-\alpha^2 d/c$ in the receiver's action rule, and the sender's equilibrium welfare loss.

\begin{corollary}\label{cor:PA-comp_s}
Sender welfare can be written as
\begin{equation}\label{eq:WS-decomposition}
W^S = -\tfrac12(1-\beta\alpha)\sigma_\theta^2 - \tfrac12 d^2 - \tfrac{d^2\alpha^2}{2c},
\end{equation}
There exist two misalignment thresholds $|\underline{d}|<|\overline{d}|$ such that:
\begin{itemize}
\item[(i)] if $|d|<|\underline{d}|$, $W^S$ is strictly decreasing in $c$ and in $\sigma$. The sender prefers a low reporting cost and a low anchor noise, i.e., the comparative statics are aligned with the receiver. 
\item[(ii)] if $|\underline{d}|<|d|<|\overline{d}|$, $W^S$ is strictly increasing in $c$ and strictly decreasing in $\sigma$, i.e., receiver and sender benefit from low anchor noise but their preference are different with respect to the cost $c$;
\item[(iii)] if $|d|>|\overline{d}|$, $W^S$ is strictly increasing in $c$ and in $\sigma$, implying that sender and receiver have opposite preferences in these parameters.
\end{itemize}
\end{corollary}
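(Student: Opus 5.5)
The plan is to obtain the decomposition \eqref{eq:WS-decomposition} algebraically from Corollary~\ref{cor1r}. Then I sign the derivatives of $W^S$ with respect to $c$ and $\sigma$ using the monotonicity of $\alpha$ from Corollary~\ref{cor:PA-comp}, and read off the two thresholds as the values of $d^2$ at which each derivative changes sign.

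\emph{Decomposition.} Write $V=\Var(\theta\mid r)$. By \eqref{eq:posterior-var-beta},
\[
V\Bigl(1+\tfrac{\alpha^2}{c}\Bigr)=V\,\frac{c+\alpha^2}{c}=(1-\beta\alpha)\sigma_\theta^2 .
\]
Substituting this into \eqref{eq:gauss-WS} and expanding the $d^2$ term gives \eqref{eq:WS-decomposition} directly. This step is routine.

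\emph{Comparative statics.} Let $\alpha_c=\partial\alpha/\partial c$ and $\alpha_\sigma=\partial\alpha/\partial\sigma$. Corollary~\ref{cor:PA-comp}(i) gives $\alpha_c<0$ and $\alpha_\sigma<0$; I would verify strictness by implicit differentiation of \eqref{eq:alpha-beta} at its positive root. Differentiating \eqref{eq:WS-decomposition} gives
\[
\partial_\sigma W^S=\tfrac12\beta\sigma_\theta^2\alpha_\sigma-\frac{d^2\alpha\alpha_\sigma}{c},
\qquad
\partial_c W^S=\tfrac12\beta\sigma_\theta^2\alpha_c-\frac{d^2}{2}\Bigl(\frac{2\alpha\alpha_c}{c}-\frac{\alpha^2}{c^2}\Bigr).
\]
In each expression the first term is strictly negative, since $\beta>0$. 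The bracketed coefficient of $d^2$ is strictly negative in both cases, so the $d^2$ term is strictly positive. Hence each derivative has the form $-A+d^2B$ with $A,B>0$ and changes sign exactly once in $d^2$.

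\emph{Thresholds.} The two cutoffs are
\[
\overline d^{\,2}=\frac{c\beta\sigma_\theta^2}{2\alpha}\quad(\text{from }\sigma),
\qquad
\underline d^{\,2}=\frac{c\beta\sigma_\theta^2}{2\alpha}\cdot\frac{\alpha_c}{\alpha_c-\alpha/(2c)}\quad(\text{from }c).
\]
Because $\alpha_c<0$ and $\alpha/(2c)>0$, the factor $\alpha_c/(\alpha_c-\alpha/(2c))$ lies strictly in $(0,1)$. This gives $|\underline d|<|\overline d|$.

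The three regimes then follow by comparing $d^2$ with the two cutoffs:
\begin{itemize}
\item[(i)] If $d^2<\underline d^{\,2}$, both derivatives are negative.
\item[(ii)] If $\underline d^{\,2}<d^2<\overline d^{\,2}$, the $c$-derivative is positive and the $\sigma$-derivative is negative.
\item[(iii)] If $d^2>\overline d^{\,2}$, both derivatives are positive.
\end{itemize}

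The main obstacle is confirming the strict signs $\alpha_c<0$ and $\alpha_\sigma<0$ and interpreting the thresholds correctly. Both cutoffs depend on $(c,\sigma,\beta,\sigma_\theta)$ through $\alpha$. The statement is therefore a pointwise statement about local monotonicity at the given parameter values, not a global statement with fixed constants; I would state it explicitly in that form.
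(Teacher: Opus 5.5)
Your proposal is correct and follows the paper's proof. The paper uses the same substitution $V(c+\alpha^2)=c(1-\beta\alpha)\sigma_\theta^2$ for the decomposition and the same derivatives $\partial_\sigma W^S=\alpha_\sigma[\beta\sigma_\theta^2/2-d^2\alpha/c]$ and $\partial_c W^S=\alpha_c[\beta\sigma_\theta^2/2-d^2\alpha/c]+d^2\alpha^2/(2c^2)$, and its $\overline d^{\,2}=c\beta\sigma_\theta^2/(2\alpha)$ is yours. The only difference is cosmetic: the paper computes $d\alpha/dc=-\alpha^2(1-\beta\alpha)/(c\beta(c+\alpha^2))$ explicitly, which turns your $\underline d^{\,2}$ into the paper's closed form with ratio $2\alpha(1-\beta\alpha)/(2\alpha+\beta(c-\alpha^2))$, whereas you obtain $\underline d^{\,2}<\overline d^{\,2}$ from $\alpha_c<0$ alone, together with the independence of $\alpha$ from $d$ in Corollary~\ref{cor:PA-comp}.
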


Two effects of stronger compliance or noisier anchors pull in opposite directions through $\alpha$. A lower $\alpha$ raises the term $(1-\beta\alpha)\sigma_\theta^2/2$ (the residual variance grows) but reduces the bias-induced inflation cost $d^2\alpha^2/(2c)$. The first dominates at low bias and the second at high bias. The thresholds in $c$ and $\sigma$ are not the same because $c$ also enters the inflation cost directly through the $1/c$ factor; the bias level needed to flip the sign of $\partial W^S/\partial c$ is therefore strictly lower than the level needed to flip the sign of $\partial W^S/\partial \sigma$. At intermediate misalignment the sender prefers stricter reporting but less volatile anchors, so her interest in $c$ is opposite to the receiver's while her interest in $\sigma$ remains aligned with it.

Finally, consider the case of uninformative anchors, i.e., \(\beta=0\), so that \(b=\sigma x\) is independent of the state. In the affine regular class, the equilibrium condition becomes
\begin{equation}\label{eq:uninf-degen}
(c\sigma^2-\sigma_\theta^2)\alpha=0.
\end{equation}
Away from the condition \(c\sigma^2=\sigma_\theta^2\), the only affine regular equilibrium has \(\alpha=0\) and is uninformative. When this condition is met, however, every \(\alpha>0\) is an affine regular equilibrium, so that there is a family of equilibria indexed by $\alpha$:\footnote{The calibration $c\sigma^2=\sigma_\theta^2$ is non-generic in $(c,\sigma)$ but reachable by design: $\sigma=\sigma_\theta/\sqrt c$ places the model on the knife edge for any $c>0$, and symmetrically for $c$ given $\sigma$.}
\[
a(r)=\alpha r-\frac{\alpha^2 d}{c},
\qquad
R(\theta,b)=
\frac{\alpha}{c+\alpha^2}\theta
+
\frac{c}{c+\alpha^2}b
+
\frac{\alpha d}{c}.
\]
The posterior variance and welfare are
\begin{align}
\Var(\theta\mid r)
=
\frac{c\sigma_\theta^2}{c+\alpha^2},\qquad
W^R
=
-\frac{c\sigma_\theta^2}{2(c+\alpha^2)}
,\qquad
W^S
=
-\frac{\sigma_\theta^2}{2}
-\frac{d^2}{2}
-\frac{d^2\alpha^2}{2c}.
\end{align}

 The sender welfare is bounded above by $-\sigma_\theta^2/2-d^2/2$, the welfare under babbling, so for any positive bias the sender always prefers babbling (and thus also prefers any informative cheap-talk equilibrium). The receiver prefers higher $\alpha^2/c$, which drives the residual posterior variance toward zero, since $\Var(\theta\mid r)=\sigma_\theta^2/(1+\alpha^2/c)$. The sender's ex-ante payoff is strictly decreasing in $\alpha^2/c$. When the calibration $c\sigma^2=\sigma_\theta^2$ is maintained, the equilibrium $\alpha$ is a free selection along the family; sending $\alpha\uparrow\infty$ (for any fixed $c$, or with $c\downarrow 0$ along the calibration $\sigma=\sigma_\theta/\sqrt c$) drives $\alpha^2/c\to\infty$ and the receiver toward the full-information welfare $W^R=0$. Hence a designer who can select within the family of informative regular equilibria can approximate full revelation, even though the anchor itself carries no information about the state. The sender's welfare diverges along this path.

Figure~\ref{fig:pareto} illustrates the frontier at $d=0.25$, $\sigma_\theta=1$, $c=1$ (so the condition for informative message fixes $\sigma=1$). Babbling corresponds to $\alpha=0$, the level that maximizes the joint surplus is $\alpha^\star=\sqrt{3}\approx1.73$, and the receiver welfare improves along the frontier as $\alpha$ grows. The most informative Crawford-Sobel partition with bias $d$  lies strictly above the frontier in sender welfare but is not always preferred by the receiver.

\IfFileExists{figure_pareto_frontier.pdf}{%
\begin{figure}[t]
\centering
\includegraphics[width=0.7\textwidth]{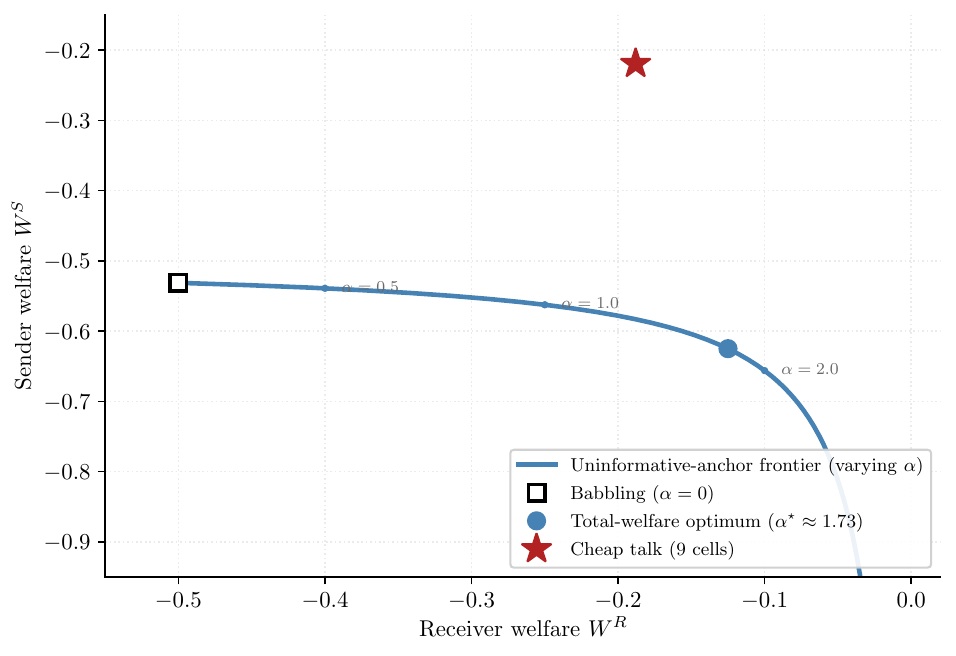}
\caption{Pareto frontier of uninformative anchors ($\beta=0$, $c\sigma^2=\sigma_\theta^2$) at $d=0.25$, $\sigma_\theta=1$, $c=1$. Babbling is at $\alpha=0$; the total-welfare-maximizing equilibrium is at $\alpha^\star=\sqrt{3}\approx1.73$. The most informative cheap-talk partition with bias $d$ is shown for comparison.}\label{fig:pareto}
\end{figure}
}{}

\section{Hard and Soft Communication}\label{sec:hybrid}

\subsection{Hybrid equilibrium}

Many reporting systems combine a continuous, anchor-disciplined report with a coarse qualitative assessment. Analysts pair price targets with buy/hold/sell recommendations, auditors pair numerical risk scores with verbal evaluations, and review platforms display star ratings alongside numerical scores. The qualitative message is cheap talk; the number is costly to move away from the sender's private anchor. The two channels are complementary: the label sorts states into coarse categories, and the anchored number refines information within the selected category.

The hybrid class studied here uses one-dimensional labels monotonic in the state, with the report inside label $k$ interpreted under the regular anchored continuation of the prior truncated to $[\theta_{k-1},\theta_k]$. The qualitative channel is \textit{pre-anchor}: the sender chooses the label after observing $\theta$ but before observing $b$, or, equivalently, the label is institutionally constrained not to condition on the private anchor.

Pre-anchor timing has two analytical consequences. A label chosen after observing $(\theta,b)$ would have to satisfy label incentive constraints pointwise in $b$, ruling out one-dimensional state-cell or anchor-cell partitions and forcing non-monotonic or two-dimensional labels.\footnote{Proofs of these impossibility statements are given in the supplementary appendix.} Pre-anchor timing is also what allows regularity within each label without forcing it across labels, since continuous within-label action rules can differ across labels and that difference is what makes the qualitative channel informative.\footnote{With message space $\{1,\ldots,N\}\times\R$, informative labels generally imply $a_k(r)\ne a_\ell(r)$ for $k\ne\ell$, so the global action rule is discontinuous across labels at any fixed numerical report. Forcing $a_k(r)=a_\ell(r)$ would collapse the qualitative channel.} The timing also matches the institutional ordering in many applications: an analyst issues a recommendation before the justifying details of the underlying valuation are pinned down, and a manager commits to a categorical assessment before observing the precise estimate that supports the public report.

\begin{definition}\label{def:cell-hybrid-main}
A \emph{regular full-label state-cell hybrid equilibrium with $N$ labels} consists of ordered cutoffs $\underline\theta=\theta_0<\theta_1<\cdots<\theta_N=\overline\theta$, cells $I_k=[\theta_{k-1},\theta_k)$ for $k<N$ and $I_N=[\theta_{N-1},\theta_N]$,
regular anchored continuations $(R_k,a_k)$ for the priors $f$ truncated to $I_k$, and a state-cell label rule $\chi(\theta)=k$ for $\theta\in I_k$. For a candidate label $k$, define the sender's ex ante label value by
\[
\bar V_k(\theta) = \int_{\R}\sup_{r\in\R}\Big\{U^S(a_k(r),\theta)-D(r,b)\Big\}q(db\mid\theta).
\]
The state-cell label rule is incentive compatible if $
\bar V_k(\theta)\ge \bar V_\ell(\theta)$ for every $\theta\in I_k$ and every $\ell\in\{1,\ldots,N\}$.
After label $k$ and anchor realization $b$, the sender uses the within-cell optimal report $R_k(\theta,b)$. The receiver's action after $(k,r)$ is $a_k(r)$, and receiver beliefs after $(k,r)$ are the Bayes posterior generated by the regular within-cell continuation on $I_k$.
\end{definition}

The next result shows how the qualitative label changes the interpretation of a numerical report. Throughout this subsection, payoffs satisfy Assumptions~\ref{ass:receiver-main}--\ref{ass:cost-main} and the additive-anchor environment of Assumption~\ref{ass:PB-main} is in force; the message space and the per-state anchor support are taken to be $\R$, as in Section~\ref{sec:packageB}.

\begin{proposition}\label{prop:hybrid-local-language}
In any regular full-label state-cell hybrid equilibrium, for every report $r$ and labels $k<\ell$, $a_k(r)<a_\ell(r)$. Moreover, for every label $k$, every state $\theta\in I_k$, and every numerical report $r\in\R$, there exists an anchor realization $b$ such that $R_k(\theta,b)=r$.\footnote{Under the pre-anchor label timing, the anchor cannot induce a type to switch labels; the strict ranking follows from strict first-order stochastic dominance of the posterior conditional on the higher label, applied together with the full-use property within each cell. When the message space is a compact interval or the anchor support is bounded, the strict inequality holds on the interior of the report set, with equality possible at endpoints where full report support fails.}
\end{proposition}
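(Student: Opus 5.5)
I will prove the two claims in reverse order, because the full-use statement is what makes the strict ranking well defined at every report.

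\emph{Step 1: full use within each cell.} Fix a label $k$. By Definition~\ref{def:cell-hybrid-main}, $(R_k,a_k)$ is a regular anchored equilibrium for the prior $f$ truncated to $I_k$. The truncated state space $\bar I_k=[\theta_{k-1},\theta_k]$ is compact, so receiver best replies lie in $[a^R(\theta_{k-1}),a^R(\theta_k)]\subseteq A$. The message space is $\R$, and under the additive anchor of Assumption~\ref{ass:PB-main} each state has anchor support $B(\theta)=\R$. The hypotheses of Corollary~\ref{cor:fulluse-unbounded} therefore hold for the truncated game. That corollary gives $R_k(\theta,\R)=\R$ for every $\theta\in I_k$, which is the second claim. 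Two points need checking here. First, the bounded-distortion argument behind Proposition~\ref{prop:bounded-main} uses only the boundedness of the action range, which survives truncation. Second, pre-anchor timing guarantees that the set of types reaching label $k$ is exactly $I_k\times\R$; the label does not depend on $b$, so there is no selection on the anchor.

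\emph{Step 2: the posterior after $(k,r)$.} Because every state in $I_k$ uses $r$, Theorem~\ref{thm:continuous-main} applies in the truncated game on each smooth region of $a_k$. It gives a posterior density on $I_k$ that is strictly positive on all of $I_k$, namely $\pi_k(\theta\mid r)\propto f(\theta)q(b_k^\ast(r,\theta)\mid\theta)\partial_r b_k^\ast(r,\theta)$. At the finitely many kink points, I would argue by continuity of $a_k$ and take limits of the receiver first-order condition.

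\emph{Step 3: the ranking.} For $k<\ell$, the posterior $\pi_k(\cdot\mid r)$ is supported on $[\theta_{k-1},\theta_k]$ and has positive mass below $\theta_k$. The posterior $\pi_\ell(\cdot\mid r)$ is supported on $[\theta_{\ell-1},\theta_\ell]$, which lies weakly above $\theta_k$. So $\pi_\ell$ strictly first-order stochastically dominates $\pi_k$; the two supports overlap in at most the single point $\theta_k$, which has measure zero. Next, $U^R_{12}>0$ from Assumption~\ref{ass:receiver-main} makes $U^R_1(a,\theta)$ strictly increasing in $\theta$. Hence at $a=a_k(r)$,
\[
\int U^R_1(a_k(r),\theta)\,\pi_\ell(\theta\mid r)\,d\theta>\int U^R_1(a_k(r),\theta)\,\pi_k(\theta\mid r)\,d\theta=0.
\]
The right-hand equality is the receiver's first-order condition for label $k$. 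Since $U^R$ is strictly concave in $a$, the maximizer under $\pi_\ell$ satisfies $a_\ell(r)>a_k(r)$. A cleaner route is also available: $a_k(r)<a^R(\theta_k)\le a_\ell(r)$, because a nondegenerate posterior on $[\theta_{k-1},\theta_k]$ has its best reply strictly below $a^R(\theta_k)$, and one on $[\theta_{\ell-1},\theta_\ell]$ has its best reply at or above $a^R(\theta_{\ell-1})\ge a^R(\theta_k)$. This route uses only the support information from Step 1, not the full density formula.

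\emph{Main obstacle.} The delicate step is Step 1. Corollary~\ref{cor:fulluse-unbounded} is stated for the original prior, so I must confirm that regularity and the distortion bound carry over to truncated priors and to reports at kinks of $a_k$. I would do this by rerunning the fallback argument: reporting $b$ yields a payoff of at least $U^S(a_k(b),\theta)-D(b,b)$, and gains are bounded by the oscillation of $U^S$ on $A\times\Theta$. By Assumption~\ref{ass:cost-main}(iv), this gives $|R_k(\theta,b)-b|\le K(L)$. Combined with continuity from Proposition~\ref{prop:bounded-main}, the reports go to $\pm\infty$ as $b\to\pm\infty$, and the intermediate value theorem yields surjectivity onto $\R$.
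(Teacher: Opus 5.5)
Your proposal is correct and follows essentially the same route as the paper: apply Corollary~\ref{cor:fulluse-unbounded} inside each truncated cell to get full use, so the posterior after $(k,r)$ has support equal to the whole cell, and then use the ordering of cells, strict first-order stochastic dominance, and strict concavity with $U^R_{12}>0$ to conclude $a_k(r)<a_\ell(r)$. Your extra checks go beyond what the paper makes explicit, namely that the distortion bound survives truncation and the bound $a_k(r)<a^R(\theta_k)\le a_\ell(r)$. At kink points, strictness comes from combining that bound at nearby smooth reports with strict monotonicity of $a_k$, not from taking limits of the first-order condition alone, since limits only yield weak inequalities.
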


The same number carries different meanings under different labels. A price target attached to ``buy'' is not interpreted like the same target attached to ``hold'' because the label shifts the support of the receiver's posterior and the anchored number refines beliefs within that support.

\subsection{Hybrid versus anchor-only}

A pre-anchor partitional message is uninformative if the misalignment is high,
since only babbling can be sustained; then, a hybrid message can add no
information. The comparison below therefore considers the interesting case of small misalignment using a perturbation of the sender's action payoff. For a given value of the
perturbation parameter, let 
$a^S(\theta;d)=\arg\max_{a\in A} U^S(a,\theta;d),$ satisfy the following.

\begin{assumption}\label{ass:primitive-smallbias}
The sender-payoff restrictions in
Assumption~\ref{ass:receiver-main} hold for $U^S(\cdot,\cdot;d)$ uniformly in
$d$. Further: (i) $(a,\theta,d)\mapsto U^S(a,\theta;d)$ is $C^4$ on
$A\times\Theta\times[0,\bar d_0]$; (ii) At $d=0$, sender and receiver ideal actions coincide: $U^S_1(a^R(\theta),\theta;0)=0$
for all $\theta\in\Theta$; (iii) There exists $\hat\delta\in C^2(\Theta)$, with
$\hat\delta(\theta)>0$ for all $\theta$, such that
$\partial_d U^S_1(a^R(\theta),\theta;d)|_{d=0}
=
-U^S_{11}(a^R(\theta),\theta;0)\hat\delta(\theta).
$
\end{assumption}

Under Assumption~\ref{ass:primitive-smallbias}, $a^S(\theta;d)=a^R(\theta)+d\hat\delta(\theta)+O(d^2)$ is the first-order displacement of the sender's ideal action from the receiver's ideal action. When the ex ante label value depends on the cutoff vector $\mathbf t=(t_1,\ldots,t_{N-1})$ and on $d$, it is written $\bar V_k(\theta;\mathbf t,d)$. Let $W_H^i(d)$ denote the surplus of the hybrid equilibrium and $W_A^i(d)$ the surplus of the anchor-only equilibrium.

\begin{theorem}\label{thm:hybrid-anchor}
Suppose Assumptions~\ref{ass:primitive-smallbias} and~\ref{ass:lowbias-symmetric} hold, the latter stated in Section~\ref{sec:hybrid-vs-cheaptalk}. There exists $\bar d>0$ such that $W_H^i(d)>W_A^i(d)$ for all $d\in(0,\bar d)$ and $i\in\{R,S\}$.
\end{theorem}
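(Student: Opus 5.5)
The plan is a perturbation argument in the bias $d$ around the aligned benchmark $d=0$. There the state-cell cheap-talk labels impose no incentive constraint at all, so any cutoff vector is incentive compatible. I will build the hybrid equilibrium for small $d$ as a continuation of a fixed interior cutoff vector $\mathbf t^0$ with $N\ge2$ labels. The cutoffs are pinned down by the indifference conditions $\bar V_k(t_k;\mathbf t,d)=\bar V_{k+1}(t_k;\mathbf t,d)$, and I will solve these by the implicit function theorem. The symmetric low-bias structure of Assumption~\ref{ass:lowbias-symmetric} supplies the nondegeneracy: the Jacobian of the indifference system in $\mathbf t$ should be invertible at $d=0$ once the within-label anchored continuations are selected smoothly. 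Single crossing of $\bar V_k-\bar V_{k+1}$ in $\theta$, inherited from $U^S_{12}>0$ and the strict ranking $a_k(r)<a_{k+1}(r)$ of Proposition~\ref{prop:hybrid-local-language}, then upgrades local indifference at cutoffs to global incentive compatibility. This step yields cutoffs $\mathbf t(d)=\mathbf t^0+O(d)$ for all sufficiently small $d$.

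Next I compare the surpluses. At $d=0$ I will show a strict gap $W_H^i(0)>W_A^i(0)$ for both players. At zero bias, sender and receiver payoffs over actions are aligned to first order around $a^R$, so it suffices to compare informativeness, and the sender's reporting cost must also be tracked. Concretely, I will write each player's surplus as a common action-mismatch term plus, for the sender, the expected cost $\E[D(R,b)]$.

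For the receiver, the hybrid equilibrium induces a posterior that is a refinement. It conditions on the cell $I_k$, and within the cell it applies the same anchored mechanism to the truncated prior. The within-cell anchored information is no worse, in the Blackwell sense, than what the anchor-only equilibrium extracts from the full prior. I expect this to need only the Bayes characterization of Theorem~\ref{thm:continuous-main} applied cell by cell. The label adds strictly positive information because the cells are nondegenerate, which gives $W_H^R(0)>W_A^R(0)$.

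For the sender at $d=0$, the marginal incentive $U^S_1(a(r),\theta)$ is $O(\text{residual mismatch})$. Distortions $R-b$ and hence reporting costs are therefore small and of the same order in both regimes, while the sender's action payoff inherits the receiver's ranking through alignment. I will make this precise by a second-order expansion of $U^S$ around $a^R(\theta)$, using Assumption~\ref{ass:primitive-smallbias}(ii).

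Finally, continuity in $d$ completes the argument. Both $W_H^i(d)$ and $W_A^i(d)$ are continuous at $d=0$: the anchor-only continuation depends smoothly on $d$ through the ODE of Theorem~\ref{thm:ode-main}, and the hybrid does so through $\mathbf t(d)$ and the within-cell ODEs. A strict inequality at $d=0$ then persists on some $(0,\bar d)$.

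I expect two obstacles. The first, and main one, is the sender comparison at $d=0$. The sender pays reporting costs in both regimes, and the hybrid's finer within-cell actions could in principle raise the slope $a_k'$ and hence the inflation cost. If the direct bound fails, I will first try to show that at $d=0$ the cost term is of higher order in the residual variance, since the distortion is proportional to $U^S_1$, which vanishes at full information. That would leave the action-mismatch comparison dominant. The second delicate point is the uniform-in-$d$ continuity of the selected regular continuations, for which I would rely on the smooth-branch selection convention together with Assumption~\ref{ass:lowbias-symmetric}.
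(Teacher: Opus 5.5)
\textbf{There is a genuine gap: the strict inequality $W_H^i(0)>W_A^i(0)$ for a hybrid with a fixed number $N$ of labels is never established.}

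Your Blackwell argument does not go through. The within-label continuation $(R_k,a_k)$ solves the Bayes equation of Theorem~\ref{thm:ode-main} for the \emph{truncated} prior. So $R_k(\theta,b)$ is a different function of $(\theta,b)$ from the anchor-only rule $R(\theta,b)$, and nothing makes $R(\theta,b)$ a garbling of $(k,R_k(\theta,b))$.

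Part of the anchor-only report's informativeness comes from the strategic distortion itself. In Section~\ref{sec:benchmark} the report is equivalent to observing $\theta+\frac{c\sigma}{c\beta+\alpha}x$, which is sharper than the raw anchor $\theta+\frac{\sigma}{\beta}x$. Confining the receiver's action to a cell changes, and typically dampens, that distortion. With a coarse fixed partition you are therefore comparing two losses of order one, and nothing in the assumptions signs the difference.

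The sender step fails for the same reason. Residuals are of order one, so an expansion around $a^R(\theta)$ is unavailable. Actions are receiver-optimal and $\ell_S$ need not equal $\ell_R$, so a receiver ranking does not transfer to the sender.

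Two smaller problems in the first step. At $d=0$ not every cutoff vector is incentive compatible, since the boundary type must still be indifferent between adjacent labels; your $\mathbf t^0$ must solve the $d=0$ indifference system. The invertibility of its Jacobian is asserted rather than shown.

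\textbf{The missing idea is to let the labels become fine rather than fixing $N$, and to bound the anchor-only loss below uniformly.}

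\emph{Hybrid side.} The hybrid in the theorem is the most informative one. By the boundary recursion in the proof of Theorem~\ref{thm:hybrid-ranking}, its label widths are $O(\sqrt d)$. Every within-label action is then within $O(\sqrt d)$ of the state, and reporting costs are $O(d^3)$. Hence the hybrid loss relative to first best vanishes as $d\downarrow0$.

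\emph{Anchor-only side.} Reporting $b$ is always feasible and $U^S$ is bounded on $A\times\Theta$. This gives $|R(\theta,b)-b|\le K$, with $K$ independent of $d$ and of the selected equilibrium. Monotonicity of $a$ then places the induced action in $[a(b-K),a(b+K)]$. Because the anchor noise has full support, a single nondecreasing $a$ cannot keep both $a^R(\underline\theta)$ and $a^R(\overline\theta)$ inside such intervals at anchors roughly $2K$ apart. Strong concavity turns this into a lower bound on the receiver's anchor-only loss: $\frac{\eta}{2}$ times the infimum, over nondecreasing $a$, of the expected squared distance from $a^R(\theta)$ to $[a(b-K),a(b+K)]$. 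That infimum is a strictly positive constant. Uniform convergence $a^S(\cdot;d)\to a^R$ gives a constant multiple of the same floor for the sender, and reporting costs only add to the sender's loss.

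This route needs neither an implicit-function continuation of cutoffs nor continuity of $W_A^i$ in $d$. Your continuation scheme could only be rescued by choosing $N$ large enough that the hybrid's $d=0$ loss falls below this same uniform floor, which is the paper's comparison in another form.
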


The pre-anchor partitional message breaks the state space into labels of vanishing width and the anchored numerical report then refines information within each label. The resulting hybrid loss vanishes as $d\downarrow0$. In a regular anchor-only equilibrium at exact alignment, by contrast, the receiver interprets a single numerical report using the full prior support. With full-support anchor noise, this leaves a non-degenerate posterior and hence a strictly positive exact-alignment loss, implying that the added pre-anchor label is strictly valuable to both players when misalignment is sufficiently low.

\IfFileExists{figure_welfare_vs_d.pdf}{%

\begin{figure}[t]
\centering
\includegraphics[width=\textwidth]{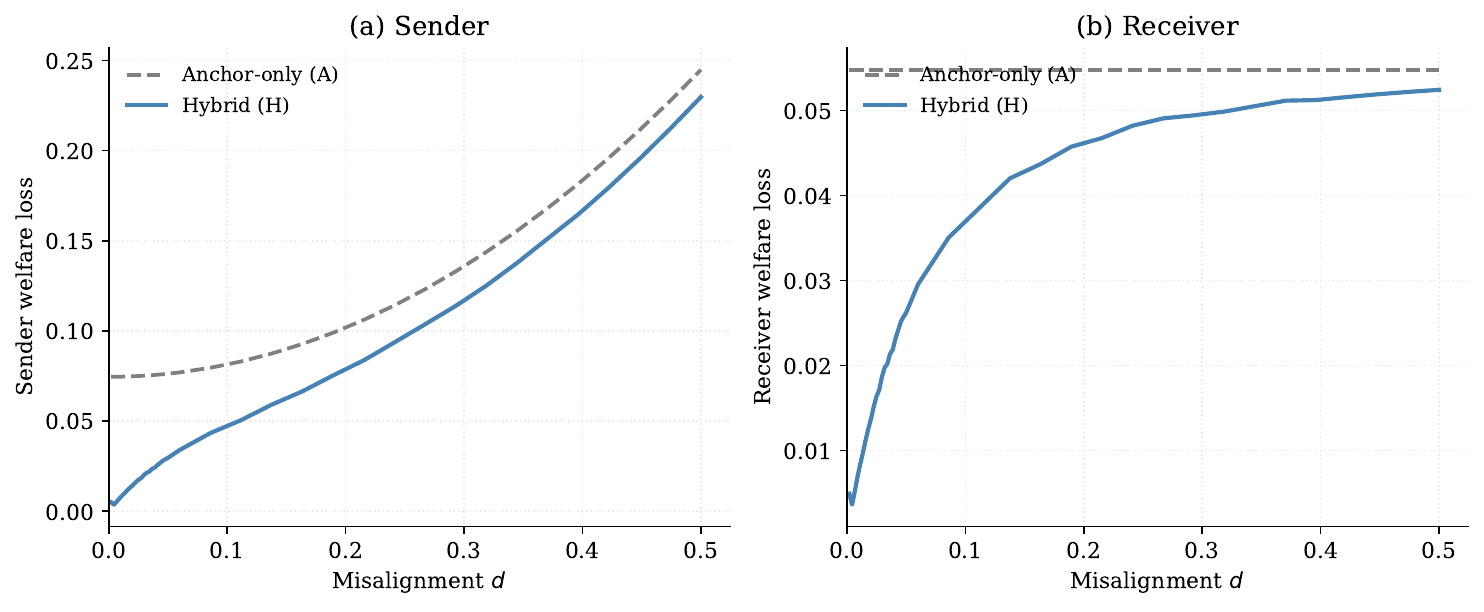}
\caption{Sender (panel~a) and receiver (panel~b) welfare in the Gaussian-quadratic benchmark of Section~\ref{sec:benchmark} for the hybrid (H) and a plotted regular anchor-only selection (A). The hybrid welfare loss vanishes as $d\downarrow 0$, while the plotted anchor-only loss has a strictly positive limit driven by the residual posterior variance generated by anchor noise. Parameters: $\theta\sim N(0,1)$, $b=\theta+\sigma\varepsilon$, $\varepsilon\sim N(0,1)$, $\sigma=0.5$, $c=2$, $\beta=1$.}\label{fig:hybrid-anchor}
\end{figure}
}{}

\subsection{Hybrid versus pure cheap talk}\label{sec:hybrid-vs-cheaptalk}

While the previous analysis demonstrates the value of cheap talk for low misalignment, it does not yet rule out that anchors should be used in complement with cheap talk. However, the asymptotic comparison with pure cheap talk is more subtle. As $d\downarrow 0$, the most informative \citet{CrawfordSobel1982} partition uses an unboundedly fine grid of labels, so cheap-talk welfare also approaches the full-information benchmark for both players. The two modes of communication therefore converge toward first-best, and the comparison hinges on which one approaches faster. Numerical anchors refine actions within each label, but, at the same time, by changing the relative attractiveness of adjacent labels they also force a coarsening of the partition needed to restore label incentives, and the cost of that coarsening can outweigh the within-label refinement gain and create direct welfare losses for the sender. Let $W_C^i(d)$ denote the surplus of the most informative cheap-talk partition, defined from Definition~\ref{def:cell-hybrid-main} after dropping the anchored numerical message.

A tractable environment for the closed-form cheap-talk comparison retains the small-bias perturbation $d\hat\delta(\theta)$ of Assumption~\ref{ass:primitive-smallbias} and specializes payoffs to symmetric error losses, the prior to a log-concave density, and the anchor to a linear function of the state with symmetric noise. The regular full-label state-cell hybrid setup of Definition~\ref{def:cell-hybrid-main} and the additive-anchor reporting-cost environment of Assumption~\ref{ass:PB-main} remain in force.

\begin{assumption}\label{ass:lowbias-symmetric}
Suppose that Assumption~\ref{ass:primitive-smallbias} holds. In addition: 
\begin{enumerate}[label=(\roman*)]
\item Payoffs are symmetric error losses, i.e., for each $i\in\{R,S\}$, $U^R(a,\theta)=v_R(\theta)-\ell_R(a-\theta)$ and $U^S(a,\theta;d)=v_S(\theta)-\ell_S(a-\theta-d\hat\delta(\theta))$, with $\hat\delta\in C^2(\Theta)$, $\hat\delta>0$, $\hat\delta'\ge 0$, $\ell_i\in C^5$, $\ell_i(-e)=\ell_i(e)$, $\ell_i$ minimized at zero, and $\ell_i''(0)>0$.
\item The state density $g$ on the compact interval $\Theta$ is $C^4$, strictly positive, and log-concave.
\item The anchor is linear: $b=\beta_0+\beta\theta+\sigma\varepsilon$ with $\sigma>0$ and $\beta\ne 0$, where the noise $\varepsilon$ has a strictly positive, symmetric, $C^4$ density $\omega$. Letting $s_\omega=\omega'/\omega$ be the score, there is $K<\infty$ such that $|s_\omega(u)|\le K(1+|u|)$ and $|s_\omega'(u)|+|s_\omega''(u)|+|s_\omega'''(u)|\le K$ for every $u\in\R$, and $\E[e^{\eta|\varepsilon|}]<\infty$ for some $\eta>0$.
\end{enumerate}
\end{assumption}

\begin{theorem}\label{thm:hybrid-ranking}
Under Assumption~\ref{ass:lowbias-symmetric}, for each player $i\in\{R,S\}$ there exists $\bar d>0$ such that $W_C^i(d)>W_H^i(d)$ for all $d\in(0,\bar d)$.
\end{theorem}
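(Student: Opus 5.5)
The plan is an asymptotic expansion of both welfare levels in the small bias $d$, with the ranking read off the leading terms. First I would recall the Crawford--Sobel small-bias asymptotics under Assumption~\ref{ass:lowbias-symmetric}(i)--(ii). The most informative partition has $N_C(d)\sim \kappa/d$ cells. Cell widths satisfy the local arbitrage equation $w_{k+1}-w_k\approx 4d\hat\delta(\theta_k)$ to leading order, after using $\ell_R''(0)$, $\ell_S''(0)$ and the symmetry of the losses. The welfare loss of player $i$ is therefore
\[
W^i_{FI}-W_C^i(d)=\tfrac{\ell_i''(0)}{24}\sum_k w_k^3+O(d^{3})+\text{bias term for }S,
\]
which is of order $d^2$. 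The sender additionally pays $\tfrac12\ell_S''(0)d^2\E[\hat\delta^2]$, a term common to both regimes to leading order. I would take this expansion as a standard computation and record the leading coefficient $C^i_C$ of $d^2$.

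Second, I would do the same for the hybrid. Inside a cell $I_k$ of width $w$, the anchored continuation of Theorem~\ref{thm:ode-main} applied to the truncated prior reduces, after rescaling $\theta=\theta_{k-1}+w u$, to a problem in which the prior is approximately uniform on $[0,1]$. The anchor noise $\sigma\varepsilon$ is then of order $1/w\gg1$ relative to the state variation, since $\beta w\ll\sigma$. Linearizing the Bayes equation (\ref{eq:continuous-main}) in $w$ should give an affine within-cell rule whose sensitivity is of order $w^2\beta/(c\sigma^2)$ times the Fisher information $\E[s_\omega^2]$. The score bounds in Assumption~\ref{ass:lowbias-symmetric}(iii) and the exponential moment of $\varepsilon$ control the remainders uniformly. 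The within-cell residual variance is then $\tfrac{w^2}{12}(1-\lambda w^2+o(w^2))$ for an explicit $\lambda>0$. The refinement gain per cell is thus $O(w^5)$, higher order than the $w^3$ cell loss.

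Third, and this is where the sign comes from, I would compute the label incentive constraints. Using $\bar V_k$ from Definition~\ref{def:cell-hybrid-main}, the indifference condition at a cutoff $\theta_k$ now involves within-cell actions that respond to the report. The anchored report also gives the sender an extra continuous lever and an inflation term of the $\alpha d/c$ type from the Gaussian benchmark. I expect this to shift the arbitrage equation to $w_{k+1}-w_k\approx 4d\hat\delta(\theta_k)(1+\mu)$ with $\mu>0$ of order one, and not vanishing with $w$. The reason is that the sender's gain from moving between adjacent labels is evaluated at the noisy within-cell action, whose expected mismatch enters linearly in $d$. This coarsens the partition: $N_H\sim\kappa/((1+\mu)d)$. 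The hybrid loss is then $C^i_C(1+\mu)^2d^2 - O(d^4)$, strictly larger than the cheap-talk loss for small $d$, which gives the theorem for both $i$. Log-concavity of $g$ will be used, as in the uniqueness arguments for Crawford--Sobel, to pin down the most informative partition and its continuum limit via an ODE for the cell-width profile.

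The main obstacle is the third step: showing rigorously that the anchor-induced shift $\mu$ in the cutoff condition is of order one and positive, uniformly across cells, rather than itself vanishing with $w$. If it only vanishes like $w^2$, the comparison becomes a delicate $d^4$ contest. I would first try to compute $\bar V_k(\theta_k)-\bar V_{k+1}(\theta_k)$ explicitly in the Gaussian-quadratic case of Section~\ref{sec:benchmark}, where Theorem~\ref{thm:rello-main} gives closed forms within cells. I would then transfer that computation to the general symmetric-loss case by the uniform linearization of step two.
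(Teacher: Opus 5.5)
There is a genuine gap. Your overall strategy of expanding both losses in $d$ is the right one, but it breaks down at the start with a scaling error. If adjacent Crawford--Sobel widths grow by about $4d\hat\delta$ per cell, starting from a first cell of width $O(d)$, they can sum to the length of $\Theta$ only if there are of order $d^{-1/2}$ cells of width of order $\sqrt d$. Your two claims, increments $4d\hat\delta$ and $N_C\sim\kappa/d$, are therefore mutually inconsistent. The cheap-talk loss is of order $d$, not $d^2$. Specifically, $L_C^i(d)=\frac{d}{24}\int g\,\kappa_i Y\,d\theta+o(d)$, where $h_C(\theta)^2=dY(\theta)+o(d)$, $Y'=8\hat\delta-\frac{2s}{3}Y$ and $Y(\underline\theta)=0$. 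With this scaling, your within-cell refinement gain of order $w^5$ per cell aggregates to order $d^2$, so it is not negligible.

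Your third step then fails, because there is no order-one shift $\mu$.
\begin{itemize}
\item The anchor enters the boundary type's label comparison only through the within-cell posterior shifts. In a cell of width $w$ these satisfy $\E_z[\Delta_\pm]=\mp\frac{\mathcal I}{24}w^3+O(w^4)$ and $\E_z[\Delta_\pm^2]=\frac{\mathcal I}{144}w^4+o(w^4)$.
\item This gives a utility imbalance $G_+-G_-=\kappa_S\mathcal I\bigl[\frac{7}{36}d\hat\delta h^3-\frac{s}{108}h^5\bigr]+o(h^5)$.
\item Dividing by the slope $\kappa_S h/4$ of the label-only gap gives a width correction $\Delta_A=\mathcal I\bigl[\frac79 d\hat\delta h^2-\frac{s}{27}h^4\bigr]=O(d^2)$. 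This is a relative shift of order $w^2\sim d$, not of order one.
\item The inflation channel you invoke is smaller still. Within a cell $a_r=O(h^2)$, so $r-b=O(d^{3/2})$ and reporting costs are $O(d^3)$.
\end{itemize}

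So the ``delicate'' case you flagged is exactly the one that obtains, at order $d^2$ rather than $d^4$. The theorem then rests on comparing two order-$d^2$ effects of opposite sign, and your plan has no tool for that comparison. Three steps are missing:
\begin{enumerate}
\item Propagate $\Delta_A$ through the squared-width recursion to get $h_H^2-h_C^2=d^2R+o(d^2)$, where $R'=-\frac{2s}{3}R+\mathcal I\bigl[\frac{14}{9}\hat\delta Y-\frac{2s}{27}Y^2\bigr]$ and $R(\underline\theta)=0$.
\item Compare the widening cost $\frac{d^2}{24}\int g\,\kappa_i R$ with the refinement gain $\frac{d^2}{288}\int g\,\kappa_i\mathcal I Y^2$.
\item Show that $D=R-\frac{\mathcal I}{12}Y^2$ solves $D'=-\frac{2s}{3}D+\frac{\mathcal I}{18}Y\bigl(4\hat\delta-\frac{sY}{3}\bigr)$ with $D(\underline\theta)=0$, and that the forcing term is positive.
\end{enumerate}
Positivity of the forcing term holds precisely because $\hat\delta'\ge 0$ and log-concavity gives $s'\le 0$. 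That is where log-concavity is used, not in selecting the partition.

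Your fallback of computing in the Gaussian-quadratic benchmark does not transfer either. On a truncated cell, the line-crossing argument behind Theorem~\ref{thm:rello-main} no longer forces an affine continuation, because the break-even lines may cross outside the cell. In addition, a Gaussian prior lies outside Assumption~\ref{ass:lowbias-symmetric}(ii).
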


The intuition for the Theorem is that anchored refinement adds little inside an already narrow label, and what little it adds is more than offset by the widening of the labels needed to preserve incentive compatibility. Each cheap-talk label has width of order $\sqrt d$, so the residual posterior uncertainty inside a label is already small. Anchors reduce it further, but the proportional reduction scales with $h^2=O(d)$ relative to the within-label variance, leaving only a second-order absolute gain of order $h^4=O(d^2)$. At the same time, the within-label anchor changes how attractive each label looks at the boundary type; the upper label becomes more attractive once the anchor is added, so restoring incentive compatibility requires widening the labels. The within-label information gain and the partition-widening loss are both of order $d^2$, but the widening loss has a strictly larger coefficient, so the net effect on welfare is a second-order loss rather than a second-order gain. Reporting costs themselves do not enter this comparison: the sender's optimal report deviates from the anchor by $O(d^{3/2})$ and is paid quadratically, so the reporting cost is $O(d^3)$, an order of magnitude below the $d^2$ ranking. Figure~\ref{fig:lowd} illustrates both points in the canonical uniform-quadratic setup.

\begin{figure}[t]
\centering
\includegraphics[width=\textwidth]{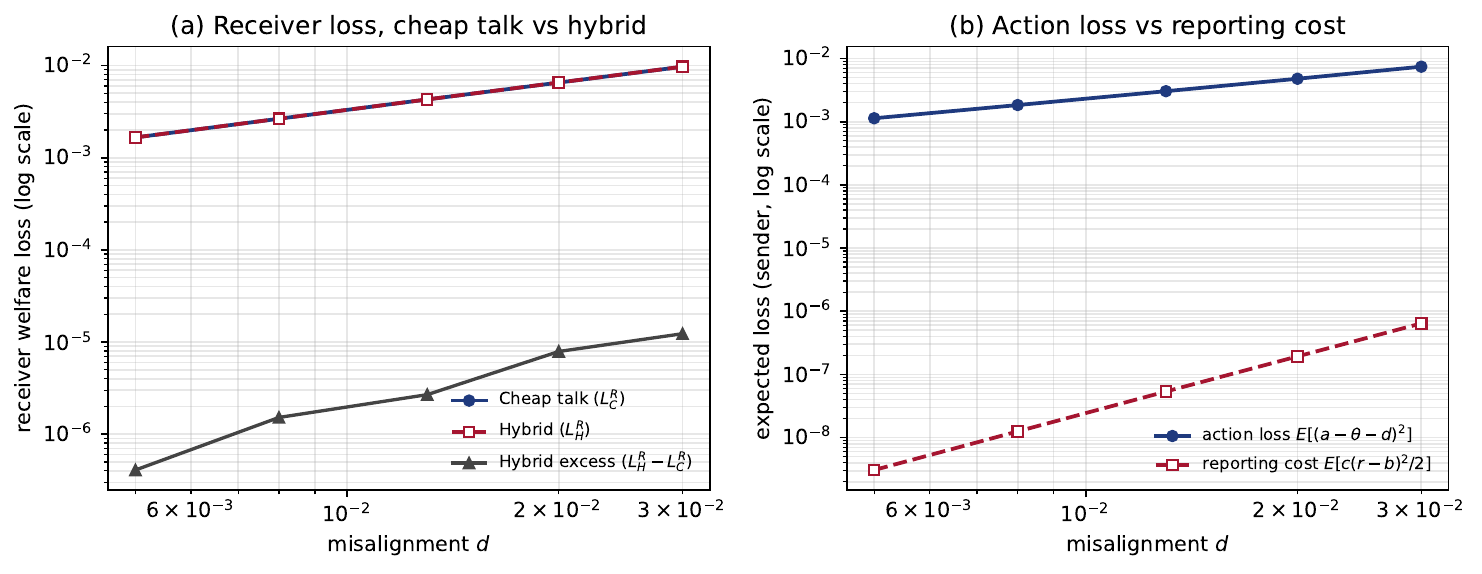}
\caption{Numerical illustration of Theorem~\ref{thm:hybrid-ranking} in the canonical setup with $\theta$ uniform on $[0,1]$, constant misalignment $\hat\delta\equiv1$, quadratic losses with $\kappa_R=\kappa_S=1$, Gaussian anchor $b=\theta+\sigma\varepsilon$ with $\sigma=1$, and quadratic reporting cost with $c=1$. Panel~(a), log-log: the absolute receiver losses $L_C^R$ and $L_H^R$ coincide visually because they share the same leading term, and the hybrid excess $L_H^R-L_C^R$, shown as a third curve two orders of magnitude below, is positive at every $d$. Panel~(b), log-log: the two components of the sender's expected loss at the hybrid equilibrium; the expected reporting cost falls several orders of magnitude faster than the loss from a wrong receiver action and is irrelevant to the ranking between cheap talk and the hybrid.}\label{fig:lowd}
\end{figure}

Combining the two theorems, $W_C^i(d)>W_H^i(d)>W_A^i(d)$ for both players at sufficiently small $d$: anchor-only communication carries a fixed information loss while cheap talk and the hybrid both approach the full-information benchmark with the same leading $O(d)$ welfare loss, and the cheap-talk partition has a strictly smaller second-order term. At high misalignment $d$, when labels are wide enough that anchored refinement carries substantial within-label information, the hybrid can in turn dominate cheap talk. \IfFileExists{figure_hybrid_illustration.pdf}{%
Figure~\ref{fig:hybrid} illustrates a moderate-bias regime.

\begin{figure}[t]
\centering
\includegraphics[width=\textwidth]{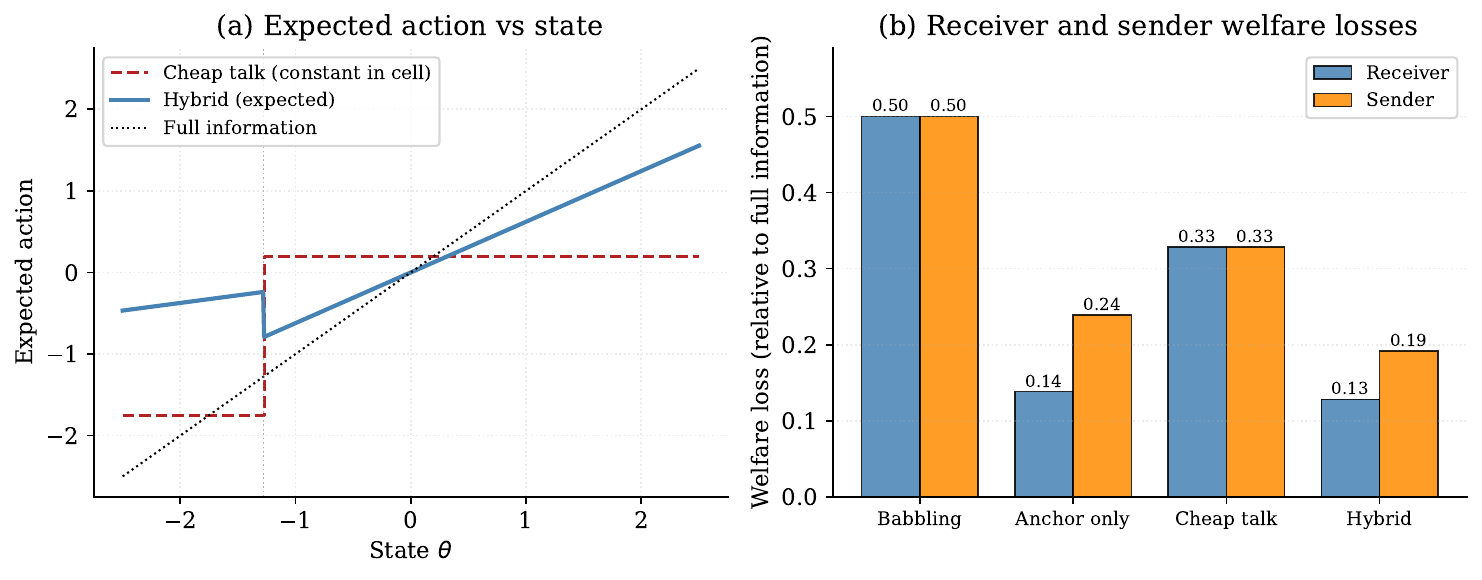}
\caption{Gaussian benchmark: $\theta\sim N(0,1)$, $d=1/2$, $c=1$, two-label cheap-talk  partition (threshold $t^\star\approx-1.28$) with within-label informative anchor $b=\theta+\sigma_a\varepsilon$, $\sigma_a=1$, $\beta=1$. Panel~(a): expected receiver action under cheap talk alone (red dashed step), the hybrid (blue), and the full-information benchmark; panel~(b): receiver and sender welfare losses for babbling, anchor-only, cheap talk, and the hybrid. At this moderate bias the hybrid strictly dominates both pure formats for both players, the opposite of the small-$d$ regime in Theorem~\ref{thm:hybrid-ranking}.}\label{fig:hybrid}
\end{figure}
}{}

\section{Conclusion}\label{sec:conclusion}

Strategic communication is usually modeled as if speakers share a common  language mapping states and messages. Treating the language as a primitive that varies across speakers changes the equilibrium picture in two manners relevant to many applications of language.

The first is institutional. Compliance regimes in accounting, auditing, and peer review are usually defended as devices that align reports with an objective truth. Once the cost structure of a report runs through a private reference rather than the state, verification becomes a choice about \emph{what} the sender is held to, not only how strictly. A reporting standard then coordinates senders on a common internal language as much as it disciplines them against falsehood. The relevant design questions are which private references to recognize as legitimate, how much heterogeneity in those references to tolerate, and when to pair quantitative reports with a coarser qualitative channel. These are questions that fall outside the pure cheap-talk and lying-cost frames.

The second is about the form of communication. Cheap talk predicts partition-valued reports, deterministic lying-cost models predict one report per state, and the mechanism here predicts continuous cross-sectional variation in reports tied to differences in private references. The same state is reported differently by senders who use different internal benchmarks, and the same number can carry different meanings under different qualitative labels. Both patterns should be visible in reporting systems in which a common underlying object is reported by multiple senders, or in which a numerical disclosure is paired with a categorical one.

Anchors offer interesting directions for future research. Cheap-talk theory typically holds the language of communication fixed and varies the strategic structure.  The current analysis treats the language itself as private, and shows that small private differences in language can produce large changes in equilibrium informativeness, including a discontinuity in the limit of vanishing reporting cost. Communication models in which speakers differ not in what they want but in how they speak, and in which the cost structure of speech is itself private, appear to be a relatively unexplored direction. 

\section*{Appendix}

\begin{proof}[\textbf{Proof of Proposition~\ref{prop:bounded-main}}]
\emph{Monotonicity.} The sender's objective $r\mapsto U^S(a(r),\theta)-D(r,b)$ has strictly increasing differences in $(r,b)$ because $-D_{12}(r,b)>0$ by Assumption~\ref{ass:cost-main}(ii). Topkis's theorem yields that $b\mapsto R(\theta,b)$ is nondecreasing on $B(\theta)$. On any smooth region of $a$, strict monotonicity then follows readily from the implicit-function theorem. Let 
\begin{equation}\label{Frbt}
F(r,b;\theta)=U^S_1(a(r),\theta)a'(r)-D_1(r,b).
\end{equation}
 At an interior optimum, $F(R(\theta,b),b;\theta)=0$ and $F_r<0$ by the strict second-order condition. Hence $\partial_b R(\theta,b)=D_{12}/F_r>0$.

\emph{Continuity.} When $M$ is compact, Berge's maximum theorem applies directly, since the objective is continuous and the feasible-report set is compact. When $M=\R$, consider $b_0\in B(\theta)$. Since $a(M)\subseteq A$, the sender's strategic gain from the receiver action is bounded by 
\begin{equation}\label{L0}
L_0=\sup_{(a,\theta)\in A\times\Theta}|U^S(a,\theta)|<\infty.
\end{equation}
 Reporting $b$ is feasible and minimizes direct cost, so Assumption~\ref{ass:cost-main}(iv) delivers $K<\infty$ such that $|r-b|>K$ implies $D(r,b)-D(b,b)>2L_0$; such an $r$ cannot be optimal. Thus $|R(\theta,b)-b|\le K$. For $b$ near $b_0$, all optimal reports therefore lie in a compact interval, and Berge's theorem implies continuity at $b_0$.
\end{proof}

\begin{proof}[\textbf{Proofs of Theorem~\ref{thm:noholes-main} and Corollary~\ref{cor:fulluse-unbounded}}]
By Proposition~\ref{prop:bounded-main}, for any $\theta$, $b\mapsto R(\theta,b)$ is continuous on $B(\theta)$. The image of a connected set under a continuous map is connected, i.e., it is an interval.
To show the Corollary, the proof of Proposition~\ref{prop:bounded-main} provides a uniform bound $|R(\theta,b)-b|\le K$, so $R(\theta,b)\to\infty$ as $b\to\infty$ and $R(\theta,b)\to-\infty$ as $b\to-\infty$. The interval $R(\theta,\R)$ is therefore unbounded in both directions and must equal $\R$.
\end{proof}

\begin{proof}[\textbf{Proof of Proposition~\ref{prop:boundedmessages}}]
(i) follows immediately from Theorem~\ref{thm:noholes-main}.
(ii) Because $M$ is compact, the sender's feasible-report set is the constant compact set $M$. The objective $U^S(a(r),\theta)-D(r,b)$
is jointly continuous in $(r,\theta,b)$, and regularity yields a unique maximizer for each $(\theta,b)$. Berge's theorem therefore makes $R(\theta,b)$ jointly continuous on the connected domain $\Theta \times \R$. A continuous image of a connected set in $\R$ is connected, so $\mathcal{U} = R(\Theta \times \R)$ is an interval in $M$.
\end{proof}

\begin{proof}[\textbf{Proof of Corollary~\ref{cor:boundedmessages-surj}}]
Let $\theta\in\Theta$. By Proposition~\ref{prop:boundedmessages}, $I_\theta=R(\theta,\R)$ is an interval in $M$. 

Consider first the claim that every interior report is used. For any $\varepsilon>0$, Assumption~\ref{ass:invert-main} yields $D_1(\overline m,b)\to-\infty$ as $b\to\infty$. Take $b$ sufficiently large to satisfy $D_1(\overline m,b)<-2L_0/\varepsilon$, where the bound $L_0$ is defined in (\ref{L0}). Since $D_{11}>0$, every $r\le \overline m-\varepsilon$ satisfies
$D(r,b)-D(\overline m,b)=-\int_r^{\overline m}D_1(s,b)\,ds>2L_0$.
Such an $r$ cannot be optimal, because replacing it with $\overline m$ changes the sender's action payoff by at most $2L_0$. Hence $R(\theta,b)>\overline m-\varepsilon$ for all sufficiently large $b$. The symmetric argument, using $D_1(\underline m,b)\to\infty$ as $b\to-\infty$, implies $R(\theta,b)<\underline m+\varepsilon$ for all sufficiently negative $b$. Thus $I_\theta$ contains reports arbitrarily close to both endpoints. Since $I_\theta$ is an interval, $(\underline m,\overline m)\subseteq I_\theta$.

Next, suppose $r\mapsto U^S(a(r),\theta)$ is Lipschitz on $M$, with constant $L_\theta$. Choose $b^+$ so large that $D_1(\overline m,b)<-L_\theta$ for all $b\ge b^+$. Since $D_{11}>0$, every $r<\overline m$ and $b\ge b^+$ satisfy
$D(r,b)-D(\overline m,b)>L_\theta(\overline m-r)$.
The Lipschitz bound yields
$U^S(a(\overline m),\theta)-U^S(a(r),\theta)\ge -L_\theta(\overline m-r)$,
so $\overline m$ strictly dominates every $r<\overline m$. Hence $R(\theta,b)=\overline m$ for all $b\ge b^+$. A symmetric argument shows that $R(\theta,b)=\underline m$ for all sufficiently negative $b$. Therefore both endpoints are attained, and $I_\theta=M$.
\end{proof}

\begin{proof}[\textbf{Proof of Corollary~\ref{cor:reversion}}]
Because $a$ is bounded and monotone, it has finite limits at $\pm\infty$. Since $a'$ is integrable on each tail and uniformly continuous there, $a'(r)\to 0$ as $|r|\to\infty$.

Let $r_n=R(\theta,b_n)$. Finitely many kink points exist, so after discarding finitely many terms, each $r_n$ lies in a smooth tail region. The sender first-order condition pins down $D_1(r_n,b_n)=U^S_1(a(r_n),\theta)\,a'(r_n)$. Since $U^S_1$ is bounded on $A\times\Theta$, the right-hand side tends to zero, so $D_1(r_n,b_n)\to 0$.
The proof of Proposition~\ref{prop:bounded-main} establishes a bound $|r_n-b_n|\le K$. Since $D_1(b_n,b_n)=0$ and Assumption~\ref{ass:cost-main}(v) yields $\underline\kappa_K=\inf_{|r-b|\le K}D_{11}(r,b)>0$, the mean value theorem produces $|D_1(r_n,b_n)|\ge \underline\kappa_K\,|r_n-b_n|$. Because the left-hand side tends to zero, $|r_n-b_n|\to 0$.
\end{proof}

\begin{proof}[\textbf{Proof of Theorem~\ref{thm:continuous-main}}]
Let $(r,\theta)\in I\times\Theta$. Since $M=\R$ and $B(\theta)=\R$, the full-use property delivers an anchor $b_\ast$ with $R(\theta,b_\ast)=r$. Because $r\in I$ lies in a smooth region of $a$ and the sender optimum is interior, $b_\ast$ satisfies $D_1(r,b_\ast)=U^S_1(a(r),\theta)a'(r)$. Assumption~\ref{ass:invert-main} states that, for fixed $r$, the map $b\mapsto D_1(r,b)$ is continuous, strictly decreasing, and onto $\R$, so there is a unique solution $b^\ast(r,\theta)$ and necessarily $b_\ast=b^\ast(r,\theta)$.

Then $F(r,b^\ast(r,\theta);\theta)=0$, where $F$ is defined in (\ref{Frbt}). Differentiating $F(r,b;\theta)=U^S_1(a(r),\theta)a'(r)-D_1(r,b)$ in $b$, $F_b=-D_{12}>0$ since $D_{12}<0$. Since $a$ is $C^3$ on $I$ and $D$ is $C^4$, the implicit-function theorem yields $b^\ast(\cdot,\theta)\in C^2(I)$. Differentiating in $r$,
\[
\partial_r b^\ast(r,\theta)
=
\frac{D_{11}(r,b^\ast(r,\theta))-U^S_{11}(a(r),\theta)(a'(r))^2-U^S_1(a(r),\theta)a''(r)}{-D_{12}(r,b^\ast(r,\theta))}.
\]
The numerator is strictly positive by the strict second-order condition and the denominator is positive because $D_{12}<0$. Hence $\partial_r b^\ast(r,\theta)>0$.

A change of variables produces the conditional report density $g(r|\theta)=q(b^\ast(r,\theta)|\theta)\,\partial_r b^\ast(r,\theta)$, which readily proves \eqref{eq:posterior-main}. Substituting the posterior into the receiver first-order condition delivers \eqref{eq:continuous-main}.
\end{proof}

\begin{proof}[\textbf{Proof of Proposition~\ref{prop:no-babbling-smallcost}}]
Let $s(\theta)=U^R_1(a^0,\theta)$. Then $\int_\Theta s(\theta)f(\theta)d\theta=0$, and $s$ is strictly increasing because $U^R_{12}\ge\underline\gamma>0$. In a smooth regular equilibrium, $a_c$ has no kinks, so by Proposition~\ref{prop:bounded-main} the map $b\mapsto R_c(\theta,b)$ is strictly increasing on $\R$. It is also nondecreasing in $\theta$: since $a_c$ is increasing and $U^S_{12}\ge0$, $r\mapsto U^S(a_c(r),\theta)$ has increasing differences in $(r,\theta)$, so Topkis theorem applies.

Consider interior states $\underline\theta<\theta^-<\theta^+<\overline\theta$ with $\Delta=b_0(\theta^+)-b_0(\theta^-)>0$. Let $F_x$ be the distribution function of $x$, and choose $t$ such that $F_x(t)-F_x(t-\Delta)>0$. With $\bar r_c=R_c(\theta^-,b_0(\theta^-)+t)$ and $E_c=\{(\theta,b):R_c(\theta,b)\le\bar r_c\}$, monotonicity implies $\Pr(E_c|\theta)\ge F_x(t)$ for $\theta\le\theta^-$ and $\Pr(E_c|\theta)\le F_x(t-\Delta)$ for $\theta\ge\theta^+$. Thus $\theta\mapsto\Pr(E_c|\theta)$ is nonincreasing.

By the FKG-Chebyshev covariance identity for monotone functions,
\[
\Cov(s(\theta),1_{E_c})
=
-\frac12
\int_\Theta\int_\Theta
(s(\theta')-s(\theta))
(\Pr(E_c|\theta)-\Pr(E_c|\theta'))
f(\theta)f(\theta')d\theta d\theta'
\]
with nonnegative integrand because $s$ is increasing and $\theta\mapsto\Pr(E_c|\theta)$ is nonincreasing. Define
\[
\delta_s=s(\theta^+)-s(\theta^-)>0,
\qquad
\delta_x=F_x(t)-F_x(t-\Delta)>0,
\]
and
\[
m_-=\int_{\underline\theta}^{\theta^-}f(\theta)d\theta>0,
\qquad
m_+=\int_{\theta^+}^{\overline\theta}f(\theta)d\theta>0.
\]
For every $\theta\le\theta^-$ and $\theta'\ge\theta^+$, $s(\theta')-s(\theta)\ge\delta_s$ and $\Pr(E_c|\theta)-\Pr(E_c|\theta')\ge\delta_x$, so the integrand is at least $\delta_s\delta_x$ on the rectangle $[\underline\theta,\theta^-]\times[\theta^+,\overline\theta]$; the same bound holds on the reversed rectangle. The double integral is therefore at least $2\delta_s\delta_x m_-m_+$, and
\[
\Cov(s(\theta),1_{E_c})\le-\delta_s\delta_x m_-m_+\equiv-\kappa.
\]
The constant $\kappa>0$ depends only on $s$, the prior $f$, the chosen $\theta^-,\theta^+,t$, and the noise distribution of $x$, not on $c$ or on the particular smooth regular equilibrium. Since $\int_\Theta s(\theta)f(\theta)d\theta=0$, $\Cov(s,1_{E_c})=\E[s(\theta)1_{E_c}]$, and the derivative of the receiver's expected payoff at $a^0$ is at most $-\kappa/\Pr(E_c)$ after $E_c$ and at least $\kappa/\Pr(E_c^c)$ after $E_c^c$.

Let $L=\sup_{(a,\theta)\in A\times\Theta}|U^R_{11}(a,\theta)|<\infty$. For any posterior $\mu$ on $\Theta$, the receiver's expected payoff $V_\mu(a)=\int_\Theta U^R(a,\theta)\,\mu(d\theta)$ has $V_\mu'(a^0)=\int_\Theta U^R_1(a^0,\theta)\,\mu(d\theta)\equiv g_\mu$ and $V_\mu''(a)=\int_\Theta U^R_{11}(a,\theta)\,\mu(d\theta)\ge -L$ on $A=[a^R(\underline\theta),a^R(\overline\theta)]$. Since $a^R$ is the unique maximizer of $U^R(\cdot,\theta)$ and $U^R$ is concave in $a$, $U^R_1(a^R(\underline\theta),\theta)\ge0$ and $U^R_1(a^R(\overline\theta),\theta)\le0$ for every $\theta\in\Theta$, hence $V_\mu'(a^R(\underline\theta))\ge0$ and $V_\mu'(a^R(\overline\theta))\le0$ for every posterior $\mu$. If $g_\mu>0$, the bound $V_\mu''\ge -L$ returns $V_\mu'(a^R(\overline\theta))\ge g_\mu-L(a^R(\overline\theta)-a^0)$, and combined with $V_\mu'(a^R(\overline\theta))\le0$, this yields $a^0+g_\mu/L\le a^R(\overline\theta)$. The case $g_\mu<0$ is symmetric using $V_\mu'(a^R(\underline\theta))\ge0$. Hence the action $a^0+g_\mu/L$ is in $A$. Taylor's theorem at $a^0$ then yields, for some $\xi$ between $a^0$ and $a^0+g_\mu/L$,
\[
V_\mu(a^0+g_\mu/L)-V_\mu(a^0)
=\frac{g_\mu^2}{L}+\tfrac12 V_\mu''(\xi)\!\left(\frac{g_\mu}{L}\right)^2
\ge\frac{g_\mu^2}{L}-\frac{g_\mu^2}{2L}=\frac{g_\mu^2}{2L},
\]
so the receiver's best response under $\mu$ improves on $a^0$ by at least $g_\mu^2/(2L)$. Applied to the posterior conditional on $E_c$, for which $|g_\mu|\ge\kappa/\Pr(E_c)$, the $E_c$ information set contributes
\[
\Pr(E_c)\cdot\frac{1}{2L}\!\left(\frac{\kappa}{\Pr(E_c)}\right)^2
=\frac{\kappa^2}{2L\,\Pr(E_c)}\ge\frac{\kappa^2}{2L}
\]
to the ex-ante receiver payoff over using $a^0$. \end{proof}

\begin{proof}[\textbf{Proof of Theorem~\ref{thm:ode-main}}]
Differentiating $b_a(r,\theta)=r-\psi(U^S_1(a(r),\theta)a'(r)/c)$ in $r$ at fixed $\theta$,
\[
\partial_r b_a(r,\theta)
=
1-\frac{1}{c}\,\psi'\!\left(\frac{U^S_1(a(r),\theta)a'(r)}{c}\right)\!
\left[
U^S_{11}(a(r),\theta)\,(a'(r))^2+U^S_1(a(r),\theta)\,a''(r)
\right].
\]
Substituting this into Bayes integral and collecting terms,
\[
\int_\Theta U^R_1(a(r),\theta)\,w_a^\sigma(r,\theta)\,d\theta
=
\mathcal B(r,a(r),a'(r))
-
\mathcal A(r,a(r),a'(r))\,a''(r).
\]
Hence the Bayes condition holds if and only if \eqref{eq:ode-main2} holds. 
\end{proof}

\begin{proof}[\textbf{Proof of Proposition~\ref{prop:small-cost-revelation}}]
Let $\rho_c$ denote the deterministic full-information separating language, the unique strictly increasing solution on $\Theta$ of
\begin{equation}\label{eq:det-separating-ode}
c\phi'(\rho_c(\theta)-b_0(\theta))\,\rho_c'(\theta)=T(\theta),
\qquad
\rho_c(\underline\theta)=b_0(\underline\theta),
\end{equation}
the separating equation that would hold without anchor noise, with the boundary condition selecting the least-cost separation.\footnote{Existence and uniqueness of this boundary solution are shown in Corollary~\ref{cor:det-least} of the supplementary appendix.}

Let $d_c(\theta)=\rho_c(\theta)-b_0(\theta)$ and $Y_c(\theta)=c\phi(d_c(\theta))$. Differentiating \(Y_c\) and using
\eqref{eq:det-separating-ode},
\[
\begin{aligned}
Y_c'(\theta)
&=
c\phi'(d_c(\theta))d_c'(\theta)=
T(\theta)-c\phi'(d_c(\theta))b_0'(\theta).
\end{aligned}
\]
Since $\Gamma'(\theta)=T(\theta)$, $(\Gamma-Y_c)'(\theta)
=
c\phi'(d_c(\theta))b_0'(\theta)\geq 0$. Given that $\Gamma(\underline{\theta})=Y_c(\underline{\theta})=0$, $0\le Y_c(\theta)\le \Gamma(\theta)$ for all $\theta\in\Theta$. The curvature bounds in Assumption \ref{ass:PB-main} readily imply, by integration and given that $\phi(0)=\phi'(0)=0$, that $\phi'(x)\le \overline\kappa x$ and $\phi(x)\ge \frac{\underline\kappa}{2}x^2$, therefore, for $x\geq 0$, $\phi'(x)\le C_0\sqrt{\phi(x)}$  with $C_0=\overline\kappa\sqrt{\frac{2}{\underline\kappa}}$.
Applying this to \(x=d_c(\theta)\),
\[
c\phi'(d_c(\theta))
\le
C_0c\sqrt{\phi(d_c(\theta))}
=
C_0\sqrt{cY_c(\theta)}
\le
C_0\sqrt{c\Gamma(\theta)}.
\]
Since \(\Gamma\) is bounded on the compact set \(\Theta\), this implies
the existence of $C_1>0$ such that $
c\phi'(d_c(\theta))\le C_1\sqrt c.$ Integrating the identity $(\Gamma-Y_c)'(\theta)
=
c\phi'(d_c(\theta))b_0'(\theta)$
and using boundedness of \(b_0'\) yields $0\le \Gamma(\theta)-Y_c(\theta)\le C_1\sqrt c $
uniformly on \(\Theta\). Hence $Y_c\to \Gamma$ uniformly.

Now fix any compact \(K\subset \operatorname{Int}\Theta\) and let $\gamma_K=\inf_{\theta\in K}\Gamma(\theta)>0$.
Uniform convergence delivers \(Y_c(\theta)\ge \gamma_K/2\) on \(K\) for small \(c\). Thus
\[
\phi(d_c(\theta))
=
\frac{Y_c(\theta)}{c}
\ge
\frac{\gamma_K}{2c}
\to\infty
\]
uniformly on \(K\). This implies $d_c(\theta)\to\infty$ uniformly on \(K\); hence also \(\phi'(d_c(\theta))\to\infty\) uniformly on \(K\).
Equation~\eqref{eq:det-separating-ode} also implies $c\rho_c'(\theta)=T(\theta)/\phi'(d_c(\theta))\to0$ and $(a^R)'(\theta)/(c\rho_c'(\theta))\to\infty$, uniformly on compact subsets of the interior.

Let \(\tau_c\) be defined by $R_c(\theta,b)=\rho_c(\tau_c)$.
The sender's first-order condition evaluated at \(r=\rho_c(\tau)\) and
\(b=b_0(\theta)+\sigma x\) can be written
\begin{equation}\label{jb1}
G_c(\tau;\theta,x)
=c\phi'(\rho_c(\tau)-b_0(\theta)-\sigma x)
-U^S_1(a_c(\rho_c(\tau)),\theta)a_c'(\rho_c(\tau))=0.
\end{equation}
For any \(\tau\), by the mean-value theorem and Assumption~\ref{ass:PB-main},
\[
\begin{aligned}
|G_c(\tau;\theta,x)-G_c(\tau;\theta,0)|
&=
c\left|
\phi'(\rho_c(\tau)-b_0(\theta)-\sigma x)
-
\phi'(\rho_c(\tau)-b_0(\theta))
\right| \\
&\le
c|\sigma|\overline\kappa |x|
=
O(c|x|).
\end{aligned}
\]

Assumption~\ref{ass:branch-tracking} provides the \(C^1\) convergence
$a_c(\rho_c(t))=a^R(t)+o(1)$, and
$\partial_t a_c(\rho_c(t))=(a^R)'(t)+o(1)$,
uniformly on compact subsets. Hence, by the chain rule,
\begin{equation}\label{jb2}
a_c'(\rho_c(t))
=
\frac{(a^R)'(t)+o(1)}{\rho_c'(t)}.
\end{equation}
Therefore,
\[
G_c(\theta;\theta,0)
=
o\left(\frac{1}{\rho_c'(\theta)}\right).
\]

Combining \eqref{jb1} and \eqref{jb2},
\[
G_c(\tau;\theta,0)
=
\underbrace{
c\phi'(\rho_c(\tau)-b_0(\theta))
-
\frac{
U^S_1(a^R(\tau),\theta)(a^R)'(\tau)
}{
\rho_c'(\tau)
}
}_{\overline{G}_c(\tau,\theta)}
+
o\left(\frac{1}{\rho_c'(\tau)}\right).
\]

Differentiating the leading term \(\overline G_c\),
\begin{equation}\label{tru1}
\begin{aligned}
\partial_\tau \overline G_c(\theta,\theta)
&=
c\phi''(\rho_c(\theta)-b_0(\theta))\rho_c'(\theta) \\
&\quad
+
\frac{
U^S_1(a^R(\theta),\theta)(a^R)'(\theta)\rho_c''(\theta)
}{
(\rho_c'(\theta))^2
} \\
&\quad
-
\frac{
U^S_{11}(a^R(\theta),\theta)((a^R)'(\theta))^2
+
U^S_1(a^R(\theta),\theta)(a^R)''(\theta)
}{
\rho_c'(\theta)
}.
\end{aligned}
\end{equation}

The defining ODE for \(\rho_c\) is
\begin{equation}\label{ode1}
c\phi'(\rho_c(t)-b_0(t))\rho_c'(t)
=
U^S_1(a^R(t),t)(a^R)'(t).
\end{equation}
Differentiating \eqref{ode1} and evaluating at \(t=\theta\) yields\begin{equation}\label{odediff}
\begin{aligned}
&c\phi''(\rho_c(\theta)-b_0(\theta))
(\rho_c'(\theta)-b_0'(\theta))\rho_c'(\theta)
+
c\phi'(\rho_c(\theta)-b_0(\theta))\rho_c''(\theta)
\\
&\qquad =
U^S_{11}(a^R(\theta),\theta)((a^R)'(\theta))^2
+
U^S_{12}(a^R(\theta),\theta)(a^R)'(\theta)
+
U^S_1(a^R(\theta),\theta)(a^R)''(\theta).
\end{aligned}
\end{equation}
Using \eqref{ode1}, this becomes
\begin{equation}\label{odediff-div}
\begin{aligned}
&c\phi''(\rho_c(\theta)-b_0(\theta))\rho_c'(\theta)
-
c\phi''(\rho_c(\theta)-b_0(\theta))b_0'(\theta)
+
\frac{
U^S_1(a^R(\theta),\theta)(a^R)'(\theta)\rho_c''(\theta)
}{
(\rho_c'(\theta))^2
}
\\
&\qquad =
\frac{
U^S_{11}(a^R(\theta),\theta)((a^R)'(\theta))^2
+
U^S_{12}(a^R(\theta),\theta)(a^R)'(\theta)
+
U^S_1(a^R(\theta),\theta)(a^R)''(\theta)
}{
\rho_c'(\theta)
}.
\end{aligned}
\end{equation}
Rearranging \eqref{odediff-div}, the left-hand side of \eqref{tru1} equals
\[
c\phi''(\rho_c(\theta)-b_0(\theta))b_0'(\theta)
+
\frac{
U^S_{12}(a^R(\theta),\theta)(a^R)'(\theta)
}{
\rho_c'(\theta)
}.
\]
Thus
\[
\partial_\tau \overline G_c(\theta,\theta)
=
c\phi''(\rho_c(\theta)-b_0(\theta))b_0'(\theta)
+
\frac{
U^S_{12}(a^R(\theta),\theta)(a^R)'(\theta)
}{
\rho_c'(\theta)
}.
\]

By Assumption~\ref{ass:smallcost-main}, \(U^S_{12}>0\), and since
\((a^R)'>0\), \(\rho_c'>0\), \(\phi''>0\), and \(b_0'>0\), the right-hand
side is positive. Moreover, for every compact
\(K\Subset\operatorname{Int}\Theta\), there is \(m_K>0\) such that
\[
\partial_\tau \overline G_c(\theta,\theta)
\ge
\frac{m_K}{\rho_c'(\theta)}
\qquad\text{for all }\theta\in K.
\]
By continuity, the same lower bound holds for
\(\tau\) in a sufficiently small neighborhood of \(\theta\), uniformly for
\(\theta\in K\). Since \(\overline G_c(\theta,\theta)=0\), the mean-value
theorem gives, for every sufficiently small \(h>0\),
\[
\overline G_c(\theta-h,\theta)
\le
-\frac{m_Kh}{\rho_c'(\theta)}
<
0
<
\frac{m_Kh}{\rho_c'(\theta)}
\le
\overline G_c(\theta+h,\theta).
\]
Using the expansion of \(G_c(\tau;\theta,0)\) around \(\overline G_c(\tau,\theta)\)
uniformly on this local neighborhood, the same signs hold for\[
G_c(\theta-h;\theta,0)<0<G_c(\theta+h;\theta,0)
\]
for all sufficiently small \(c\).

Finally, since $
|G_c(\tau;\theta,x)-G_c(\tau;\theta,0)|=O(c|x|)$
and \(c\rho_c'(\theta)\to0\) uniformly on compact subsets,
\[
c|x|
=
o_p\left(\frac{1}{\rho_c'(\theta)}\right).
\]
Hence the shock perturbation is smaller than the sign gap
\(m_Kh/\rho_c'(\theta)\) with probability tending to one. Therefore
$G_c(\theta-h;\theta,x)<0<G_c(\theta+h;\theta,x)$
with probability tending to one.
Hence, with
probability tending to one, $\theta-h<\tau_c<\theta+h$.
Choosing \(h<\varepsilon\), 
$\sup_{\theta\in K}
\Pr\left(|\tau_c-\theta|>\varepsilon\mid\theta\right)
\to0$. Thus \(\tau_c\to\theta\) in probability, uniformly for \(\theta\in K\) and $a_c(R_c(\theta,b))\to a^R(\theta)$ follows immediately.
\end{proof}

\begin{proof}[\textbf{Proof of Theorem~\ref{thm:cost-floor}}]
Recall that $d_c(\theta)=\rho_c(\theta)-b_0(\theta)$ and $Y_c(\theta)=c\phi(d_c(\theta))$. The first paragraph of the proof of Proposition~\ref{prop:small-cost-revelation} establishes $Y_c\to\Gamma$ uniformly, and its last paragraph supplies $c\phi(R_c(\theta,b)-b)-c\phi(d_c(\theta))\to0$ in probability with a uniform $L^1$ bound. Dominated convergence then yields $\E[c\phi(R_c(\theta,b)-b)]-\E[Y_c(\theta)]\to0$, and uniform convergence of $Y_c$ to $\Gamma$ pins down $\E[c\phi(R_c(\theta,b)-b)]\to\E[\Gamma]$.

Since $T\ge0$, exchanging the order of integration produces
\[
\E[\Gamma(\theta)]
=
\int_\Theta\int_{\underline\theta}^{\theta}T(u)\,du\,dF(\theta)
=
\int_\Theta(1-F(u))T(u)\,du,
\]
with both sides interpreted in $[0,\infty]$.
If $\underline\theta=-\infty$ and $T(\theta)\ge\eta>0$ for all $\theta\le\theta_0$, then $\int_{-\infty}^{\min\{\theta,\theta_0\}}T(u)\,du=\infty$ for every finite $\theta$, so $\Gamma(\theta)=\infty$ a.s.\ and $\E[\Gamma(\theta)]=\infty$.
\end{proof}

\begin{proof}[\textbf{Proof of Theorem~\ref{thm:uninf-existence}}]
Denoting $\psi=(\phi')^{-1}$, for an increasing action rule $a$, the sender's first-order condition is
\begin{equation}\label{eq:def-ba}
b_a(r,\theta)=r-\psi\!\left(\frac{U^S_1(a(r),\theta)\,a'(r)}{c}\right),
\end{equation}

and Bayes' rule joint with receiver optimality require
\begin{equation}\label{bayesjb}
\int_\Theta U^R_1(a(r),\theta)\,f(\theta)\,q(b_a(r,\theta))\,\partial_r b_a(r,\theta)\,d\theta=0.
\end{equation}
Consider perturbed babbling with $a(r)=a^0+\varepsilon v(r)+o(\varepsilon)$.
A Taylor expansion jointly with \(\psi'(0)=1/\phi''(0)\) yields
\begin{equation*}
\label{eq:psi-expanded}
\psi\!\left(
\frac{U^S_1(a(r),\theta)a'(r)}{c}
\right)
=
\varepsilon
\frac{U^S_1(a^0,\theta)v'(r)}
{c\phi''(0)}
+
o(\varepsilon).
\end{equation*}

Substituting the above into the definition of \(b_a\) in
\eqref{eq:def-ba} gives
\begin{eqnarray*}
\label{eq:b-expanded}
b_a(r,\theta)
&=&
r-
\varepsilon
\frac{U^S_1(a^0,\theta)v'(r)}
{c\phi''(0)}
+
o(\varepsilon)\\
\label{eq:br-expanded}
\partial_r b_a(r,\theta)
&=&
1-
\varepsilon
\frac{U^S_1(a^0,\theta)v''(r)}
{c\phi''(0)}
+
o(\varepsilon).
\end{eqnarray*}

So that using the previous equation on a Taylor expansion of $q$ at $r$,
\begin{equation}
\label{eq:q-expanded}
q(b_a(r,\theta))
=
q(r)
-
\varepsilon
\frac{U^S_1(a^0,\theta)v'(r)q'(r)}
{c\phi''(0)}
+
o(\varepsilon).
\end{equation}
Another Taylor expansion yields
\begin{equation}
\label{eq:UR-expanded}
U^R_1(a(r),\theta)
=
U^R_1(a^0,\theta)
+
\varepsilon U^R_{11}(a^0,\theta)v(r)
+
o(\varepsilon).
\end{equation}

Substituting these expressions into \eqref{bayesjb} and letting \(\varepsilon\to0\) simplifies to
\begin{equation}
\label{eq:first-order-condition}
q(r)v(r)\E[U^R_{11}(a^0,\theta)]
-
\frac{v'(r)q'(r)+v''(r)q(r)}
{c\phi''(0)}
\E[U^R_1(a^0,\theta)U^S_1(a^0,\theta)]
=
0.
\end{equation}

Using the product rule, $v'(r)q'(r)+v''(r)q(r)
=
(q(r)v'(r))'$ and rearranging reduces to the Sturm--Liouville equation
\begin{equation}\label{sturm}
-\underbrace{\E[U^R_1(a^0,\theta)U^S_1(a^0,\theta)]}_{A}\,(q\,v')'\;=\;c\,\phi''(0)\,\,q\,v \underbrace{(-\E[U^R_{11}(a^0,\theta)])}_{H}.
\end{equation}
By Assumption~\ref{ass:uninf-align}, $\theta\mapsto U^R_1(a^0,\theta)$ is strictly increasing and $\theta\mapsto U^S_1(a^0,\theta)$ is weakly increasing and not constant. Since $\E[U^R_1(a^0,\theta)]=0$ at the prior-optimal action, Chebyshev's covariance inequality yields $A=\Cov(U^R_1,U^S_1)>0$.

Let $L_qv:=-q^{-1}(qv')'$ on $L^2(q)$, where $q$ is the anchor density. In the real-line case of Assumption~\ref{ass:uninf-noise}, $q\propto e^{-V_b}$ with $V_b''$ bounded below by a positive constant, so $L_q$ is essentially self-adjoint and its spectrum is discrete: the ground-state transformation maps $L_q$ to a Schr\"odinger operator with potential $(V_b')^2/4-V_b''/2$, which diverges as $|x|\to\infty$ because the lower bound on $V_b''$ forces $|V_b'|$ to grow linearly. The Poincar\'e inequality for strongly log-concave measures yields a first nonzero eigenvalue \(\lambda_1>0\), which is simple. An associated eigenfunction \(v_1\) with
$L_qv_1=\lambda_1v_1$ may be chosen strictly increasing: $w=qv_1'$ satisfies $w'=-\lambda_1qv_1$ and vanishes at $\pm\infty$, and since $v_1$ is orthogonal to constants it changes sign exactly once, so $w$ increases while $v_1<0$ and decreases while $v_1>0$, hence $w>0$ everywhere and $v_1'>0$. Then, \eqref{sturm} can be written as \(\mathcal L_c v=0\), where $L_c v=A L_qv-c\phi''(0)Hv$. Setting
$
\bar c=\frac{A\lambda_1}{H\phi''(0)}
$
returns $$
\mathcal L_{\bar c}v_1
=
A L_qv_1-\bar c\,\phi''(0)Hv_1
=
A\lambda_1v_1-\bar c\,\phi''(0)Hv_1
=
0.
$$
Thus \(v_1\in\ker\mathcal L_{\bar c}\). Conversely,
\(\mathcal L_{\bar c}v=0\) is equivalent to \(L_qv=\lambda_1v\). Since
\(\lambda_1\) is simple, $\ker\mathcal L_{\bar c}=\operatorname{span}\{v_1\}$.

Let \(\mathcal F(c,a)=0\) denote the normalized nonlinear equilibrium equation
corresponding to \eqref{bayesjb}, viewed as a $C^1$ map on a weighted Sobolev space based on $L^2(q)$ on which $\mathcal L_c$, being self-adjoint with discrete spectrum, is Fredholm of index zero, so the Crandall--Rabinowitz theorem applies. The expansion above implies $D_a\mathcal F(c,a^0)=\mathcal L_c$. The Crandall--Rabinowitz transversality condition requires $\partial_c\mathcal L_{\bar c}v_1\notin
\operatorname{Range}(\mathcal L_{\bar c})$.
Since \(\mathcal L_{\bar c}\) is self-adjoint, its adjoint kernel is also
\(\operatorname{span}\{v_1\}\). Hence it suffices to check that
\[
\left\langle v_1,\partial_c\mathcal L_{\bar c}v_1\right\rangle_{L^2(q)}
=
-\phi''(0)H\int v_1(r)^2q(r)\,dr
\neq 0.
\]
This is nonzero because \(\phi''(0)>0\), \(H>0\), \(q>0\), and
\(v_1\not\equiv0\). Therefore the transversality
condition holds. The theorem yields a \(C^1\) local branch of nontrivial
solutions \((c_\varepsilon,a_\varepsilon)\) bifurcating from
\((\bar c,a^0)\), with
\[
a_\varepsilon(r)=a^0+\varepsilon v_1(r)+o(\varepsilon),
\qquad
c_\varepsilon\to\bar c.
\]

For small nonzero $\varepsilon$, the action rule is strictly increasing ($v_1'>0$ everywhere, and the branch remainder is $o(\varepsilon)$ in the $C^1$ norm of the construction), $\partial_r b_{a_\varepsilon}>0$, and the sender's second-order condition is strict since the cost curvature is bounded away from zero while the action-rule perturbation is of order $\varepsilon$. Each nearby nontrivial solution is therefore a regular equilibrium. Since $a_\varepsilon$ is nonconstant, the receiver's posterior is not the prior almost surely. Taking any sufficiently small nonzero $\varepsilon$ and setting $c^\ast=c_\varepsilon$ delivers the claimed existence of a positive cost level supporting informative communication.
\end{proof}

\begin{proof}[\textbf{Proof of Theorem~\ref{thm:uninf-fullinfo}}]
I use the maintained convention stated after Definition~\ref{def:regular-main} that regular equilibria exist, and in the additive real-line case the selected branch is the smooth global branch characterized by Theorem~\ref{thm:ode-main}. 

Recall that
\(T(\theta)=U^S_1(a^R(\theta),\theta)(a^R)'(\theta)\) and
\(\Gamma(\theta)=\int_{\underline\theta}^\theta T(u)\,du\). Fix
\(\kappa>0\), and define \(\rho_\kappa\) by
\[
c\,\phi(\rho_\kappa(\theta)-\mu)=\kappa+\Gamma(\theta),
\qquad \rho_\kappa(\theta)\ge\mu .
\]
By the maintained assumptions, \(\rho_\kappa\) is well defined and strictly
increasing. The role of \(\kappa\) is to avoid the endpoint degeneracy at
\(\theta=\underline\theta\), where \(\Gamma(\underline\theta)=0\) and
\(\phi'(0)=0\).
For the deterministic anchor \(b=\mu\), if type \(\theta\) reports as type
\(t\), her payoff is 
$U^S(a^R(t),\theta)-\kappa-\Gamma(t)$.
Differentiating with respect to \(t\):
\[
(a^R)'(t)
\bigl\{
U^S_1(a^R(t),\theta)-U^S_1(a^R(t),t)
\bigr\}.
\]
By \(U^S_{12}>0\), this derivative is positive for \(t<\theta\) and negative
for \(t>\theta\). Hence truthful separation is globally incentive compatible,
and after report \(\rho_\kappa(\theta)\), the receiver chooses \(a^R(\theta)\).

For the deterministic action rule \(a=a^R\circ\rho_\kappa^{-1}\), the defining
equation for \(\rho_\kappa\) is $b_a(\rho_\kappa(t),t)=\mu$. Moreover,
\[
\left.
\partial_\theta b_a(\rho_\kappa(t),\theta)
\right|_{\theta=t}
=
-\psi'\!\left(\phi'(\rho_\kappa(t)-\mu)\right)
\frac{U^S_{12}(a^R(t),t)(a^R)'(t)}
{c\rho_\kappa'(t)}
<0.
\]
Thus \(b_a(\rho_\kappa(t),\theta)\) crosses \(\mu\) with nonzero slope at
\(\theta=t\). Hence, under \(b=\mu+\sigma x\), the likelihood
term \(q_\sigma(b_a(\rho_\kappa(t),\theta))\) concentrates at \(\theta=t\) as
\(\sigma\downarrow0\). Write \((R_{\kappa,\sigma},a_{\kappa,\sigma})\) for the regular equilibria along the selected smooth global branch under \(b=\mu+\sigma x\). The \(\sigma\downarrow0\) analogue of Assumption~\ref{ass:branch-tracking} maintained in the statement gives
\[
a_{\kappa,\sigma}(\rho_\kappa(t))=a^R(t)+o(1),\qquad
\frac{d}{dt}a_{\kappa,\sigma}(\rho_\kappa(t))=(a^R)'(t)+o(1),
\]
uniformly on compact sets. Let \(\tau_{\kappa,\sigma}\) be defined by
\(R_{\kappa,\sigma}(\theta,\mu+\sigma x)=\rho_\kappa(\tau_{\kappa,\sigma})\). The sender first-order condition at \(r=\rho_\kappa(\tau)\) is
\[
G_{\kappa,\sigma}(\tau;\theta,x)
=
c\phi'(\rho_\kappa(\tau)-\mu-\sigma x)
-U^S_1(a_{\kappa,\sigma}(\rho_\kappa(\tau)),\theta)
 a_{\kappa,\sigma}'(\rho_\kappa(\tau))=0.
\]
Using the chain rule and the preceding convergence,
\[
G_{\kappa,\sigma}(\tau;\theta,0)
=
\bar G_\kappa(\tau,\theta)+o\left(\frac{1}{\rho_\kappa'(\tau)}\right),
\]
uniformly on compact sets, where
\[
\bar G_\kappa(\tau,\theta)
=
c\phi'(\rho_\kappa(\tau)-\mu)-
\frac{U^S_1(a^R(\tau),\theta)(a^R)'(\tau)}{\rho_\kappa'(\tau)} .
\]
At \(\tau=\theta\), \(\bar G_\kappa(\theta,\theta)=0\). Differentiating the defining equation for \(\rho_\kappa\) yields
\[
\partial_\tau \bar G_\kappa(\theta,\theta)
=
\frac{U^S_{12}(a^R(\theta),\theta)(a^R)'(\theta)}{\rho_\kappa'(\theta)}>0.
\]
For every compact subset of \(\operatorname{Int}\Theta\), this derivative is bounded away from zero. Moreover,
\[
|G_{\kappa,\sigma}(\tau;\theta,x)-G_{\kappa,\sigma}(\tau;\theta,0)|=O(\sigma |x|).
\]
The same bracketing argument used in the proof of Proposition~\ref{prop:small-cost-revelation} therefore implies that, for every compact set of noise realizations,
\[
a_{\kappa,\sigma}(R_{\kappa,\sigma}(\theta,\mu+\sigma x))\to a^R(\theta), \quad
R_{\kappa,\sigma}(\theta,\mu+\sigma x)-(\mu+\sigma x)\to \rho_\kappa(\theta)-\mu
\]
in probability, uniformly on compact subsets of \(\operatorname{Int}\Theta\); compactness and monotonicity extend the convergence in probability to \(\Theta\).
Consequently,
\[
c\phi\bigl(R_{\kappa,\sigma}(\theta,\mu+\sigma x)-(\mu+\sigma x)\bigr)
\to
c\phi(\rho_\kappa(\theta)-\mu)
=
\kappa+\Gamma(\theta)
\]
in probability. The sender can always report the anchor \(b\), so the equilibrium reporting cost is bounded above by \(2\sup_{(a,\theta)\in A\times\Theta}|U^S(a,\theta)|\). Hence, integrating both sides,
\[ \E\!\left[c\phi\bigl(R_{\kappa,\sigma}(\theta,\mu+\sigma x)-(\mu+\sigma x)\bigr)\right]
\to
\kappa+\E[\Gamma(\theta)] .
\]

Finally choose a sequence \(\kappa_n\downarrow0\). For each \(n\), choose
\(\sigma_n>0\) small enough that the fixed-\(\kappa_n\) convergence above
holds within \(1/n\) whenever \(\sigma<\sigma_n\). Taking
\(\kappa(\sigma)=\kappa_n\) along such a diagonal sequence yields 
$a_{\kappa(\sigma),\sigma}(R_{\kappa(\sigma),\sigma}(\theta,\mu+\sigma x))
\to a^R(\theta)$
in probability and sends the expected reporting cost to
\(\E[\Gamma(\theta)]\).\end{proof}

\begin{proof}[\textbf{Proof of Theorem~\ref{thm:rello-main}}]
First consider any regular equilibrium, not necessarily affine.  Since the anchor support is all of $\R$ and the sender's payoff from the receiver action is bounded above for fixed $\theta$, the optimal report satisfies $R(\theta,b)\to\pm\infty$ as $b\to\pm\infty$.  Monotonicity and continuity in the anchor imply that every report is used by every state.  If report $r$ is used by type $\theta$, the first-order condition $-(a(r)-\theta-d)\,a'(r)-c(r-b)=0$ returns the unique anchor that induces it:
\[
b_r(\theta)=r+\frac{a'(r)}{c}\bigl(a(r)-\theta-d\bigr)
=r+\frac{a'(r)}{c}\bigl(a(r)-d\bigr)-\frac{a'(r)}{c}\theta.
\]
Thus, for each report, the inducing-anchor locus is affine in $\theta$ with slope $-a'(r)/c$.

Take two reports $r_1$ and $r_2$.  If $a'(r_1)\neq a'(r_2)$, the two affine loci $b_{r_1}(\theta)$ and $b_{r_2}(\theta)$ cross at some $\theta\in\R$.  At the crossing, the same sender type $(\theta,b)$ would have both reports as optimal, contradicting uniqueness of the sender's best reply in a regular equilibrium.  Hence $a'(r)$ is constant on every smooth region.  Continuity and finitely many kink points imply a common constant slope on all of $\R$, so $a(r)=\alpha r+\delta_0$ with $\alpha>0$.

For an affine action rule, the sender's first-order condition simplifies to
\[
R(\theta,b)=\frac{\alpha\theta+cb+\alpha d-\alpha\delta_0}{c+\alpha^2}=\frac{c\beta+\alpha}{c+\alpha^2}\theta+\frac{c\sigma}{c+\alpha^2}x+\frac{\alpha(d-\delta_0)}{c+\alpha^2}.
\]
The receiver's best response is the posterior mean, so
$\alpha=\frac{\Cov(\theta,r)}{\Var(r)}$. Substitution produces equation~\eqref{eq:alpha-beta}. For $\beta>0$, this equation has value $-c\beta\sigma_\theta^2<0$ at $\alpha=0$ and positive leading coefficient, so it has a unique positive root. Since its value at $1/\beta$ is $c\sigma^2/\beta>0$, the positive root lies in $(0,1/\beta)$.

Finally, equilibrium requires $\delta_0=-\alpha\E[r]$. Since $\E[\theta]=\E[x]=0$, $\E[r]=\frac{\alpha(d-\delta_0)}{c+\alpha^2}$, hence $\delta_0=-\alpha^2 d/c$. Substituting this intercept recovers the equilibrium report.
\end{proof}

\begin{proof}[\textbf{Proof of Corollary~\ref{cor:PA-comp}}]
Define\[
F(\alpha;c,\sigma,\beta)=\beta\sigma_\theta^2\alpha^2+(c\sigma^2+c\beta^2\sigma_\theta^2-\sigma_\theta^2)\alpha-c\beta\sigma_\theta^2.
\]
Theorem~\ref{thm:rello-main} implies that $0<\alpha<1/\beta$, and $F$ is convex in $\alpha$ with $F(0)<0$ and $F(1/\beta)=c\sigma^2/\beta>0$, so $F_\alpha>0$ at the equilibrium root. Rearranging produces the identity
\begin{equation}\label{eq:F-identity}
\sigma_\theta^2\alpha(1-\beta\alpha)=c\bigl[(\sigma^2+\beta^2\sigma_\theta^2)\alpha-\beta\sigma_\theta^2\bigr],
\end{equation}
whose left-hand side is strictly positive since $\beta\alpha<1$. Implicit differentiation yields $\partial\alpha/\partial p=-F_p/F_\alpha$ for any primitive $p$. A direct calculation implies $F_\sigma=2c\sigma\alpha>0$, so $\partial\alpha/\partial\sigma<0$. From \eqref{eq:F-identity}, $F_c=(\sigma^2+\beta^2\sigma_\theta^2)\alpha-\beta\sigma_\theta^2>0$, so $\partial\alpha/\partial c<0$. The bias $d$ does not appear in $F$, so $\alpha$ is independent of $d$.

For the monotonicity of \(\beta\alpha\) in \(\beta\), differentiate \(F=0\)
with respect to \(\beta\). Since
\[
\frac{d(\beta\alpha)}{d\beta}
=
\alpha+\beta\frac{\partial\alpha}{\partial\beta}
=
\frac{\alpha F_\alpha-\beta F_\beta}{F_\alpha},
\]
a direct calculation using \(F=0\) returns $\alpha F_\alpha-\beta F_\beta
=
2c\beta\sigma_\theta^2(1-\beta\alpha)$. Since \(F_\alpha>0\), \(c>0\), \(\beta>0\), and \(\beta\alpha<1\), it follows that $\frac{d(\beta\alpha)}{d\beta}>0$.
This proves the claimed monotonicity of \(\beta\alpha\) in \(\beta\), completing part (i).

For (ii), the residual posterior variance $V=c(1-\beta\alpha)\sigma_\theta^2/(c+\alpha^2)$ has $V_c=(1-\beta\alpha)\sigma_\theta^2\alpha^2/(c+\alpha^2)^2>0$ and $V_\sigma=0$ at fixed $\alpha$, while $V_\alpha=-c\sigma_\theta^2[\beta c+\alpha(2-\beta\alpha)]/(c+\alpha^2)^2<0$ since $\beta\alpha<1$. Combining with $\partial\alpha/\partial c<0$ and $\partial\alpha/\partial\sigma<0$ delivers $\mathrm{d}V/\mathrm{d}c>0$ and $\mathrm{d}V/\mathrm{d}\sigma>0$. The bias $d$ does not appear in $V$, so $V$ and $W^R=-V/2$ are independent of $d$.

For (iii), the average inflation is $\E[r-b]=\E[R(\theta,b)]$, which from the report rule equals $\alpha d/c$ (the constant term, since $\E[\theta]=\E[b-\beta\theta]=0$). Both factors $\alpha$ and $1/c$ are decreasing in $c$, so $\partial(\alpha d/c)/\partial c<0$; in $\sigma$, only $\alpha$ moves and decreases, so $\partial(\alpha d/c)/\partial\sigma<0$; the dependence in $d$ is linear with slope $\alpha/c>0$.
\end{proof}

\begin{proof}[\textbf{Proof of Corollary~\ref{cor:PA-comp_s}}]
Substituting the equilibrium identity $V(c+\alpha^2)=c(1-\beta\alpha)\sigma_\theta^2$ into $W^S=-(V+d^2)(1+\alpha^2/c)/2$,
\[
W^S = -\tfrac12 V(1+\alpha^2/c) - \tfrac12 d^2(1+\alpha^2/c) = -\tfrac12(1-\beta\alpha)\sigma_\theta^2 - \tfrac12 d^2 - \tfrac{d^2\alpha^2}{2c}.
\]

For the comparative statics in $c$ and $\sigma$, substituting $F(\alpha)=0$ into $F_\alpha=2\beta\sigma_\theta^2\alpha+c\sigma^2+c\beta^2\sigma_\theta^2-\sigma_\theta^2$ simplifies to $F_\alpha=\beta\sigma_\theta^2(c+\alpha^2)/\alpha$, so implicit differentiation of $F=0$ produces
\[
\frac{d\alpha}{dc}=-\frac{\alpha^2(1-\beta\alpha)}{c\beta(c+\alpha^2)}<0,
\qquad
\frac{d\alpha}{d\sigma}=-\frac{2c\sigma\alpha^2}{\beta\sigma_\theta^2(c+\alpha^2)}<0.
\]
Total differentiation of $W^S$ in $c$ produces
\[
\frac{dW^S}{dc}=\frac{d\alpha}{dc}\left[\frac{\beta\sigma_\theta^2}{2}-\frac{d^2\alpha}{c}\right]+\frac{d^2\alpha^2}{2c^2}.
\]
At the zero of this derivative, substituting the closed form for $d\alpha/dc$ yields, after rearranging,
\begin{equation}\label{eq:dlow}
|\underline d|^2 = \frac{c\beta\sigma_\theta^2(1-\beta\alpha)}{2\alpha+\beta(c-\alpha^2)},
\end{equation}
the unique positive value of $d^2$ at which $dW^S/dc=0$, with $dW^S/dc>0$ iff $d^2>|\underline d|^2$. In $\sigma$, $\alpha^2/c$ depends on $\sigma$ only through $\alpha$, so
\[
\frac{dW^S}{d\sigma}=\frac{d\alpha}{d\sigma}\left[\frac{\beta\sigma_\theta^2}{2}-\frac{d^2\alpha}{c}\right],
\]
whose sign is opposite that of the bracket; hence $dW^S/d\sigma>0$ iff
$
d^2>|\overline d|^2=\frac{c\beta\sigma_\theta^2}{2\alpha}.
$
Finally,
\[
\frac{|\underline d|^2}{|\overline d|^2}=\frac{2\alpha(1-\beta\alpha)}{2\alpha+\beta(c-\alpha^2)},
\]
and the denominator exceeds the numerator by $2\alpha+\beta c-\beta\alpha^2-2\alpha+2\beta\alpha^2=\beta(c+\alpha^2)>0$, so $|\underline d|<|\overline d|$.
\end{proof}

\begin{proof}[\textbf{Proof of Corollary~\ref{cor1r}}]
The sender's first-order condition is
$
r-b=-\frac{\alpha}{c}\bigl(a(r)-\theta-d\bigr).
$
Since $a(r)=\E[\theta| r]$, $\E[a-\theta]=0$ and $\E[(a-\theta-d)^2]=\Var(\theta| r)+d^2$.  Therefore
\[
\E[D(r,b)]=\frac{c}{2}\E[(r-b)^2]
=\frac{\alpha^2}{2c}\bigl(\Var(\theta| r)+d^2\bigr).
\]
Sender welfare is
\[
W^S=-\tfrac{1}{2}\E[(a-\theta-d)^2]-\E[D(r,b)]
=-\tfrac{1}{2}\bigl(\Var(\theta| r)+d^2\bigr)-\frac{\alpha^2}{2c}\bigl(\Var(\theta| r)+d^2\bigr).
\]

\textit{Uninformative anchor derivations}.
Suppose $\beta=0$ in \eqref{eq:alpha-beta}. The regular-equilibrium condition becomes \eqref{eq:uninf-degen}. At  $c\sigma^2=\sigma_\theta^2$, the affine report rule is well defined for every $\alpha>0$:
\[
a(r)=\alpha r-\frac{\alpha^2 d}{c},
\qquad
R(\theta,b)=
\frac{\alpha}{c+\alpha^2}\theta
+
\frac{c}{c+\alpha^2}b
+
\frac{\alpha d}{c}.
\]
Since $(\theta,r)$ is jointly Gaussian, $\Var(\theta| r)=\sigma_\theta^2-\frac{\Cov(\theta,r)^2}{\Var(r)}=\frac{c\sigma_\theta^2}{c+\alpha^2}$,
which directly recovers receiver welfare. The sender-cost calculation in \eqref{eq:gauss-cost} and \eqref{eq:gauss-WS}, evaluated at $\beta=0$ and $c\sigma^2=\sigma_\theta^2$, produces $V(1+\alpha^2/c)=\sigma_\theta^2$ (so the action-mismatch piece reduces to $\sigma_\theta^2/2$) and a bias piece $d^2(c+\alpha^2)/(2c)=d^2/2+d^2\alpha^2/(2c)$, hence $W^S=-\sigma_\theta^2/2-d^2/2-d^2\alpha^2/(2c)$.
\end{proof}

\begin{proof}[\textbf{Proof of Proposition~\ref{prop:hybrid-local-language}}]
Consider labels $k<\ell$ and a numerical report $r$. By Definition~\ref{def:cell-hybrid-main}, label $k$ is used only by states in $I_k$, and label $\ell$ is used only by states in $I_\ell$. Conditional on a label, the continuation is the regular anchored equilibrium for the prior truncated to that cell. Since $M=\R$, $B(\theta)=\R$, and the anchor density is strictly positive, Corollary~\ref{cor:fulluse-unbounded} applies inside every cell: each state in the cell can generate every report by some anchor realization. Hence the posterior after $(k,r)$ has support $I_k$, while the posterior after $(\ell,r)$ has support $I_\ell$.

Because the cells are ordered, the posterior after $(\ell,r)$ strictly first-order stochastically dominates the posterior after $(k,r)$. Assumption~\ref{ass:receiver-main} implies strict concavity in $a$ and increasing differences in $(a,\theta)$, so the receiver's optimal action is strictly increasing under strict first-order stochastic dominance. Therefore $a_k(r)<a_\ell(r)$. The same within-cell full-use argument shows the final claim: for every $\theta\in I_k$ and every $r\in\R$, some anchor $b$ satisfies $R_k(\theta,b)=r$.
\end{proof}

\begin{proof}[\textbf{Proof of Theorem~\ref{thm:hybrid-anchor}}]
Define $L_X^i(d)=W^{i,FB}(d)-W_X^i(d)$ for $X\in\{A,H\}$, with $W^{R,FB}=\int_\Theta U^R(a^R(\theta),\theta)f(\theta)d\theta$ and $W^{S,FB}(d)=\int_\Theta U^S(a^S(\theta;d),\theta;d)f(\theta)d\theta$ the ideal-action benchmarks at zero reporting cost. The proof of Theorem~\ref{thm:hybrid-ranking} establishes $L_H^i(d)\to0$ as $d\downarrow 0$; it remains to bound $L_A^i(d)$ below by a primitive constant uniformly in $d$ small.

\smallskip
\noindent\emph{Reports stay within $K$ of the anchor.}  At any regular anchor-only equilibrium $(R,a)$, sender optimality rules out the deviation to $b$, so $c\phi(R(\theta,b)-b)\le U^S(a(R(\theta,b)),\theta;d)-U^S(a(b),\theta;d)$ for every $(\theta,b)$. Receiver actions stay in $A=[a^R(\underline\theta),a^R(\overline\theta)]$ and $U^S$ is bounded on $A\times\Theta\times[0,\bar d_0]$, so the right side is bounded by $\Delta_S<\infty$. Since Assumption~\ref{ass:PB-main} implies $\phi(u)\ge(\underline\kappa/2)u^2$, $|R(\theta,b)-b|\le K=\sqrt{2\Delta_S/(c\underline\kappa)}$.

\smallskip
\noindent\emph{Information gap.} For any bounded continuous $g:\Theta\to\mathbb R$ with
$g(\overline\theta)>g(\underline\theta)$, let $\Gamma_K(g)$ be the infimum, over
nondecreasing functions $\alpha:\R\to A$, of\[
\E_{\theta,b}\!\left[\max\{g(\theta)-\alpha(b+K),0\}^2+\max\{\alpha(b-K)-g(\theta),0\}^2\right],
\]
where $(\theta,b)$ has density $f(\theta)q_\sigma(b\mid\theta)$. The integrand is the squared distance of \(g(\theta)\) from the interval
\([\alpha(b-K),\alpha(b+K)]\). It is therefore zero exactly when
$
\alpha(b-K)\le g(\theta)\le \alpha(b+K)$. Suppose by contradiction that \(\Gamma_K(g)=0\). Taking a minimizing sequence and using the standard compactness of bounded monotone functions, one may pass to a monotone limit \(\alpha\):
$$\E_{\theta,b}\!\left[\max\{g(\theta)-\alpha(b+K),0\}^2+\max\{\alpha(b-K)-g(\theta),0\}^2\right]=0.$$
By continuity of \(g\), for almost every anchor value \(b\) this condition holds for every state \(\theta\). For any  \(b'\), applying the condition to the lowest and highest states yields
$\alpha(b'-K)\le g(\underline\theta)$
and $g(\overline\theta)\le \alpha(b'+K)$. Consider \(\varepsilon>0\) and  \(b\), so that one can evaluate the first inequality at $b'=b+2K+\varepsilon$ and the second inequality at $b'=b$. Then, $g(\overline\theta)\le \alpha(b+K)\le \alpha(b+K+\varepsilon)\le g(\underline\theta)$,
which contradicts the strict monotonicity of \(g\). Hence \(\Gamma_K(g)>0\).

\noindent\noindent\emph{Receiver loss.}
$a$ is nondecreasing and $|R(\theta,b)-b|\le K$, so $a(R(\theta,b))\in[a(b-K),a(b+K)]$. Uniform strong concavity of $U^R$ in the action implies
\[
U^R(a^R(\theta),\theta)-U^R(a(R(\theta,b)),\theta)
\ge
\frac{\eta}{2}\bigl(a(R(\theta,b))-a^R(\theta)\bigr)^2.
\]
The actual action lies in $[a(b-K),a(b+K)]$, so its distance from
$a^R(\theta)$ is at least the distance from $a^R(\theta)$ to this interval.
Thus
\[
L_A^R(d)\ge
\frac{\eta}{2}\,
\E\!\left[
\max\{a(b-K)-a^R(\theta),0\}^2
+
\max\{a^R(\theta)-a(b+K),0\}^2
\right].
\]
The equilibrium action rule $a$ is a nondecreasing rule in the
definition of $\Gamma_K(a^R)$. Hence $L_A^R(d)\ge\frac{\eta}{2}\Gamma_K(a^R)>0$.

\smallskip
\noindent\emph{Sender loss.}
The first-order approximation $a^S(\theta;d)=a^R(\theta)+d\hat\delta(\theta)+O(d^2)$ yields $\varepsilon_d=\sup_\theta|a^S(\theta;d)-a^R(\theta)|\to 0$. For a closed interval $[l,r]$, the square root of the integrand in the definition of $\Gamma_K$ equals $\max\{l-x,x-r,0\}$, which is $1$-Lipschitz in $x$. Hence $|a^S(\theta;d)-a^R(\theta)|\le\varepsilon_d$ together with the $L^2$ triangle inequality yields $\sqrt{\Gamma_K(a^S(\cdot;d))}\ge\sqrt{\Gamma_K(a^R)}-\varepsilon_d$. Consider $\bar d>0$ small enough that $\varepsilon_d\le\sqrt{\Gamma_K(a^R)}/2$ on $[0,\bar d]$; then $\Gamma_K(a^S(\cdot;d))\ge\Gamma_K(a^R)/4$. Assumption~\ref{ass:primitive-smallbias} provides uniform strong concavity of $U^S$ on $A\times\Theta\times[0,\bar d_0]$ with constant $\eta_S$, so
\[
L_A^S(d)\ge\frac{\eta_S}{2}\Gamma_K(a^S(\cdot;d))\ge\frac{\eta_S}{8}\Gamma_K(a^R)>0,
\]
uniformly in $d\in[0,\bar d]$. The conclusion follows readily.
\end{proof}

\begin{proof}[\textbf{Proof of Theorem~\ref{thm:hybrid-ranking}}]
Let $\kappa_i=\ell_i''(0)$ and $s(\theta)=g'(\theta)/g(\theta)$. Log-concavity of $g$ implies $s'\le 0$. Since $\ell_i$ is even and $C^5$, $\ell_i(e)=\kappa_ie^2/2+O(e^4)$ and $\ell_i'(e)=\kappa_ie+O(e^3)$ uniformly for small $e$.

\smallskip
\noindent\emph{Cheap talk only.}
For a cell of width $h$ centered at $m$, define $a_C(m,h)=m+\alpha$, as the receiver's optimal action, and $u=\theta-m$. The receiver's first-order condition is
\[
0=\int_{-h/2}^{h/2}\ell_R'(\alpha-u)g(m+u)\,du.
\]
Using $g(m+u)=g(m)\{1+s(m)u+O(u^2)\}$ and $\ell_R'(e)=\kappa_R e+O(e^3)$, this reduces to $\kappa_Rg(m)\{\alpha h-s(m)h^3/12\}+O(h^5)=0$, hence
\begin{equation}\label{eq:cell-action}
a_C(m,h)=m+\frac{s(m)}{12}h^2+O(h^4).
\end{equation}
At an interior boundary $\theta$ with left width $h$ and right width $q$, the boundary type's sender residuals, i.e., the difference between the receiver action and the sender preferred action, are
\begin{align}
e_+^0(q)&=\tfrac{q}{2}-d\hat\delta(\theta)+\tfrac{s(\theta)}{12}q^2+O(q^3+d^2),\label{eq:eplus}\\
e_-^0(h)&=-\tfrac{h}{2}-d\hat\delta(\theta)+\tfrac{s(\theta)}{12}h^2+O(h^3+d^2) ,\label{eq:eminus}
\end{align}
which follows from evaluating the receiver's action $a_C$ in (\ref{eq:cell-action}) for the left cell at $m=\theta-h/2$ and for the right cell at $m=\theta+q/2$.  

 Sender indifference $\ell_S(e_+^0(q))=\ell_S(e_-^0(h))$ implies
$e_+^0(q)+e_-^0(h)=0$. Using
\eqref{eq:eplus}--\eqref{eq:eminus} yields the boundary recursion
\begin{equation}
\label{eq:ct-boundary-recursion}
q-h
=
4d\hat\delta(\theta)
-\frac{s(\theta)}{6}(q^2+h^2)
+O(q^3+h^3+d^2).\end{equation}
Let
$
\underline\theta=\theta_0<\theta_1<\cdots<\theta_N=\overline\theta
$
be the most informative cheap-talk partition, and set $h_j=\theta_j-\theta_{j-1}$. It is well-known that the mesh must become fine as $d$ become small, so consider a boundary $\theta_j$ with $j\geq 1$. The left width is $h_j$ and the right width is
$h_{j+1}$, so the recursion becomes:\[
h_{j+1}-h_j
=
4d\hat\delta(\theta_j)
-\frac{s(\theta_j)}{6}(h_{j+1}^2+h_j^2)
+O(h_{j+1}^3+h_j^3+d^2).
\]
Since $\hat\delta$ and $s$ are bounded on $\Theta$, the preceding recursion implies that, for small widths,
$|h_{j+1}-h_j|
\le C(d+h_j^2+h_{j+1}^2)$. A straightforward manipulation of this inequality, with $y_j=\frac{h_j^2}{d}$, yields
$$|y_{j+1}-y_j|
\le
\underbrace{C(h_{j+1}+h_j)}_{\rho_j}(1+y_j+y_{j+1}).$$

Hence $
(1-\rho_j)y_{j+1}\le (1+\rho_j)y_j+\rho_j.
$
For small enough $d$,
$\rho_j\le 1/2$, so dividing by $1-\rho_j$ and absorbing the constant yields $y_{j+1} \le
(1+C(h_{j+1}+h_j))y_j+C(h_{j+1}+h_j)$. Using
$\sum_j h_j=\overline\theta-\underline\theta$ yields that $\sup_j y_j\le C$ and thus 
$h_j=O(\sqrt d)$. Repeating this argument at the first boundary with zero width for the lower cell further yields $h_1=O(d)$.

This implies $h_{j+1}-h_j=O(d)$. Hence $h_{j+1}^2+h_j^2=2h_j^2+O(h_j^3)$,
so that the recursion is
\begin{equation}\label{eq:cs-recursion}
h_C(\theta_{j+1})-h_C(\theta_j)
=
4d\hat\delta(\theta_j)-\frac{s(\theta_j)}{3}h_j^2+O(h_j^3),
\end{equation}
where $h_C(\theta_j)=h_j$ is the left cell width expressed in terms of $\theta_j$.

Using \eqref{eq:cs-recursion} and $h_{j+1}=h_j+O(d)=\theta_{j+1}-\theta_j$,

\begin{equation}\label{eq:ct-square-recursion}
\frac{h_{j+1}^2-h_j^2}{\theta_{j+1}-\theta_j}
=
8d\hat\delta(\theta_j)-\frac{2s(\theta_j)}{3}h_j^2+o(d).
\end{equation}

Given that $h_j^2=O(d)$, one can write $
h_C(\theta_j)^2=dY(\theta_j)+o(d)$, which substituted above and passing to the continuum limit implies $
Y'(\theta)=8\hat\delta(\theta)-\frac{2s(\theta)}{3}Y(\theta).
$\footnote{Here and for \eqref{eq:Dd-comparison} below, the passage from the width recursions to the limit ODEs is the standard Euler-scheme approximation: the squared-width recursion advances by steps $\theta_{j+1}-\theta_j=O(\sqrt d)$ with local truncation error $o(d)$ per unit length (resp.\ $o(d^2)$ for the difference recursion), so the piecewise-linear interpolant of the discrete profile converges uniformly to the solution of the limit ODE. Existence of the most informative partition at small $d$ and the bound $h_1=O(d)$ follow from the monotone forward-shooting structure of the boundary recursion, as in \citet{CrawfordSobel1982}: solutions are ordered by the first width $h_1$, the number of cells is nonincreasing in $h_1$, and successive maximal solutions differ in $h_1$ by $O(d)$.} The initial condition is $Y(\underline\theta)=0$,
because the left width at the left endpoint is zero. Since $\hat\delta>0$, the solution is
\begin{equation}\label{eq:Y-ode}
Y(\theta)
=
8\int_{\underline\theta}^{\theta}
\hat\delta(t)\exp\!\left(-\frac{2}{3}\int_t^\theta s(v)\,dv\right)dt.
\end{equation}
Aggregating the within-cell loss $\kappa_i h^2/24+O(h^4)$ implies\begin{equation}\label{eq:LC-leading}
L_C^i(d)=\frac{d}{24}\int_{\underline\theta}^{\overline\theta}g(\theta)\,\kappa_i\,Y(\theta)\,d\theta+o(d).
\end{equation}

\smallskip
\noindent\emph{Within-cell anchor information.}
Let $\lambda=\beta/\sigma$ and $s_\omega=\omega'/\omega$. Symmetry of $\omega$ kills the odd-moment terms of every likelihood expansion below, and the Fisher informativeness of the linear anchor is
\begin{equation}\label{eq:anchor-info}
\mathcal I=\lambda^2\int s_\omega(u)^2\omega(u)\,du>0.
\end{equation}
For the right cell $[\theta,\theta+q]$, let $u\in[0,q]$ be the distance from the boundary. A Taylor expansion at $u=0$ yields
\begin{equation}\label{eq:hr-likelihood}
\frac{\omega(z-\lambda u)}{\omega(z)}
=
1-\lambda s_\omega(z)u
+\frac{\lambda^2}{2}\bigl(s_\omega'(z)+s_\omega(z)^2\bigr)u^2
+O_z(u^3).
\end{equation}

Before using the numerical anchor, define $\E_+$ as expectation over $u\in[0,q]$ with density proportional to $g(\theta+u)$. Using the same first-order Taylor expansion of $g$ as above,
$
g(\theta+u)=g(\theta)\{1+s(\theta)u+O(u^2)\}$,
so, for $k\ge1$,
\[
\E_+[u^k]
=
\frac{\int_0^q u^k\{1+s(\theta)u+O(u^2)\}\,du}
{\int_0^q \{1+s(\theta)u+O(u^2)\}\,du}.
\]
Applying this formula for $k=1,2,3$ implies\begin{align}
\mu_+(q)&=\E_+[u]
=\tfrac{q}{2}+\tfrac{s(\theta)}{12}q^2+O(q^3),\notag\\
V_+(q)&=\E_+[u^2]-\mu_+(q)^2
=\tfrac{q^2}{12}+O(q^4),\notag\\
M_+(q)&=\E_+[u^3]-\mu_+(q)\E_+[u^2]
=\tfrac{q^3}{12}+O(q^4).
\label{eq:hr-cell-moments}
\end{align}

Define the anchor-induced shift, after observing $z$, in the posterior mean of $u$ as
\[
\Delta_+(z,q)
=
\frac{\E_+\!\left[u\,\omega(z-\lambda u)/\omega(z)\right]}
{\E_+\!\left[\omega(z-\lambda u)/\omega(z)\right]}
-\mu_+(q).
\]
Using \eqref{eq:hr-likelihood} in this definition,
\[
\begin{aligned}
\Delta_+(z,q)
&=
\frac{
\E_+\!\left[
u\left\{
1-\lambda s_\omega(z)u
+\frac{\lambda^2}{2}\bigl(s_\omega'(z)+s_\omega(z)^2\bigr)u^2
+O_z(u^3)
\right\}
\right]
}{
\E_+\!\left[
1-\lambda s_\omega(z)u
+\frac{\lambda^2}{2}\bigl(s_\omega'(z)+s_\omega(z)^2\bigr)u^2
+O_z(u^3)
\right]
}
-\mu_+(q).
\end{aligned}
\]
Multiplying the numerator by the reciprocal of the denominator and simplifying:\[
\begin{aligned}
\Delta_+(z,q)
&=
-\lambda s_\omega(z)\E_+[u^2]
+\lambda s_\omega(z)\mu_+(q)^2\\
&\quad
+\frac{\lambda^2}{2}\bigl(s_\omega'(z)+s_\omega(z)^2\bigr)\E_+[u^3]
-\frac{\lambda^2}{2}\bigl(s_\omega'(z)+s_\omega(z)^2\bigr)\mu_+(q)\E_+[u^2]\\
&\quad
+\lambda^2s_\omega(z)^2\mu_+(q)^3
-\lambda^2s_\omega(z)^2\mu_+(q)\E_+[u^2]
+O_z(q^4).
\end{aligned}
\]
Using the definitions of $V_+(q)$ and $M_+(q)$ in \eqref{eq:hr-cell-moments},
\begin{equation}\label{eq:hr-shift-raw}
\Delta_+(z,q)
=
-\lambda s_\omega(z)V_+(q)
+\frac{\lambda^2}{2}\bigl(s_\omega'(z)+s_\omega(z)^2\bigr)M_+(q)
-\lambda^2s_\omega(z)^2\mu_+(q)V_+(q)
+O_z(q^4).
\end{equation}
The score identities, together with \eqref{eq:anchor-info}, are
$\E_z[s_\omega(z)]=0$,
$\E_z[s_\omega'(z)+s_\omega(z)^2]=0$, and
$\lambda^2\E_z[s_\omega(z)^2]=\mathcal I$. Taking expectations over $z$ in \eqref{eq:hr-shift-raw} and using \eqref{eq:hr-cell-moments},
\begin{equation}\label{eq:shift-moments-plus}
\E_z[\Delta_+(z,q)]
=
-\frac{\mathcal I}{24}q^3-\frac{s(\theta)\mathcal I}{144}q^4+o(q^4),
\qquad
\E_z[\Delta_+(z,q)^2]
=
\frac{\mathcal I}{144}q^4+o(q^5).
\end{equation}

For the left cell $[\theta-h,\theta]$, one can write the state as $\theta+u$ with $u\in[-h,0]$. Define $\E_-$ analogously, using the density proportional to $g(\theta+u)$ on $[-h,0]$, and let $\mu_-(h)=\E_-[u]$. After observing $z$, define
\[
\Delta_-(z,h)
=
\frac{\E_-\!\left[u\,\omega(z-\lambda u)/\omega(z)\right]}
{\E_-\!\left[\omega(z-\lambda u)/\omega(z)\right]}
-\mu_-(h).
\]
The same calculation as for the right cell gives
\begin{equation}\label{eq:shift-moments-minus}
\E_z[\Delta_-(z,h)]
=
\frac{\mathcal I}{24}h^3-\frac{s(\theta)\mathcal I}{144}h^4+o(h^4),
\qquad
\E_z[\Delta_-(z,h)^2]
=
\frac{\mathcal I}{144}h^4+o(h^5).
\end{equation}

\smallskip
\noindent\noindent\emph{Hybrid cell scale.}
At a hybrid boundary, sender indifference includes the numerical report. Relative to the label-only condition that gives \eqref{eq:ct-boundary-recursion}, the anchor changes the receiver action by $\Delta_+(z,q)$ in the right cell and by $\Delta_-(z,h)$ in the left cell.

By \eqref{eq:eplus}--\eqref{eq:eminus}, the boundary type's sender residual is $O(q+d)$ in the right cell and $O(h+d)$ in the left cell. Since $\ell_S$ is locally quadratic, a receiver-action shift $\Delta$ changes sender loss by $O(e\Delta+\Delta^2)$. Hence \eqref{eq:shift-moments-plus}--\eqref{eq:shift-moments-minus} imply that the numerical report changes the boundary type's expected sender loss by
$
O((q+d)q^3+q^4)$
on the right, and 
$O((h+d)h^3+h^4)$
on the left.

To convert these payoff perturbations into a perturbation of the width recursion, use the local form of the label-only payoff gap. Since $\ell_S(e)=\kappa_Se^2/2+O(e^4)$,
\[
\ell_S(e_+^0(q))-\ell_S(e_-^0(h))
=
\frac{\kappa_S}{2}
\bigl(e_+^0(q)-e_-^0(h)\bigr)
\bigl(e_+^0(q)+e_-^0(h)\bigr)
+O(q^4+h^4+d^4).
\]
By \eqref{eq:eplus}--\eqref{eq:eminus},
\[
e_+^0(q)-e_-^0(h)
=
\frac{q+h}{2}+O(q^2+h^2+d^2).
\]
Thus a payoff perturbation of size
$
O((q+d)q^3+q^4)+O((h+d)h^3+h^4)
$
changes the condition for $e_+^0(q)+e_-^0(h)$, and hence the recursion for $q-h$, by at most
\[
C\frac{(q+d)q^3+(h+d)h^3+q^4+h^4}{h+q}.
\]
For small widths this is bounded by $C(d+h^2+q^2)$. Hence the hybrid widths satisfy the same inequality as the cheap-talk widths:
$
|q-h|\le C(d+h^2+q^2).
$
Using the same $y_j=h_j^2/d$ argument as in the cheap-talk part, together with the same endpoint condition for the most informative branch, implies $h_{H,j}=O(\sqrt d)$ uniformly over interior hybrid cells. The endpoint shooting argument also gives $h_{H,1}=O(d)$.

The scale argument above establishes that interior hybrid widths are $O(\sqrt d)$. At this scale, the leading right and left anchor payoff terms cancel in the boundary indifference condition. The remaining payoff imbalance is order $d^{5/2}$, and dividing by the $O(h+q)=O(\sqrt d)$ slope above yields an order-$d^2$ correction to the recursion for $q-h$.

Before computing that correction, note that reporting costs do not enter at the $d^2$ order. Let $a(r)$ be the receiver's action in a hybrid cell of width $h_{H,j}$ after numerical report $r$. The sender's first-order condition for $r$ is
\[
c\phi'(r-b)
=
-\ell_S'\bigl(a(r)-\theta-d\hat\delta(\theta)\bigr)a_r(r).
\]
The posterior after any report is still over states in the same cell. Hence the receiver action is within $O(h_{H,j})$ of any type in that cell, and since $\hat\delta$ is bounded,
\[
a(r)-\theta-d\hat\delta(\theta)=O(h_{H,j}+d).
\]
Moreover, \eqref{eq:hr-shift-raw} and \eqref{eq:hr-cell-moments} show that the report-dependent part of the posterior mean starts at order $h_{H,j}^2$. Differentiating the same expansion with respect to the standardized residual $z$, using the bounded derivatives of $s_\omega$, gives the same order for the sensitivity. Since $z$ is affine in $r$,
$
a_r(r)=O(h_{H,j}^2).
$
Therefore
$
r-b=O((h_{H,j}+d)h_{H,j}^2).
$
Using $h_{H,j}=O(\sqrt d)$, this returns $r-b=O(d^{3/2})$. Since $\phi'(0)=0$ and $\phi''(0)>0$, the reporting cost per type is
$
O((r-b)^2)=O(d^3)=o(d^2).
$
The induced extra change in the receiver action is
\[
a_r(r)(r-b)=O((h_{H,j}+d)h_{H,j}^4)=O(d^{5/2}),
\]
which changes expected quadratic losses only at order $d^3$. Thus reporting costs and the associated strategic report displacement do not affect the $d^2$ comparison.

\smallskip
\noindent\emph{Cutoff shift induced by the anchor.}
At an interior boundary $\theta$, let $G_+(\theta,q,d)$ and $G_-(\theta,h,d)$ be the boundary type's expected sender-utility gain from the numerical report in the right and left cells. Using \eqref{eq:eplus}--\eqref{eq:eminus}, \eqref{eq:shift-moments-plus}--\eqref{eq:shift-moments-minus}, and $\ell_S(e)=\kappa_Se^2/2+O(e^4)$,
\begin{align}
G_+(\theta,q,d)
&=
-\kappa_Se_+^0(q)\,\E_z[\Delta_+(z,q)]
-\frac{\kappa_S}{2}\E_z[\Delta_+(z,q)^2]
+o(q^5)\notag\\
&=
-\kappa_S
\left(\frac q2-d\hat\delta(\theta)+\frac{s(\theta)}{12}q^2\right)
\left(-\frac{\mathcal I}{24}q^3-\frac{s(\theta)\mathcal I}{144}q^4\right)
-\frac{\kappa_S\mathcal I}{288}q^4
+o(q^5)\notag\\
&=
\kappa_S\mathcal I
\left[
\frac{5}{288}q^4
-\frac{d\hat\delta(\theta)}{24}q^3
+\frac{s(\theta)}{144}q^5
\right]
+o(q^5),\notag\\
G_-(\theta,h,d)
&=
-\kappa_Se_-^0(h)\,\E_z[\Delta_-(z,h)]
-\frac{\kappa_S}{2}\E_z[\Delta_-(z,h)^2]
+o(h^5)\notag\\
&=
-\kappa_S
\left(-\frac h2-d\hat\delta(\theta)+\frac{s(\theta)}{12}h^2\right)
\left(\frac{\mathcal I}{24}h^3-\frac{s(\theta)\mathcal I}{144}h^4\right)
-\frac{\kappa_S\mathcal I}{288}h^4
+o(h^5)\notag\\
&=
\kappa_S\mathcal I
\left[
\frac{5}{288}h^4
+\frac{d\hat\delta(\theta)}{24}h^3
-\frac{s(\theta)}{144}h^5
\right]
+o(h^5).
\label{eq:Gpm}
\end{align}

\noindent

From \eqref{eq:Gpm},
\begin{align}
G_+(\theta,q,d)-G_-(\theta,h,d)
&=
\kappa_S\mathcal I
\left[
\frac{5}{288}(q^4-h^4)
-\frac{d\hat\delta(\theta)}{24}(q^3+h^3)
+\frac{s(\theta)}{144}(q^5+h^5)
\right]
+o(h^5).
\label{eq:G-difference-raw}
\end{align}

We now evaluate \eqref{eq:G-difference-raw} at the cheap-talk baseline boundary. In this paragraph, $h$ and $q$ denote the adjacent cheap-talk widths at that boundary; the hybrid correction is computed from this baseline. The squared-width calculation following \eqref{eq:cs-recursion} gives
\[
h^2=dY(\theta)+o(d),
\]
where $Y$ solves \eqref{eq:Y-ode}. For an interior boundary, \eqref{eq:Y-ode} implies $Y(\theta)>0$. Hence $h$ is of order $\sqrt d$, so $d=o(h)$.

Both $h$ and $q$ are of order $\sqrt d$. First use \eqref{eq:cs-recursion} only to get the crude order:
\begin{align}
q-h
&=
4d\hat\delta(\theta)-\frac{s(\theta)}{6}(q^2+h^2)
+O(q^3+h^3+d^2)=
O(d),
\end{align}
which also implies $ q=h+o(h)$. Returning to \eqref{eq:cs-recursion},
$q^2+h^2=2h^2+O(h(q-h))=2h^2+O(h^3)$,  so that
\begin{align}
q-h
&=
4d\hat\delta(\theta)-\frac{s(\theta)}{3}h^2+O(h^3).
\label{eq:q-close-to-h2}
\end{align}
Therefore
\begin{align}
q^3+h^3
=
2h^3+o(h^3), \quad
q^5+h^5
=
2h^5+o(h^5).
\label{eq:sum-power-substitutions}
\end{align}
For the difference $q^4-h^4$, one needs to use the sharper expression for $q-h$ in \eqref{eq:q-close-to-h2}:
\begin{align}
q^4-h^4
&=
(q-h)(q^3+q^2h+qh^2+h^3)\notag\\
&=
\bigl(4d\hat\delta(\theta)-\frac{s(\theta)}{3}h^2+O(h^3)\bigr)
\bigl(4h^3+o(h^3)\bigr)\notag\\
&=
16d\hat\delta(\theta)h^3-\frac{4s(\theta)}{3}h^5+o(h^5).
\label{eq:fourth-power-substitution}
\end{align}
Substituting \eqref{eq:sum-power-substitutions}--\eqref{eq:fourth-power-substitution} into \eqref{eq:G-difference-raw},\begin{equation}\label{eq:gap-difference2}
G_+(\theta,q,d)-G_-(\theta,h,d)
=
\kappa_S\mathcal I
\left[
\frac{7}{36}d\hat\delta(\theta)h^3
-\frac{s(\theta)}{108}h^5
\right]
+o(h^5).
\end{equation}

The gain difference in \eqref{eq:gap-difference2} is measured in utility units. To convert it into a width correction, consider the local response of the label-only sender loss gap to the right width $q$, holding the left width $h$ fixed. From \eqref{eq:eplus}, $\frac{\partial e_+^0(q)}{\partial q}=\frac12+o(1)$.
Also, since $q=h+o(h)$ and $d=o(h)$, \eqref{eq:eplus} implies $
e_+^0(q)=\frac h2+o(h).
$
Therefore, using $\ell_S'(e)=\kappa_Se+O(e^3)$,
\[
\begin{aligned}
\frac{\partial}{\partial q}
\left\{
\ell_S(e_+^0(q))-\ell_S(e_-^0(h))
\right\}
&=
\ell_S'(e_+^0(q))\frac{\partial e_+^0(q)}{\partial q}\\
&=
\left(\frac{\kappa_Sh}{2}+o(h)\right)
\left(\frac12+o(1)\right)=
\frac{\kappa_Sh}{4}+o(h).
\end{aligned}\]

Let $\Delta_A(\theta)$ denote the additional right-cell width needed to restore boundary indifference after introducing the anchor. The hybrid right-minus-left loss gap is the label-only loss gap minus the anchor gain difference. Since the label-only loss gap is zero at the boundary, linearizing around that boundary yields
\[
\left(\frac{\kappa_Sh}{4}+o(h)\right)\Delta_A(\theta)
-
\bigl\{G_+(\theta,q,d)-G_-(\theta,h,d)\bigr\}
=
0,
\]
Using \eqref{eq:gap-difference2},
\begin{align}
\Delta_A(\theta)
&=
\frac{G_+(\theta,q,d)-G_-(\theta,h,d)}{\kappa_Sh/4+o(h)}\notag\\
&=
\mathcal I
\left[
\frac{7}{9}d\hat\delta(\theta)h^2
-\frac{s(\theta)}{27}h^4
\right]
+o(dh^2+h^4).
\label{eq:width-correction-anchor}
\end{align}

Hence the hybrid width recursion is the cheap-talk width recursion \eqref{eq:cs-recursion} plus this additional term:
\[
h_H(\theta_{j+1})-h_H(\theta_j)
=
4d\hat\delta(\theta_j)
-\frac{s(\theta_j)}{3}h_H(\theta_j)^2
+\Delta_A(\theta_j)
+O(h_H(\theta_j)^3).
\]
Multiplying this recursion by $h_H(\theta_{j+1})+h_H(\theta_j)$
 returns the hybrid analogue of \eqref{eq:ct-square-recursion}:
\begin{equation}\label{eq:hybrid-square-recursion}
\frac{h_H(\theta_{j+1})^2-h_H(\theta_j)^2}{\theta_{j+1}-\theta_j}
=
8d\hat\delta(\theta_j)
-\frac{2s(\theta_j)}{3}h_H(\theta_j)^2
+2\Delta_A(\theta_j)
+o(d^2).
\end{equation}
The two equations  \eqref{eq:ct-square-recursion} and \eqref{eq:hybrid-square-recursion} have common terms
$8d\hat\delta(\theta_j)$ and $-\frac{2s(\theta_j)}{3}h(\theta_j)^2$; the anchor adds the term $2\Delta_A(\theta_j)$.
Interpolating the cheap-talk and hybrid squared-width profiles, define
$
D_d(\theta)=h_H(\theta)^2-h_C(\theta)^2.
$
The squared-width comparison can be written as
\begin{equation}\label{eq:Dd-comparison}
D_d'(\theta)
=
-\frac{2s(\theta)}{3}D_d(\theta)
+
2\Delta_A(\theta)
+
o(d^2).
\end{equation}
Since $h_H(\theta)=O(\sqrt d)$, \eqref{eq:width-correction-anchor} implies $\Delta_A(\theta)=O(d^2)$. Hence \eqref{eq:Dd-comparison} can be stated as
\[
D_d'(\theta)=-\frac{2s(\theta)}{3}D_d(\theta)+O(d^2).
\]
Since the squared-width difference starts from zero at the left endpoint, $D_d(\underline\theta)=0$, solving this equation returns
\[
D_d(\theta)
=
e^{-\frac{2}{3}\int_{\underline\theta}^{\theta}s(v)\,dv}
\int_{\underline\theta}^{\theta}
e^{\frac{2}{3}\int_{\underline\theta}^{t}s(v)\,dv}
O(d^2)\,dt.
\]
Because $s$ is bounded and the state interval is compact, $D_d(\theta)=O(d^2)$. Its leading coefficient can be written $D_d(\theta)=d^2R(\theta)+o(d^2)$.
Since $h_C(\theta)^2=dY(\theta)+o(d)$ and $D_d(\theta)=O(d^2)$, it must hold that $h_H(\theta)^2=dY(\theta)+o(d)$.
Using this in \eqref{eq:width-correction-anchor} yields
\begin{equation}\label{eq:DeltaA-leading}
\Delta_A(\theta)
=
d^2\mathcal I
\left[
\frac{7}{9}\hat\delta(\theta)Y(\theta)
-\frac{s(\theta)}{27}Y(\theta)^2
\right]
+o(d^2).
\end{equation}
Substituting $D_d(\theta)=d^2R(\theta)+o(d^2)$ and \eqref{eq:DeltaA-leading} into \eqref{eq:Dd-comparison}:
\begin{equation}\label{eq:R-ode}
R'(\theta)
=
-\frac{2s(\theta)}{3}R(\theta)
+
\mathcal I
\left[
\frac{14}{9}\hat\delta(\theta)Y(\theta)
-\frac{2s(\theta)}{27}Y(\theta)^2
\right],
\qquad
R(\underline\theta)=0.
\end{equation}

The widened partition raises player $i$'s loss by
\begin{equation}\label{eq:cutoff-cost}
G_{\mathrm{cut}}^i(d)
=
\frac{d^2}{24}
\int_{\underline\theta}^{\overline\theta}
g(\theta)\,\kappa_i\,R(\theta)\,d\theta
+o(d^2).
\end{equation}

\smallskip
\emph{Informational gain inside cells.}
Consider first a right cell of width $q$. Before using the anchor, the receiver's posterior mean of $u$ is $\mu_+(q)$. Since $\mu_+(q)+\Delta_+(z,q)$ is the posterior mean of $u$ after observing $z$,
$
\E[u\mid z]=\mu_+(q)+\Delta_+(z,q).
$ and
\[\begin{aligned}
\E[(\mu_+(q)-u)\Delta_+(z,q)]
&=
\E_z\!\left[\Delta_+(z,q)\{\mu_+(q)-\E[u\mid z]\}\right]\\
&=
-\E_z[\Delta_+(z,q)^2].
\end{aligned}
\]

Thus, using the quadratic expansion of receiver loss, the informational gain in a right cell is
\[
\frac{\kappa_R}{2}\E_z[\Delta_+(z,q)^2]+o(q^4)
=
\frac{\kappa_R\mathcal I}{288}q^4+o(q^4),
\]
where the last equality uses \eqref{eq:shift-moments-plus}. The left-cell calculation is identical, with $h$ and \eqref{eq:shift-moments-minus}.
Multiplying by cell mass $g(m)w$ and using
$h_H(\theta)^2=dY(\theta)+O(d^2)$ yields, for $i=R$,
\begin{equation}\label{eq:anchor-gain2}
G_{\mathrm{info}}^i(d)
=
\frac{d^2}{288}
\int_{\underline\theta}^{\overline\theta}
g(\theta)\,\kappa_i\,\mathcal I\,Y(\theta)^2\,d\theta
+o(d^2).
\end{equation}

For the sender, the additional cross-term $\kappa_Sd\hat\delta(\theta)\E[\Delta_\pm]$ has opposite signs on the two sides of each interior boundary. The leading part cancels in the aggregate; the remaining variation of $\hat\delta$ inside a cell is $O(dh_H(\theta)^4)$ per type and therefore $O(d^3)$ after aggregation. Thus Equation \eqref{eq:anchor-gain2} also applies to $i=S$.

\smallskip
\noindent\emph{Boundary cells.}
The first cell satisfies $h_1=O(d)$ from the endpoint argument used above, so its mass is $O(d)$, its per-type loss is $O(d^2)$, and its contribution is $O(d^3)$ to $L_C^i$ and $o(d^2)$ to $L_H^i-L_C^i$. The last cell has width $O(\sqrt d)$ and therefore contributes $o(d)$ to $L_C^i$ and $o(d^2)$ to $L_H^i-L_C^i$, implying that it doesn't affect the leading terms.

\smallskip
\noindent\emph{Comparison.}
Combining \eqref{eq:cutoff-cost} and \eqref{eq:anchor-gain2},
\begin{equation}\label{eq:loss-difference}
L_H^i(d)-L_C^i(d)
=
\frac{d^2}{24}
\int_{\underline\theta}^{\overline\theta}
g(\theta)\,\kappa_i\,D(\theta)\,d\theta
+o(d^2),
\end{equation}
where
\[
D(\theta)=R(\theta)-\frac{\mathcal I}{12}Y(\theta)^2,
\qquad
D(\underline\theta)=0.
\]
Using \eqref{eq:Y-ode} and \eqref{eq:R-ode}, direct differentiation yields\[
D'(\theta)
=
-\frac{2s(\theta)}{3}D(\theta)
+
\frac{\mathcal I}{18}
Y(\theta)\bigl(4\hat\delta(\theta)-\frac{s(\theta)Y(\theta)}{3}\bigr).
\]
It remains to show that the bracketed term is positive. By \eqref{eq:Y-ode},
$
Y'(\theta)=2\bigl(4\hat\delta(\theta)-\frac{s(\theta)Y(\theta)}{3}\bigr).
$
Hence
\[
\begin{aligned}
\frac{d}{d\theta}\bigl(4\hat\delta(\theta)-\frac{s(\theta)Y(\theta)}{3}\bigr)
&=
4\hat\delta'(\theta)-\frac{s'(\theta)Y(\theta)}{3}-\frac{s(\theta)Y'(\theta)}{3}\\
&=
4\hat\delta'(\theta)-\frac{s'(\theta)Y(\theta)}{3}
-\frac{2s(\theta)}{3}\bigl(4\hat\delta(\theta)-\frac{s(\theta)Y(\theta)}{3}\bigr).
\end{aligned}
\]
Since $Y(\underline\theta)=0$,
\[
4\hat\delta(\underline\theta)-\frac{s(\underline\theta)Y(\underline\theta)}{3}
=
4\hat\delta(\underline\theta)>0.
\]
Solving the above equation:
\[
4\hat\delta(\theta)-\frac{s(\theta)Y(\theta)}{3}
=
e^{-\frac{2}{3}\int_{\underline\theta}^{\theta}s(v)\,dv}
\left[
4\hat\delta(\underline\theta)
+
\int_{\underline\theta}^{\theta}
e^{\frac{2}{3}\int_{\underline\theta}^{t}s(v)\,dv}
\bigl(4\hat\delta'(t)-\frac{s'(t)Y(t)}{3}\bigr)\,dt
\right].
\]
Since $\hat\delta'(\theta)\ge 0$, $s'(\theta)\le 0$, and $Y(\theta)>0$ for interior $\theta$, the bracketed term is positive in the interior. Therefore solving the equation for $D$ yields\[
D(\theta)
=
\frac{\mathcal I}{18}
\int_{\underline\theta}^{\theta}
Y(t)\bigl(4\hat\delta(t)-\frac{s(t)Y(t)}{3}\bigr)
\exp\!\left(-\frac{2}{3}\int_t^\theta s(v)\,dv\right)dt,
\]
so $D(\theta)>0$ for every interior $\theta$.
Since $g(\theta)>0$ and $\kappa_i>0$, the integral in \eqref{eq:loss-difference} is strictly positive. Hence $L_H^i(d)>L_C^i(d)$ for small $d$, or equivalently $W_C^i(d)>W_H^i(d)$ for $i\in\{R,S\}$.
\end{proof}

\begin{small}

\end{small}
\clearpage
\section{Supplementary Appendix}\label{app:supp-existence}

\subsection{Existence with compact messages}

This subsection records the compact-message boundary equations and sufficient conditions for existence.  Let $M=[\underline m,\overline m]$ and let $H$ be the CDF of the anchor noise.  For $u,v\in\R$ and $p\ge0$, define
\[
b_-(u,p,\theta)=\underline m-\psi\!\left(\frac{U^S_1(u,\theta)p}{c}\right),
\qquad
b_+(v,p,\theta)=\overline m-\psi\!\left(\frac{U^S_1(v,\theta)p}{c}\right),
\]
and
\begin{align*}
F_-(u,p)&=\int_\Theta U^R_1(u,\theta)f(\theta)H\!\left(\frac{b_-(u,p,\theta)-b_0(\theta)}{\sigma}\right)d\theta,\\
F_+(v,p)&=\int_\Theta U^R_1(v,\theta)f(\theta)\left[1-H\!\left(\frac{b_+(v,p,\theta)-b_0(\theta)}{\sigma}\right)\right]d\theta.
\end{align*}

\begin{assumption}\label{ass:globalODE}
There exist a compact action interval $A^\dagger=[a_{\min},a_{\max}]\subset\Int(A)$ and constants $p_{\max},K,\underline{\mathcal A}>0$ such that, on $\mathcal S=M\times A^\dagger\times[0,p_{\max}]$, the boundary equations $F_-(u,p)=0$ and $F_+(v,p)=0$ have unique continuous solutions $u^\star(p)$ and $v^\star(p)$ in $A^\dagger$, $\mathcal A\ge\underline{\mathcal A}$, $|G(r,a,p)|\le K(1+p^2)$, and sender strict concavity holds after substituting this bound for $a''$. In addition, the message interval and the endpoint separation are such that the boundary-value operator maps the set $\mathcal D=\{a\in C^1(M):a(M)\subseteq A^\dagger,\ \underline p^\star\le a'\le p^\star\}$ into itself, where $p^\star$ is the positive solution of $x=\Delta a^\dagger/\Delta m+(\Delta m/2)K(1+x^2)$ and $\underline p^\star>0$ is the corresponding lower slope bound. Write $\delta_E$ for the endpoint-separation parameter $\inf\{v^\star(p_2)-u^\star(p_1):p_1,p_2\in[\underline p^\star,p^\star]\}$, which satisfies $\underline p^\star\Delta m+(\Delta m)^2K(1+(p^\star)^2)/2\le\delta_E\le\Delta a^\dagger$.
\end{assumption}

\begin{lemma}\label{lem:verification-compact}
Under Assumptions~\ref{ass:PB-main} and~\ref{ass:globalODE}, let $a\in C^3(M)$ satisfy $a(M)\subseteq A^\dagger$, $0<a'(r)\le p^\star$ on $M$, the interior ODE~\eqref{eq:ode-main2} on $\Int(M)$, and the boundary equations
\[
F_-(a(\underline m),a'(\underline m))=0,
\qquad
F_+(a(\overline m),a'(\overline m))=0.
\]
Let $R_a(\theta,b)$ denote the unique maximizer of
$U^S(a(r),\theta)-c\phi(r-b)$
over $r\in M$. Then $(R_a,a)$ is a regular equilibrium.
\end{lemma}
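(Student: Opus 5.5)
We verify the three clauses of Definition~\ref{def:regular-main} and then the equilibrium conditions, namely sender optimality and receiver optimality with Bayes-consistent beliefs.

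\emph{Regularity of the action rule.} By hypothesis $a\in C^3(M)$ and $0<a'\le p^\star$, so $a$ is continuous and strictly increasing, with no interior kinks; the only nonsmooth points of $M$ are the endpoints. Since $A^\dagger\subset\Int(A)$, the range condition $a(M)\subseteq A$ holds.

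\emph{Sender problem.} Fix $(\theta,b)$. The objective $r\mapsto U^S(a(r),\theta)-c\phi(r-b)$ is continuous on the compact set $M$, so a maximizer exists. For uniqueness we show the objective is strictly concave on $M$. Its second derivative is
\[
U^S_{11}(a',\!)^2+U^S_1 a''-c\phi''.
\]
On $\Int(M)$ the ODE gives $a''=\mathcal B/\mathcal A$. With $\mathcal A\ge\underline{\mathcal A}$ and the growth bound $|G|\le K(1+p^2)$ at $p=a'\le p^\star$, this yields $|a''|\le K(1+(p^\star)^2)$. Assumption~\ref{ass:globalODE} states that sender strict concavity holds after substituting exactly this bound for $a''$. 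Together with $U^S_{11}\le-\eta_S$ and $\phi''\ge\underline\kappa$, the second derivative is strictly negative on $\Int(M)$, and by $C^3$ continuity also at the endpoints. Strict concavity gives a unique maximizer $R_a(\theta,b)$ and a strict second-order condition at interior optima, which establishes clauses (ii) and (iii).

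\emph{Bayes consistency.} Strict concavity also characterizes the optimum. We have $R_a(\theta,b)=\underline m$ iff the derivative at $\underline m$ is $\le0$, that is, iff $c\phi'(\underline m-b)\ge U^S_1(a(\underline m),\theta)a'(\underline m)$. Since $\phi'$ is increasing, this is equivalent to $b\le b_-(a(\underline m),a'(\underline m),\theta)$. Symmetrically, $R_a=\overline m$ iff $b\ge b_+(a(\overline m),a'(\overline m),\theta)$. Otherwise the report is interior, and the first-order condition inverts to $b=b_a(r,\theta)$.

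Interior reports then have the density $w^\sigma_a$ of \eqref{eq:weight-main}, and the change of variables is valid as in Theorem~\ref{thm:continuous-main}. Here $\partial_r b_a>0$ follows from the strict second-order condition, exactly as in that proof. By Theorem~\ref{thm:ode-main}, the interior ODE is equivalent to the Bayes equation \eqref{eq:continuous-main}. The receiver's objective is strictly concave in $a$, so this first-order condition means $a(r)$ is the receiver's best reply at each interior report.

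The endpoints are pooling atoms. Using the conditional anchor CDF $H((\cdot-b_0(\theta))/\sigma)$, the posterior at $\underline m$ places weight proportional to $f(\theta)H\big((b_--b_0)/\sigma\big)$. The receiver's first-order condition at that atom is therefore exactly $F_-(a(\underline m),a'(\underline m))=0$, and similarly $F_+=0$ at $\overline m$. Strict concavity of $U^R$ again makes these first-order conditions sufficient.

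\emph{Off-path reports and the main obstacle.} Every $r\in M$ is on path, by Corollary~\ref{cor:boundedmessages-surj} or directly because $H$ has full support, so no off-path beliefs are needed. The step I expect to be delicate is the uniform concavity argument. It must hold for all $\theta$ and $b$ and must include the endpoints, and one must check that the constants in Assumption~\ref{ass:globalODE} are the ones that deliver the bound on $a''$. The term $G$ there is the right-hand side $\mathcal B/\mathcal A$, and I would state this identification explicitly.
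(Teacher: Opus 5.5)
Your proposal is correct and follows essentially the same route as the paper's proof. The shared steps are: strict concavity of the sender's objective from the bound $|a''|=|G|\le K(1+(p^\star)^2)$ together with the concavity clause of Assumption~\ref{ass:globalODE}; the endpoint atoms characterized by the sign of the derivative at $\underline m$ and $\overline m$, which gives the $H$-masses; the interior density via $b_a(r,\theta)$, with $\partial_r b_a>0$ from the strict second-order condition; and Theorem~\ref{thm:ode-main} together with $F_\pm=0$ as the receiver's first-order conditions.

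The one place you differ is full use of reports. The paper proves it directly, using Berge/Topkis continuity in $b$, a Lipschitz domination argument that forces $\overline m$ and $\underline m$ for extreme anchors, and connectedness. Citing Corollary~\ref{cor:boundedmessages-surj} instead is circular as stated, because that corollary presupposes a regular equilibrium. Your direct alternative is the right fix, but the operative reason is not full support of $H$ as such. It is that, for each interior $r$, the anchor $b_a(r,\theta)$ makes $r$ a critical point, and hence the unique maximizer by strict concavity; the condition $h>0$ then gives a positive report density.
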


\begin{proof}
\emph{Step 1 (Unique sender best replies).}
For a given $(\theta,b)$, define
\[
\Pi(r;\theta,b)=U^S(a(r),\theta)-c\phi(r-b).
\]
Because $a(M)\subseteq A^\dagger$, $0<a'\le p^\star$, and $a$ solves the ODE, the growth bound in Assumption~\ref{ass:globalODE} provides
\[
|a''(r)|=|G(r,a(r),a'(r))|\le K(1+(p^\star)^2)
\qquad\text{for all }r\in M.
\]
Hence
\[
\Pi_{rr}(r;\theta,b)
=
U^S_{11}(a(r),\theta)\,(a'(r))^2
+
U^S_1(a(r),\theta)\,a''(r)
-
c\phi''(r-b)
<0
\]
by the strict-concavity requirement in Assumption~\ref{ass:globalODE}. So $\Pi(\cdot;\theta,b)$ is strictly concave on the compact set $M$, and the maximizer $R_a(\theta,b)$ is unique.

\emph{Step 2 (Continuity, monotonicity, and full use).}
The objective $\Pi(r;\theta,b)$ is jointly continuous in $(r,b)$ and has strictly increasing differences in $(r,b)$ because
\[
\frac{\partial^2 \Pi}{\partial r\,\partial b}(r;\theta,b)=c\phi''(r-b)>0.
\]
Berge's theorem and Topkis's theorem therefore imply that, for each fixed $\theta$, the map $b\mapsto R_a(\theta,b)$ is continuous and nondecreasing.

Because $a\in C^3(M)$ and $M$ is compact, the map $r\mapsto U^S(a(r),\theta)$ is Lipschitz for each fixed $\theta$; let
\[
L_\theta=\sup_{r\in M}|U^S_1(a(r),\theta)a'(r)|<\infty.
\]
By Assumption~\ref{ass:invert-main}, $D_1(\overline m,b)=c\phi'(\overline m-b)\to-\infty$ as $b\to\infty$. Choose $b^+$ so large that $D_1(r,b)<-L_\theta$ for every $r\in M$ and every $b\ge b^+$. Then
\[
\Pi_r(r;\theta,b)=U^S_1(a(r),\theta)a'(r)-D_1(r,b)>0
\]
for all $r\in M$ and all $b\ge b^+$, so the unique maximizer is $R_a(\theta,b)=\overline m$. A symmetric argument shows $R_a(\theta,b)=\underline m$ for all sufficiently negative $b$. Since $b\mapsto R_a(\theta,b)$ is continuous and $\mathbb R$ is connected, its image is an interval containing both endpoints of $M$, hence
$R_a(\theta,\mathbb R)=M
\hspace{.1cm}\text{for every }\theta\in\Theta.$

\emph{Step 3 (Interior report density).}
Let $r\in\Int(M)$ and $\theta\in\Theta$. Since $R_a(\theta,\mathbb R)=M$, some anchor value induces report $r$. Because the optimum is interior, the first-order condition must hold:
\[
c\phi'(r-b)=U^S_1(a(r),\theta)a'(r).
\]
Since $\phi'$ is onto and strictly increasing, there is a unique solution,
\[
b_a(r,\theta)
=
r-\psi\!\left(\frac{U^S_1(a(r),\theta)a'(r)}{c}\right).
\]
Differentiating,
\[
\partial_r b_a(r,\theta)
=
1-\frac{\psi'(s)}{c}
\left[
U^S_{11}(a(r),\theta)(a'(r))^2
+
U^S_1(a(r),\theta)a''(r)
\right],
\]
where $s=U^S_1(a(r),\theta)a'(r)/c$. At $b=b_a(r,\theta)$, $\phi''(r-b)=1/\psi'(s)$. Combining this identity with the strict concavity inequality from Step 1 produces
\[
U^S_{11}(a(r),\theta)(a'(r))^2
+
U^S_1(a(r),\theta)a''(r)
<
\frac{c}{\psi'(s)},
\]
hence $\partial_r b_a(r,\theta)>0$. Thus $b_a(\cdot,\theta)$ is strictly increasing on $\Int(M)$, and its inverse is the on-path report rule $R_a(\theta,\cdot)$ on the interior. The change-of-variables formula therefore produces the conditional report density
\[
g_a(r|\theta)
=
q_\sigma(b_a(r,\theta)|\theta)\,\partial_r b_a(r,\theta)
\qquad\text{for }r\in\Int(M).
\]

\emph{Step 4 (Boundary report masses).}
At the lower endpoint,
\[
\Pi_r(\underline m;\theta,b)
=
U^S_1(a(\underline m),\theta)a'(\underline m)-c\phi'(\underline m-b).
\]
Because $\Pi(\cdot;\theta,b)$ is strictly concave, the report $\underline m$ is optimal if and only if this derivative is weakly negative, that is,
\[
b\le
\underline m-\psi\!\left(\frac{U^S_1(a(\underline m),\theta)a'(\underline m)}{c}\right)
=
b_-(a(\underline m),a'(\underline m),\theta).
\]
Hence
\[
\Pr(R_a(\theta,b)=\underline m|\theta)
=
H\!\left(\frac{b_-(a(\underline m),a'(\underline m),\theta)-b_0(\theta)}{\sigma}\right).
\]
The same argument at the upper endpoint produces
\[
\Pr(R_a(\theta,b)=\overline m|\theta)
=
1-
H\!\left(\frac{b_+(a(\overline m),a'(\overline m),\theta)-b_0(\theta)}{\sigma}\right).
\]

\emph{Step 5 (Receiver optimality and Bayes consistency).}
By Theorem~\ref{thm:ode-main}, the interior ODE~\eqref{eq:ode-main2} is equivalent to
\[
\int_\Theta U^R_1(a(r),\theta)\,f(\theta)\,g_a(r|\theta)\,d\theta=0
\qquad\text{for every }r\in\Int(M).
\]
By Steps 3 and 4, the interior density and the boundary masses used in this equation are the actual on-path objects generated by $R_a$. The boundary equations
\[
F_-(a(\underline m),a'(\underline m))=0,
\qquad
F_+(a(\overline m),a'(\overline m))=0
\]
are exactly the receiver first-order conditions at the two endpoints. Since $U^R(\cdot,\theta)$ is strictly concave in $a$, these first-order conditions characterize the receiver's unique best reply at every report in $M$. Thus $(R_a,a)$ is a pure-strategy equilibrium.

Finally, $a$ is continuous and strictly increasing, with $a\in C^3(M)$, and Step 1 provides a unique sender best reply for every $(\theta,b)$. Step 1 also yields strict second-order conditions at every interior optimum. Thus $(R_a,a)$ is a regular equilibrium.
\end{proof}

\begin{theorem}\label{thm:globalexist-main}
Under Assumptions~\ref{ass:PB-main} and~\ref{ass:globalODE}, there is a regular equilibrium on $M$.
\end{theorem}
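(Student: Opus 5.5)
The plan is to construct a fixed point of a boundary-value operator on the set $\mathcal D$ of Assumption~\ref{ass:globalODE} by Schauder's theorem, and then apply Lemma~\ref{lem:verification-compact} to the fixed point. Given $\tilde a\in\mathcal D$, write $p_-=\tilde a'(\underline m)$ and $p_+=\tilde a'(\overline m)$. The operator $T$ sends $\tilde a$ to the solution $a$ of the linear two-point problem
\[
a''(r)=G\bigl(r,\tilde a(r),\tilde a'(r)\bigr),\qquad a(\underline m)=u^\star(p_-),\qquad a(\overline m)=v^\star(p_+).
\]
Here $G=\mathcal B/\mathcal A$, which is well defined because $\mathcal A\ge\underline{\mathcal A}$ on $\mathcal S$.

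The solution is explicit. It is the affine interpolant of the boundary values plus the Green's-function integral of the forcing term. This explicit form gives three things:
\begin{itemize}[leftmargin=*,label={}]
\item \emph{Continuity of $T$ in $C^1$.} This follows because $G$, $u^\star$ and $v^\star$ are continuous, and $p_\pm$ depend continuously on $\tilde a$ in the $C^1$ norm.
\item \emph{Compactness.} The growth bound $|G|\le K(1+(p^\star)^2)$ bounds $a''$ uniformly. By Arzel\`a--Ascoli, $T(\mathcal D)$ is therefore relatively compact in $C^1(M)$.
\item \emph{Self-map.} The derivative of the solution equals the secant slope $(v^\star-u^\star)/\Delta m$ plus a correction of size at most $(\Delta m/2)K(1+(p^\star)^2)$. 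The definitions of $p^\star$, $\underline p^\star$ and $\delta_E$ are chosen exactly so that $\underline p^\star\le a'\le p^\star$. Since $a(M)\subseteq A^\dagger$ is stated to be preserved, $T(\mathcal D)\subseteq\mathcal D$.
\end{itemize}
The set $\mathcal D$ is convex: the constraints $a(M)\subseteq A^\dagger$ and $\underline p^\star\le a'\le p^\star$ are pointwise convex. It is also closed and bounded in $C^1(M)$. Schauder's fixed-point theorem therefore gives $a\in\mathcal D$ with $Ta=a$.

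At the fixed point, $a''=G(r,a,a')$ holds on $\Int(M)$, so by Theorem~\ref{thm:ode-main} the interior ODE~\eqref{eq:ode-main2} holds. The boundary values are $a(\underline m)=u^\star(a'(\underline m))$ and $a(\overline m)=v^\star(a'(\overline m))$, so by the definitions of $u^\star$ and $v^\star$ we get $F_-(a(\underline m),a'(\underline m))=0$ and $F_+(a(\overline m),a'(\overline m))=0$. Regularity comes from bootstrapping. Since $a\in C^1$ and $G$ is $C^1$ in its arguments (its inputs $U^R,U^S,h,\phi$ are smooth by Assumption~\ref{ass:PB-main}), $a''$ is $C^1$, so $a\in C^3(M)$. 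Because $0<\underline p^\star\le a'\le p^\star$ and $a(M)\subseteq A^\dagger$, all hypotheses of Lemma~\ref{lem:verification-compact} hold, and $(R_a,a)$ is a regular equilibrium.

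The main obstacle I expect is the self-map and compactness step, together with the continuity of the boundary data. Three points need care:
\begin{itemize}[leftmargin=*,label={}]
\item $G$ must be evaluated only on $\mathcal S$, where the bounds hold. For this I would first truncate $p$ to $[0,p_{\max}]$ (assuming $p^\star\le p_{\max}$, as the assumption implicitly requires) and then check that the truncation is inactive on $\mathcal D$.
\item The Green's-function estimate $|a'-\text{secant}|\le(\Delta m/2)\sup|a''|$ has to be matched with the quadratic equation defining $p^\star$.
\item The dependence of the boundary values on the derivative endpoints $p_\pm$ is what forces working in $C^1$ rather than $C^0$. The compact embedding from $C^2$-bounded sets into $C^1$ is what makes Schauder's theorem apply.
\end{itemize}
If the stated self-map clause of Assumption~\ref{ass:globalODE} is taken as given, this step reduces to verifying continuity and compactness.
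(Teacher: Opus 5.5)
Your proposal is correct and follows essentially the same route as the paper. It uses the same operator (the affine interpolant of $u^\star(a'(\underline m))$ and $v^\star(a'(\overline m))$ plus the Dirichlet Green's-function integral of $G$), the Green's-function slope bounds matched to $p^\star$, $\underline p^\star$ and $\delta_E$, Arzel\`a--Ascoli from the $C^2$ bound, Schauder, a bootstrap to $C^3$, and Lemma~\ref{lem:verification-compact}. The one item you omit is that Schauder also needs $\mathcal D\neq\emptyset$. The paper checks this with the affine function $a_0(r)=a_{\min}+\underline p^\star(r-\underline m)$, using $\underline p^\star\Delta m\le\delta_E\le\Delta a^\dagger$.
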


\begin{proof}
The admissible set $\mathcal D$ is nonempty, closed, bounded, and convex in $C^1(M)$. To see non-emptiness, let
\[
a_0(r)=a_{\min}+\underline p^\star(r-\underline m).
\]
Because $\underline p^\star>0$, $\underline p^\star\le p^\star$, and Assumption~\ref{ass:globalODE} gives
\[
\underline p^\star\Delta m\le\delta_E\le\Delta a^\dagger,
\]
it follows that $a_0(M)\subseteq A^\dagger$ and $\underline p^\star\le a_0'(r)\le p^\star$ on $M$, so $a_0\in\mathcal D$. For $a\in\mathcal D$, define the operator $\mathcal T:\mathcal D\to C^1(M)$ by
\begin{align*}
(\mathcal T a)(r)
&=
u^\star(a'(\underline m))
+\frac{r-\underline m}{\Delta m}
\Bigl(
v^\star(a'(\overline m))-u^\star(a'(\underline m))
\Bigr) \\
&\quad
-\int_{\underline m}^{\overline m}
\mathcal G(r,s)\,
G(s,a(s),a'(s))\,ds,
\end{align*}
where $\mathcal G$ is the Green's function for the Dirichlet problem on $[\underline m,\overline m]$.

Because $(r,a(r),a'(r))\in\mathcal S$ for every $a\in\mathcal D$, the integrand is well defined. The standard Green's function bounds
\[
\sup_r\int_{\underline m}^{\overline m}|\partial_r\mathcal G(r,s)|\,ds=\frac{\Delta m}{2},
\qquad
\sup_r\int_{\underline m}^{\overline m}|\mathcal G(r,s)|\,ds=\frac{(\Delta m)^2}{8},
\]
control slopes and levels separately.

For the upper slope bound,
\[
\bigl\|(\mathcal T a)'\bigr\|_\infty
\le
\frac{\Delta a^\dagger}{\Delta m}
+\frac{\Delta m}{2}K\bigl(1+(p^\star)^2\bigr)
=
p^\star
\]
by the defining equation for $p^\star$ in Assumption~\ref{ass:globalODE}. For the lower slope bound, the affine boundary term has slope
\[
\frac{v^\star(a'(\overline m))-u^\star(a'(\underline m))}{\Delta m}
\ge
\frac{\delta_E}{\Delta m},
\]
while the Green's term contributes at most
$\frac{\Delta m}{2}K\bigl(1+(p^\star)^2\bigr)$
in absolute value. Hence
\[
(\mathcal T a)'(r)
\ge
\frac{\delta_E}{\Delta m}
-\frac{\Delta m}{2}K\bigl(1+(p^\star)^2\bigr)
\ge\underline p^\star>0
\]
by the lower-slope requirement in Assumption~\ref{ass:globalODE}. For the level bound,
\[
\sup_r\left|
\int_{\underline m}^{\overline m}
\mathcal G(r,s)\,G(s,a(s),a'(s))\,ds
\right|
\le
\frac{(\Delta m)^2}{8}K\bigl(1+(p^\star)^2\bigr),
\]
and the affine boundary term always lies between
$u^\star(a'(\underline m))$
 and
$v^\star(a'(\overline m))$, both of which belong to $A^\dagger$. Assumption~\ref{ass:globalODE} therefore yields $(\mathcal T a)(M)\subseteq A^\dagger$. Thus $\mathcal T(\mathcal D)\subseteq\mathcal D$, and in fact every image satisfies $\underline p^\star\le (\mathcal T a)' \le p^\star$.

Differentiating the Green representation twice,
$(\mathcal T a)''(r)=G(r,a(r),a'(r))$, so every image satisfies $\|(\mathcal T a)''\|_\infty\le K\bigl(1+(p^\star)^2\bigr)$. Hence $\mathcal T(\mathcal D)$ is bounded in $C^2(M)$ and therefore precompact in $C^1(M)$ by Arzel\`a-Ascoli. Continuity of $\mathcal T$ follows from continuity of $G$, $u^\star$, and $v^\star$ on compact sets. Schauder's fixed-point theorem therefore delivers $a^\star\in\mathcal D$ such that $\mathcal T a^\star=a^\star$. Since $a^\star\in\mathcal T(\mathcal D)$, it also satisfies $(a^\star)'(r)\ge \underline p^\star>0$ for all $r$, so it is strictly increasing.

Because $a^\star=\mathcal T a^\star$, the Green representation yields that $a^\star$ solves the interior ODE. The Green function vanishes at the two endpoints, so $a^\star(\underline m)=u^\star((a^\star)'(\underline m))$
 and 
$a^\star(\overline m)=v^\star((a^\star)'(\overline m))$. By the definition of $u^\star$ and $v^\star$, this is equivalent to the boundary equations
\[
F_-(a^\star(\underline m),(a^\star)'(\underline m))=F_+(a^\star(\overline m),(a^\star)'(\overline m))=0.
\]
Since $G$ is $C^1$ on the compact strip $\mathcal S$ and $(a^\star)''=G(r,a^\star,(a^\star)')$, a standard bootstrap upgrades $a^\star$ to $C^3(M)$. The image bound $a^\star(M)\subseteq A^\dagger$ and the slope bound $0<(a^\star)'(r)\le p^\star$ were established above. Let $R^\star(\theta,b)$ denote the unique maximizer of $U^S(a^\star(r),\theta)-c\phi(r-b)$ over $r\in M$. Lemma~\ref{lem:verification-compact} therefore applies to $a^\star$, and the pair $(R^\star,a^\star)$ is a regular equilibrium on $M$.
\end{proof}

\subsection{Unbounded message spaces}\label{app:proofs-unbounded}

The unbounded-message case is a continuation result from truncated boundary-value solutions. A common compact action interval on expanding message spaces precludes a uniform lower slope bound, since $a_n'(r)\ge\underline p>0$ on $[-n,n]$ with $a_n(M_n)\subseteq A^\dagger$ would force image width at least $2n\underline p\to\infty$, contradicting compactness of $A^\dagger$. The theorem therefore imposes positive lower slope bounds only on fixed compact report intervals.

\begin{theorem}\label{thm:exist-unbounded}
Under Assumptions~\ref{ass:receiver-main}--\ref{ass:invert-main} and~\ref{ass:PB-main}, suppose that for each $n\ge 1$ there exists a function $a_n\in C^3(M_n)$ on $M_n=[-n,n]$ such that:
\begin{enumerate}[label=(\roman*)]
\item $a_n$ satisfies the compact-message ODE~\eqref{eq:ode-main2} on $(-n,n)$ and the corresponding boundary equations at $\pm n$;
\item there exist a compact interval $A^\dagger\subset\Int(A)$, a constant $\overline p<\infty$, and constants $\underline{\mathcal A},K<\infty$ such that for every $n$, $a_n(M_n)\subseteq A^\dagger$,
$0<a_n'(r)\le \overline p$
 for all $r\in M_n$,
 and on the strip $\mathcal S_\infty=\R\times A^\dagger\times[0,\overline p]$, the ODE objects satisfy
\[
\mathcal A(r,u,p)\ge \underline{\mathcal A}>0,
\qquad
|G(r,u,p)|\le K,
\qquad
G\in C^1(\mathcal S_\infty);
\]
\item for every $j\ge 1$ there exists a constant $\underline p_j>0$ such that
\[
a_n'(r)\ge \underline p_j
\qquad\text{for all }r\in[-j,j]\text{ and all }n\ge j;
\]
\item a uniform sender-concavity bound
holds:\[
c\underline\kappa>
\sup_{(u,\theta)\in A^\dagger\times\Theta}
\left[
|U^S_{11}(u,\theta)|\,\overline p^{\,2}
+
|U^S_1(u,\theta)|\,K
\right].
\]
\end{enumerate}
Then there exists a regular equilibrium on $M=\R$.
\end{theorem}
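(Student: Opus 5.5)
The plan is a compactness-and-diagonal argument on the truncated solutions $a_n$, followed by the verification strategy of Lemma~\ref{lem:verification-compact} adapted to $M=\R$.

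\emph{Compactness.} By (ii), on every fixed interval $[-j,j]$ and for $n\ge j$, the $a_n$ are uniformly bounded in $A^\dagger$ with $0<a_n'\le\overline p$. The ODE $a_n''=G(r,a_n,a_n')$ and $|G|\le K$ bound $a_n''$ uniformly. Differentiating the ODE once and using $G\in C^1(\mathcal S_\infty)$ with these bounds controls $a_n'''$ on $[-j,j]$ as well.

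Arzel\`a--Ascoli then gives a $C^2([-j,j])$-convergent subsequence for each $j$. A Cantor diagonal extraction produces $a_{n_k}\to a$ in $C^2_{\mathrm{loc}}(\R)$. Passing to the limit in $a_{n_k}''=G(r,a_{n_k},a_{n_k}')$, using continuity of $G$, shows that $a$ solves the ODE~\eqref{eq:ode-main2} on all of $\R$. A bootstrap gives $a\in C^3$.

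The limit inherits $a(\R)\subseteq A^\dagger$ and $a'\le\overline p$. Condition (iii) gives $a'\ge\underline p_j>0$ on $[-j,j]$, so $a$ is strictly increasing. The boundary equations at $\pm n$ disappear in the limit; they are needed only to make each $a_n$ well defined.

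\emph{Verification.} Let $\Pi(r;\theta,b)=U^S(a(r),\theta)-c\phi(r-b)$. Then
\[
\Pi_{rr}=U^S_{11}(a')^2+U^S_1a''-c\phi''\le|U^S_{11}|\overline p^{\,2}+|U^S_1|K-c\underline\kappa<0
\]
by (iv), so $\Pi$ is uniformly strictly concave in $r$. Since $a$ is bounded and $\phi$ is coercive, a unique maximizer $R(\theta,b)$ exists and is characterized by the first-order condition. This yields parts (ii) and (iii) of Definition~\ref{def:regular-main}.

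As in Step~3 of Lemma~\ref{lem:verification-compact}, strict concavity gives $\partial_r b_a(r,\theta)>0$, so $b_a(\cdot,\theta)$ is a strictly increasing $C^2$ bijection. It is onto $\R$ because $U^S_1a'$ is bounded and $\psi$ is continuous, so $b_a(r,\theta)-r$ is bounded. Its inverse is therefore $R(\theta,\cdot)$, and the change of variables gives the report density. Theorem~\ref{thm:ode-main} converts the ODE into the Bayes condition~\eqref{eq:continuous-main}, and strict concavity of $U^R$ makes $a(r)$ the unique best reply at every $r$. Every report is on path, so no off-path beliefs are needed. Hence $(R,a)$ is a regular equilibrium with no kinks.

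\emph{Main obstacle.} The equivalence in Theorem~\ref{thm:ode-main} requires the Bayes integral to be well defined and the report density to be positive at every $r$. Positivity holds because $q_\sigma>0$ and $\partial_rb_a>0$. The real risk is that the ODE objects need $\mathcal A\ge\underline{\mathcal A}$ uniformly, so that $a''=\mathcal B/\mathcal A=G$ is the right normalization in the limit. I expect this to be handled by hypothesis (ii) holding on the whole strip $\mathcal S_\infty$. The subtler point is the $C^3$ bootstrap needed for regularity. There I would use $G\in C^1$ together with local uniform bounds to obtain $a'''=G_r+G_ua'+G_pa''$ continuous. If $C^3$ requires more, I would note that Definition~\ref{def:regular-main}(i) is met after one further differentiation, using the $C^4$/$C^5$ smoothness of the primitives in Assumption~\ref{ass:PB-main}.
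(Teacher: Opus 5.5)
Your proposal is correct and follows the paper's own proof. Both arguments use a diagonal Arzel\`a--Ascoli extraction on $[-j,j]$, pass to the limit in the ODE, bootstrap to $C^3$ via $G\in C^1$, obtain uniform strict concavity of the sender's problem from (iv), and get Bayes consistency through Theorem~\ref{thm:ode-main}.

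The differences are cosmetic. You extract in $C^2_{\mathrm{loc}}$ using an extra bound on $a_n'''$, whereas the paper extracts only in $C^1_{\mathrm{loc}}$ and passes to the limit in the integrated ODE. You also obtain full use and $R(\theta,\cdot)=b_a(\cdot,\theta)^{-1}$ directly, from $b_a(\cdot,\theta)$ being an increasing bijection of $\R$. The paper instead gets continuity and full use from Topkis, the distortion bound, and Berge.
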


\begin{proof}
\emph{Step 1: local compactness and diagonal extraction.}
By assumption, $\|a_n\|_\infty$ and $\|a_n'\|_\infty$ are uniformly bounded on every compact interval. Since $a_n''(r)=G(r,a_n(r),a_n'(r))$ and $|G|\le K$ on $\mathcal S_\infty$, $|a_n''(r)|\le K$
for all $r\in M_n$.

For $n\ge j$, the restrictions of $a_n$ to $[-j,j]$ are uniformly bounded in $C^2([-j,j])$, hence precompact in $C^1([-j,j])$ by Arzel\`a-Ascoli. A standard diagonal argument shows a subsequence, still indexed by $n$, and a limit $a^\star\in C^1(\R)$ such that $a_n\to a^\star
\hspace{.1cm}\text{in }C^1_{\mathrm{loc}}(\R).$
The bounds pass to the limit, so
\[
a^\star(\R)\subseteq A^\dagger,
\qquad
0<(a^\star)'(r)\le \overline p
\quad\text{for every }r\in\R.
\]
More precisely, on each fixed interval $[-j,j]$, $(a^\star)'(r)\ge \underline p_j>0.$

\emph{Step 2: the limit solves the ODE.}
On a compact interval $[-j,j]$ with $n\ge j$,
\[
a_n'(r)-a_n'(0)=\int_0^r G(s,a_n(s),a_n'(s))\,ds.
\]
Because $a_n\to a^\star$ in $C^1([-j,j])$ and $G$ is continuous on $\mathcal S_\infty$, dominated convergence yields
\[
(a^\star)'(r)-(a^\star)'(0)=\int_0^r G(s,a^\star(s),(a^\star)'(s))\,ds
\qquad\text{for }r\in[-j,j].
\]
Hence $a^\star\in C^2([-j,j])$ and
$(a^\star)''(r)=G(r,a^\star(r),(a^\star)'(r))$
 on $[-j,j]$. Since $j$ was arbitrary, $a^\star\in C^2(\R)$ and solves the ODE on all of $\R$. The condition $G\in C^1(\mathcal S_\infty)$ then upgrades $a^\star$ to $C^3(\R)$.

\emph{Step 3: unique sender best replies under the limit action rule.}
Consider
\[
\Pi^\star(r;\theta,b)=U^S(a^\star(r),\theta)-c\phi(r-b).
\]
Because $a^\star(\R)\subseteq A^\dagger$, the strategic term is bounded on $\R\times\Theta$. Assumption~\ref{ass:cost-main}(iv) therefore yields coercivity of $\Pi^\star$ in $r$, so an optimal report exists. Moreover,
\[
\frac{\partial^2 \Pi^\star}{\partial r^2}(r;\theta,b)
=
U^S_{11}(a^\star(r),\theta)\,((a^\star)'(r))^2
+
U^S_1(a^\star(r),\theta)\,(a^\star)''(r)
-
c\phi''(r-b)
<0
\]
by part~(iv) of the theorem and the bounds from Steps 1 and 2. Hence $\Pi^\star(\cdot;\theta,b)$ is strictly concave, so the sender's maximizer $R^\star(\theta,b)$ is unique.

\emph{Step 4: continuity, monotonicity, and full use of reports.}
The objective $\Pi^\star(r;\theta,b)$ has strictly increasing differences in $(r,b)$ because $c\phi''(r-b)>0$, so $b\mapsto R^\star(\theta,b)$ is nondecreasing by Topkis's theorem.

Let
\[
L_0=\sup_{(u,\theta)\in A^\dagger\times\Theta}|U^S(u,\theta)|<\infty.
\]
By Assumption~\ref{ass:cost-main}(iv), there exists $K_D<\infty$ such that
\[
|r-b|>K_D
\quad\Longrightarrow\quad
c\phi(r-b)-c\phi(0)>2L_0.
\]
So any report with $|r-b|>K_D$ is strictly dominated by reporting the anchor $b$, and therefore $|R^\star(\theta,b)-b|\le K_D$
for all $(\theta,b)$.

Let  $b_0\in\R$. For $b$ near $b_0$, every optimal report lies in the compact interval $[b_0-K_D-1,\; b_0+K_D+1]$. Berge's theorem applied on that compact interval therefore yields continuity of $b\mapsto R^\star(\theta,b)$ at $b_0$. Since $b_0$ was arbitrary, the map is continuous on $\R$.

The distortion bound also forces $R^\star(\theta,b)\to\pm\infty
\hspace{.1cm}\text{as }b\to\pm\infty$. The image of the connected set $\R$ under the continuous map $b\mapsto R^\star(\theta,b)$ is therefore a connected unbounded subset of $\R$, hence $R^\star(\theta,\R)=\R$
for every $\theta\in\Theta$.

\emph{Step 5: Bayes consistency on $\R$.}
Since every report is used on path, the inducing anchor must satisfy the interior first-order condition $c\phi'(r-b)=U^S_1(a^\star(r),\theta)(a^\star)'(r)$.
Its unique solution is
\[
b_{a^\star}(r,\theta)
=
r-\psi\!\left(\frac{U^S_1(a^\star(r),\theta)(a^\star)'(r)}{c}\right).
\]
Differentiating and using the strict concavity inequality from Step 3 produces
$\partial_r b_{a^\star}(r,\theta)>0$. So $b_{a^\star}(\cdot,\theta)$ is strictly increasing and inverts the on-path report rule. The conditional density of reports given $\theta$ is therefore
\[
g^\star(r|\theta)
=
q_\sigma(b_{a^\star}(r,\theta)|\theta)\,
\partial_r b_{a^\star}(r,\theta).
\]
Because $a^\star$ solves the ODE on $\R$, Theorem~\ref{thm:ode-main} returns
\[
\int_\Theta U^R_1(a^\star(r),\theta)\,f(\theta)\,g^\star(r|\theta)\,d\theta=0
\qquad\text{for every }r\in\R.
\]
Strict concavity of $U^R(\cdot,\theta)$ implies that $a^\star(r)$ is the receiver's unique best reply at every on-path report.

The pair $(R^\star,a^\star)$ is therefore a pure-strategy equilibrium. Since $a^\star$ is continuous, strictly increasing, and $C^3$ on $\R$, sender best replies are unique, and the interior second-order condition is strict by Step 3, this equilibrium is regular.
\end{proof}

\subsection{Least-cost selection}\label{app:supp-leastcost}

Second-order differential equations may admit multiple boundary-compatible regular equilibria.  A natural selection criterion is the expected report-anchor cost.  Under the compact-message sufficient conditions in Assumption~\ref{ass:globalODE}, let
\[
\mathcal E_\sigma=
\left\{
\begin{array}{l}
a\in C^3(M):\ a\text{ solves \eqref{eq:ode-main2} and the compact-message boundary equations},\\
a(M)\subseteq A^\dagger,\quad \underline p^\star\le a'\le p^\star
\end{array}
\right\}.
\]
For $a\in\mathcal E_\sigma$, let $R_a(\theta,b)$ denote the unique sender-optimal report induced by $a$, and define
\begin{equation}\label{eq:commcost-main}
\mathcal C_\sigma(a)=
\int_\Theta\int_\R D(R_a(\theta,b),b)q_\sigma(b|\theta)db\,f(\theta)d\theta.
\end{equation}

\begin{assumption}\label{ass:costint}
The integrability condition
\[
\int_\Theta\int_\R\sup_{r\in M}D(r,b)q_\sigma(b|\theta)db\,f(\theta)d\theta<\infty
\]
holds.
\end{assumption}

\begin{lemma}\label{lem:compactness-main}
Under Assumptions~\ref{ass:PB-main} and~\ref{ass:globalODE}, the set $\mathcal E_\sigma$ is compact in the $C^1(M)$ topology.
\end{lemma}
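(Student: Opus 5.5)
The plan is to show that $\mathcal E_\sigma$ is a closed subset of a precompact subset of $C^1(M)$, using the same Arzel\`a--Ascoli mechanism as in the proof of Theorem~\ref{thm:globalexist-main}.

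\emph{Precompactness.} Every $a\in\mathcal E_\sigma$ satisfies $a(M)\subseteq A^\dagger$ and $\underline p^\star\le a'\le p^\star$, so $(r,a(r),a'(r))\in\mathcal S$ for every $r\in M$. Since $a$ solves \eqref{eq:ode-main2} and $\mathcal A\ge\underline{\mathcal A}>0$ on $\mathcal S$, we can write $a''=G(r,a,a')$ with $G=\mathcal B/\mathcal A$. The growth bound in Assumption~\ref{ass:globalODE} then gives
\[
\|a''\|_\infty\le K\bigl(1+(p^\star)^2\bigr).
\]
Hence $\mathcal E_\sigma$ is bounded in $C^2(M)$: levels lie in the compact set $A^\dagger$, slopes lie in $[\underline p^\star,p^\star]$, and second derivatives are uniformly bounded. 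Thus $\{a\}$ and $\{a'\}$ are uniformly bounded and equicontinuous on the compact interval $M$, and Arzel\`a--Ascoli yields precompactness in $C^1(M)$.

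\emph{Closedness.} Take $a_n\in\mathcal E_\sigma$ with $a_n\to a$ in $C^1(M)$. The pointwise constraints $a(M)\subseteq A^\dagger$ and $\underline p^\star\le a'\le p^\star$ pass to the limit because they are closed. For the ODE, write
\[
a_n'(r)-a_n'(\underline m)=\int_{\underline m}^r G\bigl(s,a_n(s),a_n'(s)\bigr)\,ds .
\]
The function $G$ is continuous on the compact strip $\mathcal S$, because $\mathcal A$ and $\mathcal B$ are continuous integrals of smooth integrands over compact $\Theta$ and $\mathcal A$ is bounded below. Uniform convergence of $(a_n,a_n')$ then lets us pass to the limit, so $a\in C^2$ and $a''=G(r,a,a')$. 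Since $G$ is $C^1$ on $\mathcal S$, the bootstrap used in Theorem~\ref{thm:globalexist-main} upgrades $a$ to $C^3(M)$. For the boundary equations, $F_\pm$ are continuous in $(u,p)$ by dominated convergence, since the integrands are bounded on $\Theta$ with $H$ continuous. Therefore $F_-(a_n(\underline m),a_n'(\underline m))=0$ passes to $F_-(a(\underline m),a'(\underline m))=0$, and likewise at $\overline m$. So $a\in\mathcal E_\sigma$.

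\emph{Main obstacle.} The delicate step is the continuity of $G$ and $F_\pm$ on the strip. Two things need checking. First, the definition of $b_a$ involves $\psi=(\phi')^{-1}$, which is $C^1$ with $\psi'\le1/\underline\kappa$ because $\phi''\ge\underline\kappa$; this keeps the integrand of $\mathcal A$ and $\mathcal B$ jointly continuous. Second, the uniform lower bound $\underline{\mathcal A}$ is essential, because without it $G=\mathcal B/\mathcal A$ could blow up along a sequence and the $C^2$ bound would fail. Both properties hold on $\mathcal S$ by Assumptions~\ref{ass:PB-main} and~\ref{ass:globalODE}, so I expect the argument to close with the set being closed and precompact, hence compact in $C^1(M)$.
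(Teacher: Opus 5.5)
Your proposal is correct and follows the paper's route: uniform $C^2$ bounds from the growth condition on $G=\mathcal B/\mathcal A$, then Arzel\`a--Ascoli, then closedness of the ODE, the boundary equations and the pointwise constraints, with $G\in C^1$ used to upgrade limits to $C^3$. The only difference is minor. The paper uses $G\in C^1$ to bound the Lipschitz constant of $a''$ uniformly and so gets precompactness in $C^2(M)$; you get precompactness directly in $C^1(M)$ and pass the ODE to the limit through its integral form, which needs only continuity of $G$ at that step.
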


\begin{proof}[\textbf{Proof of Lemma~\ref{lem:compactness-main}}]
Every $a\in\mathcal E_\sigma$ satisfies $\|a'\|_\infty\le p^\star$ and $\|a''\|_\infty\le K(1+(p^\star)^2)$. Since $a''=G(r,a,a')$ and $G$ is $C^1$ on the compact strip $\mathcal S$, the Lipschitz constant of $a''$ is uniformly bounded. Arzel\`a-Ascoli therefore delivers precompactness in $C^2(M)$. The ODE, the boundary equations, and the constraints pass to limits, and the identity $a''=G(r,a,a')$ with $G\in C^1$ upgrades any $C^2$ limit to $C^3$. Thus $\mathcal E_\sigma$ is closed and hence compact.
\end{proof}

\begin{proposition}\label{prop:leastcost-main}
Under Assumptions~\ref{ass:PB-main},~\ref{ass:globalODE}, and~\ref{ass:costint}, the map $\mathcal C_\sigma$ is continuous on $\mathcal E_\sigma$ in the $C^1(M)$ topology.  The set
\[
\mathcal E_\sigma^\star=\arg\min_{a\in\mathcal E_\sigma}\mathcal C_\sigma(a)
\]
is nonempty and compact.
\end{proposition}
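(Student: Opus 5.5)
The argument is the direct method: continuity of $\mathcal C_\sigma$ on the compact set $\mathcal E_\sigma$, which is compact by Lemma~\ref{lem:compactness-main}. The Weierstrass theorem then gives a minimizer. Compactness of $\mathcal E_\sigma^\star$ follows because it is the preimage of the minimum value under a continuous map, hence closed in a compact set. The real work is continuity, which I would obtain from pointwise convergence of the report rule plus dominated convergence.

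\emph{Step 1 (pointwise continuity of the report rule).} Take $a_n\to a$ in $C^1(M)$ within $\mathcal E_\sigma$, and fix $(\theta,b)$. The objectives satisfy
\[
\Pi_n(r)=U^S(a_n(r),\theta)-c\phi(r-b)\to\Pi(r)=U^S(a(r),\theta)-c\phi(r-b)
\]
uniformly on the compact set $M$, since $U^S$ is uniformly continuous on $A\times\Theta$. By Lemma~\ref{lem:verification-compact} (Step 1), $\Pi$ has a unique maximizer $R_a(\theta,b)$ on $M$. The standard argmax-continuity argument, a special case of Berge's theorem with the parameter $a$ in a metric space, then gives $R_{a_n}(\theta,b)\to R_a(\theta,b)$. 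Concretely:
\begin{enumerate}
\item Any subsequential limit $\bar r$ of $R_{a_n}(\theta,b)$ satisfies $\Pi(\bar r)\ge \Pi(r)$ for every $r$.
\item By uniqueness of the maximizer, $\bar r=R_a(\theta,b)$.
\end{enumerate}
Continuity of $D$ then yields $D(R_{a_n}(\theta,b),b)\to D(R_a(\theta,b),b)$ for every $(\theta,b)$.

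\emph{Step 2 (domination and limit).} For every $n$,
\[
0\le D(R_{a_n}(\theta,b),b)\le \sup_{r\in M}D(r,b).
\]
Here I use $D(r,b)-D(b,b)\ge 0$ and $\phi\ge0$ under Assumption~\ref{ass:PB-main}. The majorant is integrable against $q_\sigma(b\mid\theta)f(\theta)$ by Assumption~\ref{ass:costint}. Dominated convergence gives $\mathcal C_\sigma(a_n)\to\mathcal C_\sigma(a)$. Measurability of $(\theta,b)\mapsto D(R_a(\theta,b),b)$ follows from joint continuity of $R_a$: Berge's theorem applies with the fixed compact feasible set $M$ and a unique maximizer, as in Proposition~\ref{prop:boundedmessages}. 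Finiteness of $\mathcal C_\sigma$ on $\mathcal E_\sigma$ also follows from the same bound.

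\emph{Step 3 (existence and compactness of minimizers).} Nonemptiness of $\mathcal E_\sigma$ is Theorem~\ref{thm:globalexist-main}: the fixed point constructed there lies in $\mathcal E_\sigma$, because its image and slope bounds were established in that proof. A continuous real function on a nonempty compact metric space attains its minimum, so $\mathcal E_\sigma^\star\neq\emptyset$. Finally, $\mathcal E_\sigma^\star=\mathcal C_\sigma^{-1}(\{\min\mathcal C_\sigma\})$ is closed in $\mathcal E_\sigma$, hence compact.

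The only delicate point is Step 1, specifically ensuring the uniqueness of the limiting maximizer. That uniqueness is supplied by the strict concavity in Lemma~\ref{lem:verification-compact}, which holds uniformly on $\mathcal E_\sigma$ because every element obeys the bounds $a'\le p^\star$ and $|a''|\le K(1+(p^\star)^2)$ from Assumption~\ref{ass:globalODE}. I therefore expect no genuine obstacle beyond checking that the domination bound holds.
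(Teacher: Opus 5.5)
Your proposal is correct and follows essentially the same route as the paper: continuity of $R_a(\theta,b)$ in $a$ via Berge/argmax continuity using the unique sender optimum, dominated convergence under Assumption~\ref{ass:costint}, and then compactness of $\mathcal E_\sigma$ (Lemma~\ref{lem:compactness-main}) plus Weierstrass. You also supply steps the paper's terse proof leaves implicit: nonnegativity of $D$ for the dominating bound, measurability, nonemptiness of $\mathcal E_\sigma$ via Theorem~\ref{thm:globalexist-main}, and closedness of the argmin set.
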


\begin{proof}[\textbf{Proof of Proposition~\ref{prop:leastcost-main}}]
The strict-concavity requirement in Assumption~\ref{ass:globalODE} makes the sender's report rule single valued. Berge's maximum theorem yields continuity of $R_a(\theta,b)$ in $a$. Dominated convergence, using Assumption~\ref{ass:costint}, then provides continuity of $\mathcal C_\sigma$. Compactness of $\mathcal E_\sigma$ and Weierstrass's theorem complete the proof.
\end{proof}

\subsection{General private shifters}\label{app:supp-shifters}

The no-holes result in Section~\ref{sec:noholes} was stated for an additive private anchor $b$ with translation-invariant cost. The conclusion holds for any private variable whose marginal effect on the sender's report problem satisfies single-crossing, regardless of whether that variable enters through the cost, through the strategic payoff, or both. Let $z$ denote a private shifter with support $Z(\theta)\subseteq\R$. Write the sender's strategic payoff as $\widetilde U^S(a,\theta,z)$ and the reporting cost as $\widetilde D(r,\theta,z)$; a regular equilibrium is defined as in Definition~\ref{def:regular-main} with $z$ replacing $b$.

\begin{assumption}\label{ass:shifter-supp}
For every $\theta$, the support $Z(\theta)$ is an interval. The map
\[
V(r,\theta,z)=\widetilde U^S(a(r),\theta,z)-\widetilde D(r,\theta,z)
\]
is jointly continuous and $C^2$ in $(r,z)$ on $\Int(M)\times\Int(Z(\theta))$, and satisfies the increasing-differences condition
\begin{equation}\label{eq:gen-singlecross}
\widetilde U^S_{13}(a(r),\theta,z)\,a'(r)-\widetilde D_{13}(r,\theta,z)>0
\qquad\text{for all }(r,\theta,z).
\end{equation}
For each $(\theta,z_0)$ and each regular equilibrium under consideration, optimal reports for shifter values in a neighborhood of $z_0$ lie in a compact subset of $M$.
\end{assumption}

Condition \eqref{eq:gen-singlecross} is the substantive requirement, and it is general enough to admit two distinct economic structures. In the additive-anchor environment of Assumption~\ref{ass:cost-main}, $\widetilde U^S$ does not depend on the private variable $z=b$, so $\widetilde U^S_{13}=0$, while $\widetilde D=c\phi(r-b)$ gives $\widetilde D_{13}=-c\phi''(r-b)<0$; condition \eqref{eq:gen-singlecross} reduces to $c\phi''(r-b)>0$, automatic under cost convexity. In the model of \citet{FischerVerrecchia2000}, the private variable $z$ is a manager's taste for the receiver's action, the strategic payoff is $\widetilde U^S(a,\theta,z)=z\cdot a$, and the cost $\widetilde D(r,\theta)=(r-\theta)^2$ is independent of $z$; here $\widetilde U^S_{13}=1$ and $\widetilde D_{13}=0$, so condition \eqref{eq:gen-singlecross} reduces to $a'(r)>0$, which holds in any informative regular equilibrium. Other applications include private information about the marginal cost of inflating a public claim, as in \citet{FrankelKartik2019}, and any private taste or technology shifter that enters either the strategic motive or the cost with the same single-crossing structure.

\begin{theorem}\label{thm:general-shifter}
Under Assumption~\ref{ass:shifter-supp}, in any regular equilibrium and for each fixed $\theta$, the map $z\mapsto R(\theta,z)$ is nondecreasing and continuous on $Z(\theta)$. Hence $R(\theta,Z(\theta))$ is an interval.
\end{theorem}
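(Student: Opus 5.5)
The plan mirrors the proof of Proposition~\ref{prop:bounded-main}, with the additive anchor $b$ replaced by the general shifter $z$. Single-crossing will now come from \eqref{eq:gen-singlecross} rather than from $D_{12}<0$. The argument has two parts, monotonicity and then continuity; the interval conclusion then follows as in Theorem~\ref{thm:noholes-main}.

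\emph{Monotonicity.} Fix $\theta$ and take $z<z'$ in $Z(\theta)$. For reports $r<r'$, I would write
\[
V(r',\theta,z')-V(r,\theta,z')-V(r',\theta,z)+V(r,\theta,z)=\int_r^{r'}\int_z^{z'}V_{rz}(s,\theta,t)\,dt\,ds .
\]
The integrand equals $\widetilde U^S_{13}(a(s),\theta,t)a'(s)-\widetilde D_{13}(s,\theta,t)$, which is positive by \eqref{eq:gen-singlecross}. So $V$ has strictly increasing differences in $(r,z)$. Two details need care. At the finitely many kinks of $a$, I would split the $r$-integral at those points; $a'$ is defined on each piece and $V$ is continuous, so the sum is unaffected. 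Endpoints of $M$ or of $Z(\theta)$ are handled by continuity. Topkis's theorem, together with uniqueness of the maximizer from Definition~\ref{def:regular-main}(ii), then shows that $z\mapsto R(\theta,z)$ is nondecreasing. The usual revealed-preference inequality gives it directly: if $R(\theta,z')<R(\theta,z)$, adding the two optimality inequalities contradicts strictly increasing differences.

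\emph{Continuity.} Fix $z_0\in Z(\theta)$. The last clause of Assumption~\ref{ass:shifter-supp} supplies a neighbourhood $N$ of $z_0$ and a compact set $C\subseteq M$ containing all optimal reports for $z\in N$. On $C\times N$, the objective $V(\cdot,\theta,\cdot)$ is jointly continuous and the maximizer is unique. I would restrict the sender's problem to $C$; this leaves the argmax unchanged, because $C$ contains the global optimum. Berge's maximum theorem then makes $z\mapsto R(\theta,z)$ upper hemicontinuous, and since it is single-valued it is continuous at $z_0$. The compact-localisation clause is exactly the step that replaces the distortion bound from Assumption~\ref{ass:cost-main}(iv) used earlier, so this is where the assumption does its work.

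Finally, $Z(\theta)$ is an interval, hence connected, and the continuous image of a connected set in $\R$ is an interval. This gives $R(\theta,Z(\theta))$ is an interval.

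The main obstacle is the first step. Increasing differences must hold globally in $r$, not just locally, while $a$ has kinks and the $C^2$ hypothesis holds only on interiors. I would handle this with the piecewise integral representation above, which needs only continuity of $V$ across the kinks. Strict monotonicity is not claimed in the statement, so the weak Topkis conclusion suffices.
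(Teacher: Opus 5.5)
Your proposal is correct and follows the paper's proof essentially step for step: strict increasing differences from \eqref{eq:gen-singlecross}, combined with uniqueness of the maximizer and Topkis, give monotonicity; the compact-localisation clause plus Berge's theorem on $C$ gives continuity; and connectedness of $Z(\theta)$ gives the interval conclusion. Your piecewise integration across the kinks of $a$, and your remark that restricting to $C$ leaves the unique argmax unchanged, make explicit details that the paper leaves implicit.
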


\begin{proof}
By \eqref{eq:gen-singlecross}, $V(\cdot,\theta,\cdot)$ has strictly increasing differences in $(r,z)$. Uniqueness of the sender's maximizer in a regular equilibrium and Topkis's theorem imply that $z\mapsto R(\theta,z)$ is nondecreasing. The compactness clause provides a neighborhood $U$ of $z_0$ and a compact $C\subseteq M$ containing all optimal reports for $z\in U$; Berge's maximum theorem applied on $C$ yields continuity at $z_0$. The image of the connected set $Z(\theta)$ under a continuous map is connected, hence an interval.
\end{proof}

\subsection{Deterministic anchor}\label{app:supp-det}

The limit case of a deterministic anchor can be pinned down from a simplified version of the ODE\ in the baseline model. Suppose in this subsection that $b=b_0(\theta)$ almost surely and $D(r,b)=c\phi(r-b)$ with $\phi'$ onto and $\phi''>0$.

\begin{proposition}\label{prop:deterministic-main}
  In any separating equilibrium, on any interval of states where the report rule $r(\theta)$ is $C^1$ with $r'(\theta)\neq0$,
\begin{equation}\label{eq:ode-main}
c\phi'(r(\theta)-b_0(\theta))=U^S_1(a^R(\theta),\theta)\frac{(a^R)'(\theta)}{r'(\theta)}.
\end{equation}
\end{proposition}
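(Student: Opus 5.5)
The plan is the standard local incentive-compatibility argument for separating equilibria with a deterministic cost reference. Take an interval $J$ of states on which $r(\cdot)$ is $C^1$ with $r'(\theta)\neq0$. Separation and Bayes' rule mean that after an on-path report $r(t)$ the receiver is certain the state is $t$. By strict concavity of $U^R$ in $a$ (Assumption~\ref{ass:receiver-main}), the receiver therefore plays $a^R(t)$ after $r(t)$. So the equilibrium action rule on the on-path image $r(J)$ is $a(r(t))=a^R(t)$.

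Next, restrict the sender's deviations to mimicking other types in $J$. Type $\theta$ then maximizes over $t\in J$
\[
\Phi(t;\theta)=U^S(a^R(t),\theta)-c\phi\bigl(r(t)-b_0(\theta)\bigr).
\]
Equilibrium requires $t=\theta$ to be a maximizer. For $\theta$ in the interior of $J$ this is an interior maximum. The map $\Phi$ is differentiable in $t$, since $a^R$ is $C^1$ by the implicit function theorem applied to $U^R_1(a^R(\theta),\theta)=0$ with $U^R_{11}<0$, $r$ is $C^1$ on $J$, and $\phi$ is $C^1$. The first-order condition $\partial_t\Phi(\theta;\theta)=0$ then reads
\[
U^S_1(a^R(\theta),\theta)(a^R)'(\theta)=c\phi'\bigl(r(\theta)-b_0(\theta)\bigr)r'(\theta).
\]
Dividing by $r'(\theta)\neq0$ gives \eqref{eq:ode-main} at interior points. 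At an endpoint of $J$ that is not an endpoint of $\Theta$, I would extend the identity by continuity of both sides.

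The only delicate step is justifying the reduction to the one-dimensional mimicking problem. One must check that deviating to $r(t)$ for $t$ near $\theta$ is feasible and is evaluated at action $a^R(t)$, which the separation argument above settles. The hypothesis $r'\neq0$ is what allows dividing through and makes the induced map $r\mapsto a^R(r^{-1}(r))$ differentiable. Off-path beliefs play no role, because only on-path local deviations are used, so I expect no real obstacle beyond this bookkeeping.
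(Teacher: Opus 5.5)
Your proposal is correct and is essentially the paper's argument. The paper takes the first-order condition of $\max_r\{U^S(a^R(\hat\theta(r)),\theta)-c\phi(r-b_0(\theta))\}$ in report space, using $\hat\theta'(r(\theta))=1/r'(\theta)$, whereas you take it in type space over mimicked types $t$ and then divide by $r'(\theta)$; after the chain rule these are the same computation, and your version has the minor advantage of making explicit that only on-path local deviations are used (your endpoint caveat is also unnecessary, since continuity extends the identity to any endpoint of $J$ where $r$ is $C^1$ with $r'\neq0$).
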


\begin{proof}[\textbf{Proof of Proposition~\ref{prop:deterministic-main}}]
If the anchor is deterministic and the equilibrium is separating, the receiver infers $\theta$ from the report and chooses $a^R(\theta)$. The sender solves
\[
\max_r\{U^S(a^R(\hat\theta(r)),\theta)-c\phi(r-b_0(\theta))\},
\]
where $\hat\theta(r)$ is the inverse report rule. The first-order condition at $r=r(\theta)$ is
\[
U^S_1(a^R(\theta),\theta)(a^R)'(\theta)\hat\theta'(r(\theta))
=
c\phi'(r(\theta)-b_0(\theta)).
\]
Since $\hat\theta'(r(\theta))=1/r'(\theta)$, the claim follows.
\end{proof}

\begin{corollary}\label{cor:det-least}
Let $T\in C^1(\Theta)$ be bounded below by a positive constant, with $b_0,\phi'\in C^1$, $\phi''>0$, and $\phi'$ onto. The boundary problem
\[
c\phi'(\rho(\theta)-b_0(\theta))\rho'(\theta)=T(\theta),
\qquad
\rho(\underline\theta)=b_0(\underline\theta)
\]
has a unique strictly increasing solution on $\Theta$, continuous at $\underline\theta$ and $C^1$ on $(\underline\theta,\overline\theta]$.
\end{corollary}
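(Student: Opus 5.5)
The natural route is to reduce the boundary problem to a scalar first-order ODE for the distortion $d(\theta)=\rho(\theta)-b_0(\theta)$ and to study it through the energy variable $Y(\theta)=c\phi(d(\theta))$. This is the same device used in the proof of Proposition~\ref{prop:small-cost-revelation}. Since $\rho'=d'+b_0'$, the equation reads
\[
c\phi'(d)\,d'=T(\theta)-c\phi'(d)\,b_0'(\theta),
\]
which is equivalent to $Y'(\theta)=T(\theta)-c\phi'(d(\theta))b_0'(\theta)$ with $Y(\underline\theta)=0$. The difficulty is that the original equation is singular at the initial point. Because $\phi'(0)=0$, the equation $c\phi'(\rho-b_0)\rho'=T$ cannot be put in normal form $\rho'=F(\theta,\rho)$ at $\rho=b_0(\underline\theta)$: there the right-hand side $T/(c\phi'(\rho-b_0))$ blows up. I would therefore first restrict attention to strictly increasing solutions with $\rho>b_0$ on $(\underline\theta,\overline\theta]$. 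The positive branch is forced, because $\rho'>0$ and $T>0$ require $\phi'(d)>0$, hence $d>0$ away from the endpoint. On this branch I would write $d=\Phi^{-1}(Y/c)$, where $\Phi=\phi|_{[0,\infty)}$ is a strictly increasing bijection onto $[0,\infty)$ (this uses that $\phi$ is convex with minimum at $0$ and $\phi'$ onto).

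For existence, I would pose the equation in the variable $Y$:
\[
Y'=T(\theta)-c\,\phi'\!\bigl(\Phi^{-1}(Y/c)\bigr)\,b_0'(\theta)=:H(\theta,Y),\qquad Y(\underline\theta)=0.
\]
The function $H$ is continuous on $\Theta\times[0,\infty)$, since $\phi'\circ\Phi^{-1}$ is continuous and vanishes at $0$. Hence Peano's theorem gives a local solution. Near $\underline\theta$ we have $H\approx T(\underline\theta)>0$, so $Y>0$ just to the right of $\underline\theta$. Away from $Y=0$, the map $Y\mapsto\phi'(\Phi^{-1}(Y/c))$ is $C^1$, with derivative $\phi''/(c\,\phi')$ evaluated at $d=\Phi^{-1}(Y/c)$, which is finite when $d>0$. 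So $H$ is locally Lipschitz there and the solution continues. It is also bounded on compact $\Theta$, because $0\le Y\le\Gamma$ by the comparison $(\Gamma-Y)'=c\phi'(d)b_0'\ge0$ used in the proof of Proposition~\ref{prop:small-cost-revelation}. I would also check that $Y$ never returns to $0$. Should $Y$ reach a small value $\epsilon$ at an interior point, then $H\ge T-c\phi'(\Phi^{-1}(\epsilon/c))\,\sup b_0'>0$ for $\epsilon$ small, since $T$ is bounded below. So $Y$ stays positive on $(\underline\theta,\overline\theta]$. Setting $\rho=b_0+\Phi^{-1}(Y/c)$ recovers a solution of the original equation that is $C^1$ on $(\underline\theta,\overline\theta]$ and continuous at $\underline\theta$. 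It satisfies $\rho'=T/(c\phi'(d))>0$.

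The main obstacle is uniqueness, precisely because $H$ may fail to be Lipschitz at $Y=0$. For example, with quadratic $\phi$ one gets $\phi'(\Phi^{-1}(Y/c))\propto\sqrt{Y}$. I would handle this in one of two ways. The first is a one-sided Lipschitz (monotonicity) argument: $Y\mapsto-c\,\phi'(\Phi^{-1}(Y/c))\,b_0'$ is nonincreasing in $Y$ whenever $b_0'\ge0$, as in the main text. If $Y_1,Y_2$ are two solutions, then $\frac{d}{d\theta}(Y_1-Y_2)^2=2(Y_1-Y_2)(H(\theta,Y_1)-H(\theta,Y_2))\le0$. Combined with $Y_1(\underline\theta)=Y_2(\underline\theta)=0$, this gives $Y_1\equiv Y_2$ with no Lipschitz requirement. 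If $b_0'$ can change sign, I would fall back on the second approach, which uses that any solution satisfies $Y(\theta)\ge\tfrac12\underline T(\theta-\underline\theta)$ near $\underline\theta$. On that region the Lipschitz constant of $H$ behaves like $\phi''/\phi'$, of order $1/d$. Since $d\gtrsim\sqrt{Y}\gtrsim\sqrt{\theta-\underline\theta}$, this constant is integrable, and a Gronwall argument yields uniqueness. Finally, uniqueness of $Y$ on the positive branch transfers to uniqueness of $\rho$, because every strictly increasing solution must lie on that branch, as argued above.
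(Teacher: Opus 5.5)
Your argument is correct, but it takes a different route from the paper. The paper removes the singularity by inverting. It sets $\vartheta=\rho^{-1}$, which solves $\vartheta'(r)=c\phi'(r-b_0(\vartheta(r)))/T(\vartheta(r))$ with $\vartheta(b_0(\underline\theta))=\underline\theta$. This right-hand side is locally Lipschitz in $\vartheta$, which is where $T\in C^1$ is used, so Picard--Lindel\"of gives existence and uniqueness in one stroke. Positivity of $g(r)=r-b_0(\vartheta(r))$ follows because $\phi'(0)=0$ forces $\vartheta'=0$, hence $g'=1$, at every zero of $g$. Finally, $\vartheta$ reaches $\overline\theta$ at finite $r$, since otherwise $g\to\infty$ would force $\vartheta'\to\infty$. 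None of this uses the sign of $b_0'$, and no non-Lipschitz point ever appears.

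Your energy variable $Y=c\phi(\rho-b_0)$ desingularizes differently. The blow-up of $\rho'$ at $\underline\theta$ becomes a merely H\"older point of $H$ at $Y=0$, of order $\sqrt Y$ for quadratic cost. You then have to handle it by hand. When $b_0'\ge0$ you use monotonicity of $H$ in $Y$. Otherwise you use transversality, $H(\underline\theta,0)=T(\underline\theta)>0$, together with Gronwall and the integrable modulus $(\theta-\underline\theta)^{-1/2}$. Both arguments are sound, as is the transfer back to $\rho$ through the forced positive branch.

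What your route buys is twofold. First, $T$ need only be continuous, since it enters $H$ additively rather than through the unknown. Second, it delivers directly the identity $(\Gamma-Y)'=c\phi'(d)b_0'$ and the bound $0\le Y\le\Gamma$, which is what the proofs of Proposition~\ref{prop:small-cost-revelation} and Theorem~\ref{thm:cost-floor} rely on.

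One small gap: you use $Y\le\Gamma$ to rule out blow-up before $\overline\theta$, and that comparison needs $b_0'\ge0$. Your fallback for sign-changing $b_0'$ covers only uniqueness, while the corollary as stated allows any $b_0\in C^1$. The fix is easy. On $\{d\ge1\}$ you have $d'=T/(c\phi'(d))-b_0'\le\sup T/(c\phi'(1))+\sup|b_0'|$. Hence $d$, and therefore $Y$, stays bounded on the compact $\Theta$, which also keeps your Lipschitz constant bounded away from $\underline\theta$.
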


\begin{proof}
The ODE is singular at $\theta=\underline\theta$ because $\phi'(0)=0$. Working with the inverse $\vartheta=\rho^{-1}$ removes the singularity: $\vartheta$ satisfies the non-singular
\[
\vartheta'(r)=\frac{c\phi'(r-b_0(\vartheta(r)))}{T(\vartheta(r))},
\qquad
\vartheta(r_0)=\underline\theta,\quad r_0=b_0(\underline\theta).
\]
The right-hand side is locally Lipschitz in $\vartheta$ on $\Theta$ (using $T\in C^1$, $T>0$ on $\Theta$, and $b_0,\phi'\in C^1$), so Picard-Lindel\"of gives local existence and uniqueness. Setting $g(r)=r-b_0(\vartheta(r))$, $g(r_0)=0$ and $g'(r_0)=1$ since $\phi'(0)=0$; the same identity at any later return to zero would contradict $g'\le0$, so $g>0$ and $\vartheta'>0$ for $r>r_0$. The solution reaches $\vartheta=\overline\theta$ at finite $r_H$: if $\vartheta$ stayed below $\overline\theta-\varepsilon$ along $r_n\to\infty$, then $g(r_n)\to\infty$ would force $\vartheta'(r_n)\to\infty$, a contradiction. Inverting produces the unique strictly increasing $\rho$ on $\Theta$, $C^1$ on $(\underline\theta,\overline\theta]$.
\end{proof}

\subsection{Post-anchor labels}\label{app:supp-postanchor}

This subsection explains why the hybrid construction in Section~\ref{sec:hybrid} uses a pre-anchor, or anchor-independent, qualitative channel. Suppose instead that the sender observes $(\theta,b)$ before choosing both the cheap-talk label and the numerical report. The label set remains finite and discrete, i.e., no continuity restriction across cheap-talk labels is imposed.

For a label $j$, let $(R_j,a_j)$ denote a regular anchored continuation and define the sender's realized value from choosing label $j$ by
\[
V_j(\theta,b)=\sup_{r\in\R}\{U^S(a_j(r),\theta)-D(r,b)\}.
\]
Regularity yields a unique maximizer $R_j(\theta,b)$ and strict monotonicity of $a_j$. At regular optima,
\[
V_{j,b}(\theta,b)=-D_2(R_j(\theta,b),b),
\qquad
V_{j,\theta}(\theta,b)=U^S_2(a_j(R_j(\theta,b)),\theta).
\]

\begin{proposition}[No one-dimensional post-anchor partition]
\label{prop:no-postanchor-onedim}
Suppose Assumptions~\ref{ass:receiver-main}--\ref{ass:cost-main} hold, $M=\R$, $B(\theta)=\R$ for every $\theta$, and the anchor density is strictly positive on $\R$. In the post-anchor protocol in which the sender observes $(\theta,b)$ before choosing both a cheap-talk label and a numerical report, the following two one-dimensional partitions cannot be sustained by pointwise label incentive compatibility.

\begin{enumerate}[label=(\roman*)]
\item There is no nontrivial fixed state-cell partition with adjacent ordered state cells and regular within-cell anchored continuations.

\item There is no nontrivial ordered anchor-cell partition with adjacent anchor cells and adjacent label action rules satisfying $a_k(r)<a_{k+1}(r)$ for every $r\in\R$.
\end{enumerate}
\end{proposition}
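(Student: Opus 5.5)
Both parts will be proved by contradiction, using the same device: at a boundary between two labels, pointwise label incentive compatibility must hold for every anchor realization. Since anchors have full support, the boundary indifference has to survive along an entire continuum of anchors, and single-crossing in $b$ makes that impossible. The envelope formulas $V_{j,b}=-D_2(R_j,b)$ and $V_{j,\theta}=U^S_2(a_j(R_j),\theta)$ are the main tool.

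\emph{Part (i).} Suppose there is a fixed state-cell partition with adjacent cells $I_k=[\theta_{k-1},\theta_k)$ and $I_{k+1}$, and let $\bar\theta=\theta_k$ be their common boundary. Pointwise incentive compatibility for states just below $\bar\theta$ (label $k$) and just above it (label $k+1$), together with continuity of $V_j$ in $\theta$ (Berge's theorem, using the uniform distortion bound from the proof of Proposition~\ref{prop:bounded-main}), gives $V_k(\bar\theta,b)=V_{k+1}(\bar\theta,b)$ for every $b\in\R$. Differentiating in $b$ yields $D_2(R_k(\bar\theta,b),b)=D_2(R_{k+1}(\bar\theta,b),b)$ for all $b$.

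Under Assumption~\ref{ass:cost-main}(i)--(ii), the map $r\mapsto D_2(r,b)$ is strictly decreasing, since its derivative is $D_{12}<0$. Hence $R_k(\bar\theta,b)=R_{k+1}(\bar\theta,b)=:r(b)$ for all $b$. The indifference condition then reduces to $U^S(a_k(r(b)),\bar\theta)=U^S(a_{k+1}(r(b)),\bar\theta)$. By Corollary~\ref{cor:fulluse-unbounded}, applied within each cell, $r(\cdot)$ is onto $\R$. So the indifference must hold at every report $r$.

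Proposition~\ref{prop:hybrid-local-language} gives $a_k(r)<a_{k+1}(r)$, and strict concavity of $U^S(\cdot,\bar\theta)$ then forces $a_k(r)<a^S(\bar\theta)<a_{k+1}(r)$ for every $r$. I also need the sender's first-order condition at the common report to hold for both labels: $U^S_1(a_k(r),\bar\theta)a_k'(r)=U^S_1(a_{k+1}(r),\bar\theta)a_{k+1}'(r)=D_1(r,b)$. The left-hand side is positive and the right-hand side is negative unless $a_k'=a_{k+1}'=0$. Since both rules are strictly increasing, I expect the two marginal terms to have opposite signs at generic $r$, which is the contradiction. If the derivatives vanish at isolated points, I will instead use full use as $r\to\pm\infty$: by the reversion argument (Corollary~\ref{cor:reversion}), both marginal terms tend to zero, and I will compare the two action rules directly there.

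\emph{Part (ii).} Suppose labels are assigned by ordered anchor cells, with adjacent cells meeting at a cutoff $\bar b$, and the action rules satisfy $a_k<a_{k+1}$ pointwise. Indifference at the cutoff then gives $V_k(\theta,\bar b)=V_{k+1}(\theta,\bar b)$ for every $\theta\in\Theta$, since the anchor cell does not depend on the state. Differentiating in $\theta$ gives
\[
U^S_2\bigl(a_k(R_k(\theta,\bar b)),\theta\bigr)=U^S_2\bigl(a_{k+1}(R_{k+1}(\theta,\bar b)),\theta\bigr).
\]
Because $U^S_{12}>0$, this forces the two induced actions to coincide, $a_k(R_k)=a_{k+1}(R_{k+1})=:\hat a(\theta)$. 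Since $a_k<a_{k+1}$ pointwise, this requires $R_k(\theta,\bar b)>R_{k+1}(\theta,\bar b)$. But then the costs must be equal as well, $D(R_k,\bar b)=D(R_{k+1},\bar b)$, so the two reports lie on opposite sides of $\bar b$, which gives $R_{k+1}<\bar b<R_k$.

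The sender's first-order conditions read $D_1(R_j,\bar b)=U^S_1(\hat a,\theta)a_j'(R_j)$. The right-hand sides share the sign of $U^S_1(\hat a,\theta)$, because $a_j'>0$. The left-hand sides have opposite signs, because the reports straddle $\bar b$. This is a contradiction unless $U^S_1(\hat a,\theta)=0$ and both reports equal $\bar b$, which is incompatible with the strict ordering $R_{k+1}<\bar b<R_k$.

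\emph{Main obstacle.} The delicate step is the final sign contradiction in part (i). There the common report may make the two first-order conditions compatible (for example, at $r=\bar b$ with $U^S_1=0$), and the argument has to rule this out on a whole interval rather than at a point. I would first try the argument on the set of $r$ where $a_k(r)\ne a^S(\bar\theta)$, which is of full measure, and then extend to all $r$ by continuity.
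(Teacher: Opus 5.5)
Your argument is correct and follows the paper's skeleton. Boundary indifference must hold along a whole continuum (of anchors in (i), of states in (ii)); envelope derivatives give $D_{12}<0$ equating the two reports in (i) and $U^S_{12}>0$ equating the two induced actions in (ii); and full use extends the indifference to every report. Where you depart from the paper is in how each part is closed.

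\textbf{Part (i).} The paper argues globally. On a strictly concave indifference curve, the higher of two equal-payoff actions is a strictly decreasing function of the lower one, so $a_{k+1}$ would have to decrease in $r$. This step needs no differentiability of the action rules. You use the infinitesimal version,
$U^S_1(a_k(r),\bar\theta)a_k'(r)=U^S_1(a_{k+1}(r),\bar\theta)a_{k+1}'(r)$. You can obtain it directly by differentiating $U^S(a_k(r),\bar\theta)=U^S(a_{k+1}(r),\bar\theta)$, without the common anchor.

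Your ``main obstacle'' is not one. You have already shown $a_k(r)<a^S(\bar\theta)<a_{k+1}(r)$ strictly, so the factor $U^S_1$ never vanishes. The identity therefore forces $a_k'(r)=0$ at every non-kink $r$. A single non-kink point with $a_k'(r)>0$, which exists because $a_k$ is strictly increasing and piecewise $C^1$, already gives the contradiction. Drop the fallback via Corollary~\ref{cor:reversion}. It needs uniform continuity of $a'$ on the tails, which is not assumed, and in the tails both marginal terms vanish, so no contradiction could come from there anyway.

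\textbf{Part (ii).} The paper uses the crossing direction $\Delta_b(\theta,\beta)\ge 0$ and shows $\Delta_b=D_2(R_k,\beta)-D_2(R_{k+1},\beta)<0$. You instead use the level condition. Equal actions and equal values give equal costs, hence $R_{k+1}<\bar b<R_k$. The two first-order conditions then require $U^S_1(\hat a,\theta)$ to be both positive and negative.

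Your route does not use which label sits on which side of the cutoff, so it also rules out the reversed assignment. The cost is that it relies on the first-order condition in $r$. If some $R_j$ falls at a kink of $a_j$, replace the equation by the one-sided optimality inequalities, which deliver the same signs. The paper's version is shorter and uses only envelope derivatives in $b$.
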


\begin{proof}
For part (i), consider an adjacent state cutoff $t$ between labels $k$ and $k+1$. Let $\Delta(\theta,b)=V_{k+1}(\theta,b)-V_k(\theta,b)$. A fixed state-cell rule requires $\Delta(\theta,b)\le0$ for $\theta<t$ and $\Delta(\theta,b)\ge0$ for $\theta>t$, for every anchor $b$. Hence $\Delta(t,b)=0$ for every $b$. Differentiating this identity in $b$, $D_2(R_k(t,b),b)=D_2(R_{k+1}(t,b),b)$.
Since $D_{12}<0$, the map $r\mapsto D_2(r,b)$ is strictly decreasing. Thus $R_k(t,b)=R_{k+1}(t,b)$ for every $b$. Value equality at the cutoff then yields
\[
U^S(a_k(R_k(t,b)),t)=U^S(a_{k+1}(R_k(t,b)),t)
\qquad\text{for every }b.
\]
By Corollary~\ref{cor:fulluse-unbounded}, applied inside label $k$, $R_k(t,\R)=\R$. Therefore
\[
U^S(a_k(r),t)=U^S(a_{k+1}(r),t)
\qquad\text{for every }r\in\R.
\]
The same local-language argument as in Proposition~\ref{prop:hybrid-local-language} implies $a_k(r)<a_{k+1}(r)$ for every $r$. Since $U^S(\cdot,t)$ is strictly concave, ordered equal-payoff action pairs, when they exist, must move in opposite directions: the higher action on a given indifference curve is a strictly decreasing function of the lower action. This is incompatible with both $a_k$ and $a_{k+1}$ being strictly increasing in $r$. Thus a fixed state-cell partition cannot satisfy pointwise post-anchor IC.

For part (ii), consider an adjacent anchor cutoff $\beta$ between labels $k$ and $k+1$, where label $k$ is chosen below $\beta$ and label $k+1$ above $\beta$. Pointwise label IC requires $\Delta(\theta,b)\le0$ for $b<\beta$ and $\Delta(\theta,b)\ge0$ for $b>\beta$, for every $\theta$. Thus $\Delta(\theta,\beta)=0$ for every $\theta$, and the crossing condition implies $\Delta_b(\theta,\beta)\ge0$.

Differentiating $\Delta(\theta,\beta)=0$ in $\theta$ produces
\[
U^S_2(a_{k+1}(R_{k+1}(\theta,\beta)),\theta)
=
U^S_2(a_k(R_k(\theta,\beta)),\theta).
\]
Because $U^S_{12}>0$, $U^S_2(\cdot,\theta)$ is strictly increasing. Hence the two labels induce the same action at the anchor boundary: $a_{k+1}(R_{k+1}(\theta,\beta))=a_k(R_k(\theta,\beta))$.
The ordered-label condition $a_k(r)<a_{k+1}(r)$ for every $r$, together with strict monotonicity of the action rules, implies $R_{k+1}(\theta,\beta)<R_k(\theta,\beta)$.

Finally,
\[
\Delta_b(\theta,\beta)
=
-D_2(R_{k+1}(\theta,\beta),\beta)
+
D_2(R_k(\theta,\beta),\beta).
\]
Since $r\mapsto D_2(r,\beta)$ is strictly decreasing and $R_{k+1}(\theta,\beta)<R_k(\theta,\beta)$, the above equation is strictly negative. This contradicts $\Delta_b(\theta,\beta)\ge0$. Hence an ordered anchor-cell partition cannot satisfy pointwise post-anchor IC.
\end{proof}

Proposition~\ref{prop:no-postanchor-onedim} rules out the two natural one-dimensional analogues of the hybrid partition when the label is freely chosen after $(\theta,b)$ is observed. It does not rule out unordered menus or equilibria whose label cells are genuinely two-dimensional. In particular, post-anchor equilibria, if they exist, have boundaries of the form $\theta=t_k(b)$ rather than fixed state cutoffs or fixed anchor cutoffs.

\end{document}